\RequirePackage[l2tabu,orthodox]{nag}
\documentclass
[11pt,letterpaper]
{article} 

\usepackage[notes=true,later=false,camera=false]{dtrt}
\usepackage[utf8]{inputenc}
\usepackage{ stmaryrd }
\usepackage{xspace,enumerate}
\usepackage[T1]{fontenc}
\usepackage[full]{textcomp}
\usepackage[american]{babel}
\usepackage{mathtools}

\renewcommand{\hat}[1]{\widehat{#1}}
\usepackage{amsthm}
\usepackage{thmtools}
\usepackage[capitalise,nameinlink]{cleveref}

\usepackage{empheq}

\definecolor{citeblue}{HTML}{0055cc}
\hypersetup{
colorlinks=true,
urlcolor=citeblue,
linkcolor=NavyBlue,
citecolor=PineGreen,
linktocpage=true,
}
\renewcommand*\backref[1]{\ifx#1\relax \else (pg. #1) \fi}

\renewcommand{\tilde}{\widetilde}

\usepackage{paralist}
\usepackage{turnstile}
\usepackage[framemethod=TikZ]{mdframed}
\mdfsetup{frametitlealignment=\center}
\usepackage{tikz}
\usepackage{caption}
\DeclareCaptionType{Algorithm}
\usepackage{newfloat}

\newtheorem{theorem}{Theorem}[section]
\newtheorem{lemma}[theorem]{Lemma}
\newtheorem*{lemma*}{Lemma}
\newtheorem{claim}[theorem]{Claim}
\newtheorem{proposition}[theorem]{Proposition}
\newtheorem{fact}[theorem]{Fact}
\newtheorem{corollary}[theorem]{Corollary}

\theoremstyle{definition}
\newtheorem{definition}[theorem]{Definition}
\newtheorem*{definition*}{Definition}

\usepackage[linesnumbered,ruled]{algorithm2e}
\crefname{lemma}{Lemma}{Lemmas}
\crefname{fact}{Fact}{Facts}
\crefname{theorem}{Theorem}{Theorems}
\crefname{mtheorem}{Theorem}{Theorems}
\crefname{itheorem}{Theorem}{Theorems}
\crefname{corollary}{Corollary}{Corollaries}
\crefname{claim}{Claim}{Claims}
\crefname{example}{Example}{Examples}
\crefname{algorithm}{Algorithm}{Algorithms}
\crefname{problem}{Problem}{Problems}
\crefname{definition}{Definition}{Definitions}
\crefname{equation}{Eq.}{Eq.}
\crefname{strategy}{Strategy}{Strategies}
\crefname{observation}{Observation}{Observations}

\Crefname{algocf}{Algorithm}{Algorithms}

\usepackage[
letterpaper,
top=1.2in,
bottom=1.2in,
left=1in,
right=1in]{geometry}
\usepackage{newpxtext} 
\usepackage{textcomp} 
\usepackage[scr=rsfso]{mathalfa}
\usepackage{bm} 

\usepackage{microtype}

\usepackage{footnotebackref}

\newcommand{\AND}{\operatorname{AND}}

\allowdisplaybreaks

\newcommand{\R}{{\mathbb R}}
\newcommand{\N}{{\mathbb N}}

\newcommand{\eps}{\varepsilon}

\newcommand{\E}{{\mathbb E}}

\newcommand{\ip}[1]{\langle #1 \rangle}

\newcommand{\seq}{\subseteq}

\newcommand{\cK}{\mathcal K}

\newcommand{\sym}{\operatorname{sym}}

\renewcommand{\emptyset}{\varnothing}
\renewcommand{\geq}{\geqslant}
\renewcommand{\ge}{\geqslant}
\renewcommand{\leq}{\leqslant}
\renewcommand{\le}{\leqslant}

\renewcommand{\epsilon}{\varepsilon}

\newcommand{\ignore}[1]{}

\newcommand{\Vol}{\mathrm{Vol}}

\usepackage{titling}
\thanksmarkseries{arabic}

\begin{document}

\title{Optimal Sparsifiers for Minkowski Sums and Sums of Seminorms}

\date{\today}

\author{
Arpon Basu\thanks{Princeton University. \href{mailto:arpon.basu@princeton.edu}{\texttt{arpon.basu@princeton.edu}}} 
\and
Joshua Brakensiek\thanks{University of California, Berkeley. Supported in part by a Simons Investigator award of Venkatesan Guruswami, and NSF awards CCF-2211972 and DMS-2503280 \href{mailto:josh.brakensiek@berkeley.edu}{\texttt{josh.brakensiek@berkeley.edu}}} 
\and
Yeyuan Chen\thanks{Department of EECS, University of Michigan, Ann Arbor. Supported in part by National Science Foundation grant No. CCF-2236931. 
 \href{mailto:yeyuanch@umich.edu}{\texttt{yeyuanch@umich.edu}}}  
\and
Aaron Putterman\thanks{School of Engineering and Applied Sciences, Harvard University, Cambridge, Massachusetts, USA. Supported in part by Simons Investigator Awards to Madhu Sudan and Salil Vadhan, a Jane Street Graduate Research Fellowship, and AFOSR award FA9550-25-1-0112. \href{mailto:aputterman@g.harvard.edu}{\texttt{aputterman@g.harvard.edu}}}
\and
Victor Reis\thanks{Microsoft Research, Redmond. \href{mailto:victorol@microsoft.com}{\texttt{victorol@microsoft.com}}}
\and
Zihan Zhang\thanks{Department of Computer Science and Engineering, The Ohio State University. \href{mailto:zhang.13691@buckeyemail.osu.edu}{\texttt{zhang.13691@osu.edu}}}  
}

\maketitle

\begin{abstract}
We extend the recent work of Reis and Rothvoss on sparsifying sums of $\ell_1$ norms to the more general task of sparsifying (Minkowski) sums of centrally symmetric, convex sets. As our main result, we prove that for any $\eps > 0$ and centrally symmetric, convex sets $C_1, \ldots, C_m\seq\R^n$ there is a choice of weights $\lambda_1, \dots , \lambda_m \in \R_{\geq 0}$ such that at most $O(n / \eps^2)$ of the weights are non-zero, and 
\[(1 - \eps)\cdot C\seq\sum_{i = 1}^m\lambda_i\cdot C_i\seq(1 + \eps)\cdot C,\]
where $C:= C_1 + \cdots + C_m$ refers to the Minkowski sums of the sets $C_1, \ldots, C_m$, and $\lambda\cdot C$ refers to the dilation of the set $C$.

As immediate applications of this result, we obtain sparsifiers of size $O(n / \eps^2)$ for sparsifying sums of seminorms in $n$-dimensional space, improving on the $O\left ( \frac{n \log(n/\eps) \cdot \log^{2.5}(n)}{\eps^2} \right )$ size sparsifiers from the work of Jambulapati, Lee, Liu, and Sidford (FOCS 2023). This further yields optimal size hypergraph cut sparsifiers with $O(n / \eps^2)$ hyperedges, improving on the $O(n \log(n) / \eps^2)$ size sparsifiers from the work of Chen, Khanna, and Nagda (FOCS 2020). More generally, this also gives optimal size sparsifiers for sums of symmetric submodular functions.
\end{abstract}

\pagenumbering{gobble}

\clearpage

\pagebreak

\tableofcontents

\pagebreak

\pagenumbering{arabic}

\section{Introduction}

Sparsification refers to the general procedure by which a large, complex object is replaced by an ideally much smaller object which nevertheless preserves certain properties of interest. Such sparsification was first introduced (independently) in the work of Bencz\'ur and Karger \cite{BenczurK96} and Talagrand \cite{talagrand1990embedding}. \cite{BenczurK96} studied the so-called cut sparsification of \emph{graphs} $G = (V, E, w)$ where the goal is to select a subset of the \emph{edges} of the graph, while still preserving all of the cut-sizes. \cite{talagrand1990embedding} instead studied a continuous generalization of this notion called $\ell_1$-norm sparsification, where one is provided with a sum of $\ell_1$ norms of linear functions, and wishes to choose a subset which preserves the sum on every input. 

Since these seminal works, sparsification has evolved into a rich field. The works of Spielman and Teng \cite{SpielmanT11}, Spielman and Srivastava \cite{SpielmanS11}, and Batson, Spielman, and Srivastava \cite{BatsonSS14} generalized cut sparsification of graphs to a stronger notion called \emph{spectral sparsification}. These notions of graph sparsification have proved to be useful in a plethora of algorithmic applications, ranging from graphs  \cite{BenczurK96, spielman2004nearly, SpielmanS11}, to linear system solving \cite{spielman2004nearly}, and numerous problems in the graph streaming literature \cite{AhnGM12, AhnGM12b, mcgregor2014graph,AbrahamDKKP16, kapralov2017single, kapralov2019faster}.

Other works have sought to understand the theoretical limits of \emph{which} objects can be sparsified. This exploration began with the study of \emph{hypergraph} cut sparsification and CSP sparsification in the work of \cite{KoganK15}, which led to works improving sparsifier sizes \cite{ChenKN20}, a study of spectral hypergraph sparsification \cite{SomaY19, KapralovKTY21, KapralovKTY21a, Lee23, JambulapatiLS23}, sparsification of codes \cite{KhannaPS24, KhannaPS25, BrakensiekG25, BGP26, khanna2026unified, lovett2026moonflowers}, as well as other generalizations like sparsifying sums of norms \cite{jambulapati2023sparsifying}, and sparsifying sums of PSD matrices \cite{HsiehLMPZ26, BasuKLM26, BKMT26}.

Despite this vast attention devoted to sparsification, there is one common thread which plagues \emph{almost} all of these works: random sampling. Indeed, in all instances of sparsification via random sampling, there is an inherent blow-up (in the size of the sparsifier) which scales as $\log(\dim)$. For instance, in the cut-sparsification of graphs of \cite{BenczurK96}, this led to sparsifiers preserving $O_{\eps}(|V| \log(|V|))$ many edges, and in the work of \cite{talagrand1990embedding}, this led to $\ell_1$-norm sparsifiers preserving $O_{\eps}(n \log n)$ many terms. Removing these extra logarithmic factors remained an open question for the ensuing years. Only in the work of Batson, Spielman, and Srivastava \cite{BatsonSS14} was this first resolved for cut (and in fact, the stronger notion of spectral) sparsification, whereby they produced sparsifiers which retained only $O_{\eps}(n)$ many edges. For $\ell_1$ norms, this remained open much longer, until the very recent work of Reis and Rothvoss \cite{RR26}, which produced $\ell_1$-norm sparsifiers with only $O_{\eps}(n)$ many terms. Along the way, the only other distinct instances of sparsification that avoid this inherent logarithmic factor are due to De Carli Silva, Harvey and Sato \cite{silva2015sparse} on sparsifying sums of PSD matrices (and hypergraph cut sparsification for arity $r = 3$), and the (even more) recent work of Basu, Kothari, Meka and Tudose~\cite{BKMT26} which extends \cite{RR26} to sparsifying Abelian Cayley graphs. 

In this work, we re-visit the framework of Reis and Rothvoss \cite{RR26} and ask just how far it can be pushed in constructing optimal size sparsifiers.

\subsection{Our Results}

As our main result, we show that the framework of \cite{RR26} can be extended to imply optimal sparsification for sums of \emph{support functions of centrally symmetric, compact, convex sets} a.k.a., symmetric convex bodies\footnote{Some definitions of convex body require the convex body to have an interior point. We follow the presentation of \cite{Gruber07,Schneider14} where convex bodies need to only be non-empty with a \emph{proper} convex body having non-empty interior.}. More formally, one can consider the space $E = \R^n$. A set $C_i \subseteq E$ is said to be:
\begin{enumerate}
    \item \emph{Centrally symmetric} if whenever $x \in C_i$, then $-x \in C_i$.
    \item \emph{Compact} if $C_i$ is closed and bounded.
    \item \emph{Convex} if for any $x, y \in C_i$ and $\lambda \in [0,1]$, $\lambda \cdot x + (1 - \lambda) \cdot y \in C_i$.
\end{enumerate}
For such a set $C_i$, the support function is defined as $h_{C_i}(u) = \max_{x \in C_i} \langle x, u \rangle$ for $u \in E$. As our main theorem, we show the following:

\begin{theorem}[Sparsifying Sums of Support Functions]\label{thm:main-intro}
There is a universal constant $C_{\star}>0$ with the following property.  Let $E$ be a finite-dimensional real vector space, and let $C_1,\dots,C_m\subseteq E$ be compact centrally symmetric convex sets. Let
\[
  C=C_1+\cdots+C_m,
  \qquad
  d=\dim\mathrm{span}(C),
\]
be the Minkowski sum of the convex bodies.
For every $0<\eps<1$, there exist coefficients $\lambda_1,\dots,\lambda_m\ge 0$, supported on at most $C_{\star} \cdot d/\eps^2$ indices, such that for every $u \in E$,
\[
  (1-\eps) \cdot \sum_{i = 1}^m h_{C_i}(u)
  \leq\sum_{i = 1}^m \lambda_i \cdot h_{C_i}(u)
  \leq (1+\eps) \cdot \sum_{i = 1}^m h_{C_i}(u).
\]
Equivalently, we have the following containment of convex bodies:
\[
    (1 - \eps) \cdot C \subseteq \sum_{i=1}^m \lambda_i \cdot C_i \subseteq (1 + \eps) \cdot C.
\]

\end{theorem}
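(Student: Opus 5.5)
We will follow the Reis--Rothvoss framework \cite{RR26}: a partial-colouring (discrepancy) step that halves the number of active terms while changing the total by a controlled amount, iterated until $O(d/\eps^2)$ terms remain. First we normalize. Replacing $E$ by $\mathrm{span}(C)$ and $u$ by its projection onto it, we may assume $d=n$ and that $C$ is a proper symmetric convex body. We then apply a linear map placing $C$ in a convenient position, for instance John's or the $\ell$-position, so that $h_C(u)=\sum_i h_{C_i}(u)$ is the norm dual to the gauge of $C$. The claim reads $\sum_i\lambda_i h_{C_i}\approx_\eps h_C$ as seminorms, which is equivalent to the stated containment. We may assume each $h_{C_i}\le h_C$ is nonzero.

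\textbf{Importance weights.} We next need an analogue of Lewis weights, or leverage scores, for this family. We would define weights $w_i\ge 0$ with $\sum_i w_i\lesssim n$ such that each rescaled body $C_i/w_i$ is ``no larger than average'' relative to $C$ in the appropriate metric. A natural choice comes from a variational problem, for example maximizing $\log\Vol$ of a suitable ellipsoid subject to $\sum_i$ constraints, in the spirit of Lewis weights for $\ell_1$. The first-order conditions give the normalization $\sum_i w_i = n$. We then split each $C_i$ into about $\lceil m w_i/n\rceil$ equal copies $C_i/k_i$. This keeps $C$ unchanged, raises the number of terms to $O(m)$, and makes every term have roughly uniform ``importance'' $n/m$.

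\textbf{Halving step.} Given $N$ terms of uniform importance, we seek signs $x\in\{-1,0,1\}^N$ with a constant fraction of entries nonzero such that
\[
\sup_{u\,:\,h_C(u)\le 1}\Bigl|\sum_j x_j h_{C_j}(u)\Bigr| \lesssim \sqrt{n/N}.
\]
We then replace the terms with $x_j=1$ by doubled copies and drop those with $x_j=-1$. This is the main obstacle. In \cite{RR26} it is done with a partial-colouring lemma, convex-body discrepancy via Gluskin--Giannopoulos or Rothvoss's Gaussian measure argument, combined with a chaining or Dudley-type bound on the Gaussian width of the ``discrepancy body''
\[
K=\Bigl\{x : \sup_{h_C(u)\le 1} \Bigl|\sum_j x_j h_{C_j}(u)\Bigr|\le t\Bigr\}.
\]
The step needs $\gamma_N(K)\ge e^{-cN}$. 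We would try to show that the map $x\mapsto\sum_j x_j h_{C_j}$ into $(C(S^{n-1}),\|\cdot\|_\infty)$ has bounded $\ell$-norm, with entropy numbers controlled by covering numbers of the unit ball of $h_C$ in the metrics $\max_j |h_{C_j}(u)-h_{C_j}(v)|$. Each $h_{C_j}$ is a seminorm dominated by $(n/N)\,h_C$ after reweighting. So a dual Sudakov / Pajor--Talagrand bound, applied at each scale and combined with a large-deviation argument restricted to the good set, should give $\log N(\cdot,\delta)\lesssim n/\delta^2$ up to the uniform importance factor. Losing no logarithm here is the crux. We would mimic the RR26 argument, which uses only that each $\ell_1$ term is a seminorm with bounded importance, and check that symmetry and convexity of each $C_j$ alone suffice. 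If this fails, the fallback is a Spencer-type entropy method on a net.

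\textbf{Iteration.} Repeating the halving step, we recompute weights at each round so the uniform-importance invariant is maintained. The errors form a geometric series $\sum_k \sqrt{n/N_k}$, dominated by the final stage. We stop once $N\approx n/\eps^2$, so the total multiplicative error is $O(\eps)$. Merging copies of the same $C_i$ then gives the coefficients $\lambda_i$, supported on $C_\star n/\eps^2$ indices. The containment form follows because $h_{\sum\lambda_iC_i}=\sum\lambda_ih_{C_i}$ and support functions determine symmetric convex bodies.
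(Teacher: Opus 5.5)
\textbf{There is a genuine gap, and it sits exactly where you put the ``crux.''} Your halving step needs the discrepancy body $K$ at scale $t\asymp\sqrt{n/N}$ to have measure $e^{-O(N)}$. You leave this as a hope (``should give'', ``we would try'', ``if this fails, fallback'').

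The tools you propose are precisely the ones that cause the logarithmic losses in Talagrand's $O(n\log n)$ bound and in the polylogarithmic bound of \cite{jambulapati2023sparsifying}. These are Gaussian width, Dudley or dual-Sudakov entropy estimates with $\log N(\delta)\lesssim n/\delta^2$, and a Spencer-type net or \v{S}id\'ak argument. An entropy bound of order $n/\delta^2$ contributes comparably at every dyadic scale, so summing over scales (in Dudley's integral or in a chained small-ball estimate) costs a logarithm. Nothing in the proposal removes this.

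Saying you will ``mimic RR26 and check that symmetry and convexity alone suffice'' defers precisely the new content. Also, the argument of \cite{RR26} as used here is not an importance-weighted chaining argument. It is volumetric, and needs no importance weights or splitting at all.

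\textbf{The missing idea is a one-sided volume lemma.} For $K_0=\{z\in[-1,1]^m:\sum_i z_ih_{C_i}(u)\le 0\ \forall u\}$ one has $\Vol_m(K_0)\ge 2^m e^{-2d}$. The paper proves it in four steps.
\begin{enumerate}
\item Apply \cite[Lemma~18]{RR26} to $\Phi(t)=\Vol_d(\mathcal{Z}(t))$, where $\mathcal{Z}(t)=\{z:\ip{z,u}\le\sum_i t_ih_{C_i}(u)\ \forall u\}$. For $t\ge 0$ this is $\Vol_d(\sum_i t_iC_i)$, and it is $d$-homogeneous along sign vectors.
\item $\Phi$ is separately convex. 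In coordinate $j$ it has the form $t_j\mapsto\Vol_d(A\ominus(R-t_j)C_j)$. The map $\lambda\mapsto\Vol(A\ominus\lambda B)$ is convex by the Bol--Matheron formula $f'(\lambda)=-nW_1(A\ominus\lambda B;B)$ together with monotonicity of mixed volumes; lower-dimensional $C_j$ are handled by approximation.
\item For a sign vector $\sigma$, $\Phi(\sigma)>0$ means $C_P\ominus C_N\neq\emptyset$. By central symmetry this forces $C_N\subseteq C_P$, i.e.\ $\sum_i\sigma_ih_{C_i}\ge 0$. Lemma~18 gives this with probability at least $e^{-2d}$.
\item Rescaling the bodies by $|Z_i|$ passes from sign vectors to the uniform distribution on the cube.
\end{enumerate}

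Since the lemma applies to every subfamily, take the relaxed body $K_\eta$ with slack $\eta h_C$, where $\eta=\Theta(\sqrt{d/m})$. It contains $[-\eta,\eta]^m$, and all its coordinate sections have relative volume at least $e^{-2d}$. The symmetrization theorem of \cite{RR26} then gives $\Vol_m(K_\eta\cap(-K_\eta))\ge 2^{-5m}$. The large-volume partial colouring theorem then yields $z$ with $m/2$ saturated coordinates and error $O(\sqrt{d/m})\,h_C$.

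Because the error is automatically relative to $h_C$, no Lewis-type weights, splitting into copies, or entropy estimates are needed. ``Heavy'' bodies simply stay unsaturated. Once such a partial colouring lemma is in hand, the rest of your plan matches the paper: the geometric series of errors dominated by the last round, stopping at $\Theta(d/\eps^2)$ terms, and the support-function/containment equivalence.
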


Importantly, the size of the sparsifier scales as $O(d / \eps^2)$ \emph{without} any logarithmic factors depending on $d$ or $n$. This is the key improvement with respect to prior work. To appreciate the expressivity of \cref{thm:main-intro}, we consider a few examples. First, as mentioned above, our work hinges on a generalization of the framework of \cite{RR26}; it is therefore natural to ensure that \cref{thm:body-main} generalizes their main theorem. In their work they consider functions of the form $f_i: \R^n \rightarrow \R_{\geq 0}$ where $f_i(u) = |\langle a_i, u \rangle|$ for some fixed $a_i \in \R^n$. Given such a collection of functions $f_1, \dots , f_m$ and a parameter $\eps > 0$, the goal is then to find weights $\lambda_1, \dots , \lambda_m \in \R_{\geq 0}$ such that \emph{for every} $u \in \R^n$,
\[
(1 - \eps) \cdot \sum_{i = 1}^m f_i(u) \leq \sum_{i = 1}^m \lambda_i \cdot f_i(u) \leq (1 + \eps) \cdot \sum_{i = 1}^m f_i(u).
\]
Reis and Rothvoss \cite{RR26} show that such sparsifiers can be built while having only $O(n / \eps^2)$ many non-zero weights, which improved upon the previous best of $O(n \log(n) / \eps^2)$ from the work of \cite{talagrand1990embedding}. We can now observe that this fits \emph{exactly} into our framework above: indeed, for each function $f_i$, one simply considers the convex set which is $C_i = \{\alpha \cdot a_i: \alpha \in [-1,1]\}$. In this way, the support function becomes $h_{C_i}(u) = \max_{x \in C_i} \langle x, u \rangle = \max (\langle a_i, u \rangle, \langle -a_i, u \rangle) = |\langle a_i, u \rangle|$.

Thus, the work of \cite{RR26} corresponds exactly with the setting in which each $C_i$ above is a \emph{one-dimensional} convex set. The key expressivity of \cref{thm:main-intro} instead comes from the fact that it works for \emph{arbitrary dimensional} convex sets (again, provided they are compact and centrally symmetric). 

\paragraph{Sparsifying Sums of Seminorms} To demonstrate this expressivity, we consider the sparsification of sums of seminorms. In this setting, first introduced in the work of Jambulapati, Lee, Liu, and Sidford~\cite{jambulapati2023sparsifying}, one is provided with a collection of seminorms $f_1, \dots , f_m: \R^n \rightarrow \R_{\geq 0}$. A function $f$ is a seminorm if it satisfies:
\begin{enumerate}[(1)]
        \item \parhead{Positive Homogeneity} For all $x \in \R^n$ and $\lambda \in \R$, $f(\lambda x) = |\lambda| \cdot f(x)$.
        \item \parhead{Subadditivity} For all $x, y \in \R^n$, $f(x + y) \leq f(x) + f(y)$.
\end{enumerate}
Crucially, \emph{all} seminorms can be written as support functions of centrally symmetric, compact, convex sets. By using this identity in combination with \cref{thm:main-intro}, we obtain the following theorem on sparsifying sums of seminorms:

\begin{theorem}\label{thm:intro-seminorm-sparsifiers}
    Let $f_1, \dots , f_m: \R^n \rightarrow \R_{\geq 0}$ be a collection of seminorms. Then, for any $\eps > 0$, there exists a set of weights $\lambda_1, \dots , \lambda_m \in \R_{\geq 0}$ with only $O(n / \eps^2)$ of the weights non-zero such that for every $x \in \R^n$:
    \[
    (1 - \eps) \cdot \sum_{i = 1}^m  f_i(x) \leq \sum_{i = 1}^m \lambda_i \cdot f_i(x) \leq  (1 + \eps) \cdot \sum_{i = 1}^m  f_i(x).
    \]
\end{theorem}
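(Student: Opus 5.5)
The plan is to reduce \cref{thm:intro-seminorm-sparsifiers} to \cref{thm:main-intro}. The only work is to show that every seminorm on $\R^n$ is the support function of a compact, centrally symmetric, convex set, and that the dimension $d$ in \cref{thm:main-intro} is at most $n$.

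\textbf{Step 1: representing a seminorm.} Given a seminorm $f$, I would define its polar-type set
\[
K_f = \{ y \in \R^n : \langle y, x\rangle \le f(x) \text{ for all } x \in \R^n\}.
\]
Each constraint $\langle y,x\rangle\le f(x)$ is a closed half-space (or all of $\R^n$), so $K_f$ is closed and convex. Because $f(-x)=f(x)$, the set is centrally symmetric.

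For boundedness, I use that a seminorm on $\R^n$ is finite and convex, hence continuous. It is therefore bounded on the unit sphere by some $M$, so $f(x)\le M\|x\|_2$. Taking $x=y$ in the defining constraint then gives $\|y\|_2^2 \le M\|y\|_2$, so $\|y\|_2\le M$ and $K_f$ is compact.

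The inequality $h_{K_f}\le f$ is immediate from the definition. For the reverse inequality at a point $u$, I apply Hahn--Banach. The linear functional $tu\mapsto t f(u)$ on $\mathrm{span}(u)$ is dominated by the sublinear function $f$, since $t f(u)\le |t| f(u)=f(tu)$. It therefore extends to a linear functional $\langle y,\cdot\rangle$ on $\R^n$ that is dominated by $f$. This $y$ lies in $K_f$ and satisfies $\langle y,u\rangle=f(u)$, so $h_{K_f}(u)=f(u)$.

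\textbf{Step 2: applying the main theorem.} Set $C_i=K_{f_i}\subseteq E=\R^n$. Then $C=C_1+\cdots+C_m\subseteq\R^n$, so $d=\dim\mathrm{span}(C)\le n$. \cref{thm:main-intro} then gives weights $\lambda_i\ge0$ with at most $C_\star d/\eps^2\le C_\star n/\eps^2$ nonzero entries, and for every $x$ we have $(1-\eps)\sum_i h_{C_i}(x)\le\sum_i\lambda_i h_{C_i}(x)\le(1+\eps)\sum_i h_{C_i}(x)$. Substituting $h_{C_i}=f_i$ yields the claim for $0<\eps<1$.

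\textbf{Step 3: the case $\eps\ge1$.} Here the lower bound $(1-\eps)\sum_i f_i\le 0$ is trivial, so it suffices to apply the result with $\eps=1/2$. This gives $O(n)$ nonzero weights, which is within $O(n/\eps^2)$ only for bounded $\eps$. I would interpret the statement, as is standard, for $\eps\in(0,1)$.

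The main obstacle is Step 1, specifically the boundedness of $K_f$ and the support-function identity. Both are standard consequences of continuity and Hahn--Banach, so no real difficulty is expected.
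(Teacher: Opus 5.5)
Your proof is correct and follows the paper's route: the paper uses exactly your set $K_f=\{y:\langle y,x\rangle\le f(x)\ \forall x\}$, citing Schneider's Theorem~1.7.1 where you prove the representation $h_{K_f}=f$ via continuity and Hahn--Banach, and then applies \cref{thm:body-main} with $d\le n$. Your Step~3 hedge is unnecessary: for $\eps\ge 1$ the all-zero weights already satisfy $(1-\eps)\sum_i f_i\le 0\le(1+\eps)\sum_i f_i$ because each $f_i\ge 0$, so the statement holds for every $\eps>0$, a case the paper's proof does not address explicitly either.
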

The previous best known sparsifier size for sums of seminorms was due to the aforementioned work of \cite{jambulapati2023sparsifying}, which retained $O\left ( \frac{n \log(n/\eps) \cdot \log^{2.5}(n)}{\eps^2} \right )$ many terms in the sparsifier. Our work shaves off all of these logarithmic factors thereby \emph{settling} the optimal size complexity of seminorm sparsification. Indeed, because graph cut-sparsification can be captured by sparsifying sums of seminorms, the work of Carlson, Kolla, Srivastava, and Trevisan \cite{CarlsonKST19} implies that there are sums of seminorms for which $(1 \pm \eps)$ sparsification requires the retention of $\Omega(n / \eps^2)$ many terms.

\paragraph{Sparsifying Hypergraphs} Seminorms themselves already generalize many frequently studied sparsification problems. As our first application, we are able to conclude optimal size \emph{hypergraph cut} sparsifiers. In this setting, one is provided with a hypergraph $H = (V, E, w)$, where $w: E \rightarrow \R_{\geq 0}$ is a collection of non-negative hyperedge weights. The goal is to compute a re-weighted sub-hypergraph $H' = (V, E, w')$ with $\mathrm{Supp}(w')$ much smaller than $|E|$ which nevertheless preserves all \emph{cut-sizes} in the hypergraph. In this context, for a set $S \subseteq V$, the cut-size of $S$ is defined as
    \[
    \delta_H(S) := \sum_{e \in E} w(e) \cdot \mathbf{1}[S \cap e \neq \emptyset \text{ and } (V-S) \cap e \neq \emptyset],
    \]
intuitively capturing the total weight of hyperedges which are crossing from $S$ to $V - S$. A hypergraph $H'$ defined on the same vertex set is then a $(1 \pm \eps)$ cut sparsifier of $H$ if for every $S \subseteq V$, $\delta_{H'}(S) \in (1 \pm \eps) \cdot \delta_H(S)$.

This cut-sparsification was first proposed in the work of Kogan and Krauthgamer \cite{KoganK15}, where they showed there exist $(1 \pm \eps)$ sparsifiers which retain only $O(n \cdot (r + \log(n)) / \eps^2)$ many hyperedges, where $n = |V|$ and $r$ is the maximum size of any hyperedge. This was later improved in the work of Chen, Khanna, and Nagda \cite{ChenKN20} to size $O(n \log(n) / \eps^2)$, which has remained the best ever since, despite extensive follow-up attention \cite{SomaY19, KapralovKTY21, KapralovKTY21a, Lee23, JambulapatiLS23, jambulapati2023sparsifying, KhannaPS24, quanrud2024quotient}.

Nevertheless, because hypergraph cut functions have a natural extension as seminorms, we immediately obtain the following theorem:

\begin{theorem}\label{thm:hypergraph-intro}
    Let $H = (V, E, w)$ be any hypergraph (with non-negative weights). Then, for any $\eps > 0$, there exists a hypergraph $H' = (V, E, w')$ (again, with non-negative weights) such that:
    \begin{enumerate}[(1)]
        \item $H'$ is a $(1 \pm \eps)$ hypergraph cut-sparsifier of $H$. 
        \item $H'$ has at most $O(|V| / \eps^2)$ hyperedges with non-zero weight: i.e., $ \left | \{ e \in E: w'(e) > 0\}\right | \leq O(|V| / \eps^2)$.
    \end{enumerate}
\end{theorem}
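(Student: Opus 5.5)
The plan is to deduce \cref{thm:hypergraph-intro} from \cref{thm:intro-seminorm-sparsifiers} by encoding each hyperedge's cut indicator as a seminorm on $\R^V$. For a hyperedge $e \in E$, define
\[
f_e(x) := w(e)\cdot \max_{u,v\in e} |x_u - x_v|, \qquad x \in \R^V.
\]
Each $f_e$ is a seminorm. Positive homogeneity holds because $\max_{u,v}|\lambda x_u-\lambda x_v| = |\lambda|\max_{u,v}|x_u-x_v|$. Subadditivity holds because for every pair $u,v$ we have $|(x+y)_u-(x+y)_v|\le |x_u-x_v|+|y_u-y_v|$, and we then take the maximum over pairs. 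It also helps to note that $f_e(x)=w(e)\max_{u,v\in e}\langle \mathbf{1}_u-\mathbf{1}_v, x\rangle$ is the support function of the symmetric polytope $w(e)\cdot\mathrm{conv}\{\mathbf{1}_u-\mathbf{1}_v: u,v\in e\}$. So one could equally apply \cref{thm:main-intro} directly, with $d\le |V|-1$.

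The key step is to check the cut identity. For $S\subseteq V$, evaluating at $x=\mathbf{1}_S$ gives $\max_{u,v\in e}|x_u-x_v| = 1$ exactly when $e$ contains vertices both in $S$ and in $V\setminus S$, and $0$ otherwise. Hence
\[
\sum_{e\in E} f_e(\mathbf{1}_S) = \delta_H(S).
\]
Now apply \cref{thm:intro-seminorm-sparsifiers} with $n=|V|$ to the family $\{f_e\}_{e\in E}$. We discard hyperedges with $w(e)=0$; they contribute zero seminorms and receive weight $0$. This yields weights $\lambda_e\ge 0$, at most $O(|V|/\eps^2)$ of them nonzero, such that $\sum_e \lambda_e f_e(x)\in(1\pm\eps)\sum_e f_e(x)$ for all $x\in\R^V$.

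Set $w'(e):=\lambda_e w(e)$. Then $w'$ is nonnegative, its support lies inside that of $\lambda$, and $\sum_e \lambda_e f_e(\mathbf{1}_S)=\delta_{H'}(S)$. Specializing the guarantee to indicator vectors gives $\delta_{H'}(S)\in(1\pm\eps)\delta_H(S)$ for every $S\subseteq V$, which is exactly the required statement.

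There is no substantial obstacle here, since all the difficulty lives in \cref{thm:main-intro} and \cref{thm:intro-seminorm-sparsifiers}. The only points needing care are verifying that $f_e$ is genuinely a seminorm and confirming that the Boolean restriction recovers the cut function exactly. One could also remark that constants lie in the kernel of every $f_e$, so the effective dimension is $|V|-1$; this is not needed for the $O(|V|/\eps^2)$ bound.
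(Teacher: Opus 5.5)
Your proposal is correct and follows the paper's proof essentially step for step: the paper also encodes each hyperedge as the seminorm $w(e)\cdot\max_{u,v\in e}|x_u-x_v|$, checks the cut identity on indicator vectors $\mathbf{1}_S$, applies the seminorm sparsification theorem, and sets $w'(e)=\lambda_e w(e)$. Your side remark that $f_e$ is the support function of $w(e)\cdot\mathrm{conv}\{\mathbf{1}_u-\mathbf{1}_v : u,v\in e\}$, giving $d\le |V|-1$, is a valid shortcut that just inlines the seminorm-to-support-function step.
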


This removes the final logarithmic factor from the work of \cite{ChenKN20}, and establishes $O(|V| / \eps^2)$ as the \emph{optimal} size for any hypergraph cut sparsifiers (again, with the matching lower bound owed to \cite{CarlsonKST19}).

\paragraph{Sparsifying Sums of Submodular Functions}

Another application of our seminorm sparsification is to sparsifying sums of submodular functions. Recall that given an integer $n$, a function $f: 2^{[n]} \rightarrow \R_{\geq 0}$ is said to be submodular if for every $S, T \subseteq [n]$
\[
f(S) + f(T) \geq f(S \cap T) + f(S \cup T). 
\]
A well-studied question in sparsification (see, e.g., \cite{rafiey2022sparsification, kudla2023sparsification, kenneth2023cut, jambulapati2023sparsifying,khanna2024almost, bao2026sparsify}) asks whether, given a collection of submodular functions $f_1, \dots , f_m: 2^{[n]} \rightarrow \R_{\geq 0}$, one can compute a smaller $(1 \pm \eps)$ sparsifier. In this context, mirroring above, a sparsifier is a collection of weights $\lambda_1, \dots , \lambda_m \in \R_{\geq 0}$ such that for every $S \subseteq [n]$ it is the case that 
    \[
    (1 -\eps) \cdot \sum_{i = 1}^m f_i(S) \leq \sum_{i =1}^m \lambda_i \cdot f_i(S) \leq (1 + \eps) \cdot \sum_{i = 1}^m f_i(S).
    \]
As observed in \cite{jambulapati2023sparsifying}, whenever the submodular functions $f_i$ are additionally \emph{symmetric} (meaning that $f_i(S) = f_i( [n] - S)$) a crucial result of Lov\'asz \cite{lovasz1983submodular} implies that such submodular functions can be written as continuous seminorms. In combination with \cref{thm:intro-seminorm-sparsifiers}, we then obtain the following theorem:

\begin{theorem}\label{thm:intro-symmetric-sparsification}
    Let $n$ be an integer, and let $f_1, \dots , f_m: 2^{[n]} \rightarrow \R_{\geq 0}$ be symmetric submodular functions. For any parameter $\eps > 0$, there exists a collection of weights $\lambda_1, \dots , \lambda_m \in \R_{\geq 0}$ such that:
    \begin{enumerate}[(1)]
        \item $\lambda_1, \dots , \lambda_m$ constitutes a $(1 \pm \eps)$ sparsifier of $f_1, \dots , f_m$.
        \item At most $O(n / \eps^2)$ of the weights $\lambda_1, \dots , \lambda_m$ are non-zero. 
    \end{enumerate}
\end{theorem}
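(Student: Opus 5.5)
The plan is to reduce \cref{thm:intro-symmetric-sparsification} to \cref{thm:intro-seminorm-sparsifiers} using the Lov\'asz extension. The reduction has three steps: turn each $f_i$ into a seminorm, apply the seminorm sparsifier, and restrict back to indicator vectors.

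\emph{Step 1: build the seminorms.} For each $f_i$, let $\hat f_i:\R^n\to\R$ be its Lov\'asz extension. For $x$ with coordinates sorted $x_{\sigma(1)}\ge\cdots\ge x_{\sigma(n)}$ and $S_k=\{\sigma(1),\dots,\sigma(k)\}$, set
\[
\hat f_i(x)=\sum_{k=1}^{n-1}(x_{\sigma(k)}-x_{\sigma(k+1)})\, f_i(S_k).
\]
This is the standard formula with the contribution of $f_i(\emptyset)$ and $f_i([n])$ removed. For symmetric $f_i$ this removal is harmless: the two values are equal and contribute only a term proportional to $x_{\sigma(n)}$. Dropping that term is equivalent to replacing $f_i$ by $f_i-f_i(\emptyset)$, which is still symmetric and submodular but may take negative values. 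We therefore first check that $f_i(S)\ge f_i(\emptyset)$ for all $S$. This follows from symmetry and submodularity:
\[
2f_i(S)=f_i(S)+f_i([n]-S)\ge f_i(\emptyset)+f_i([n]).
\]
So we may work with $g_i:=f_i-f_i(\emptyset)\ge 0$, and $\hat g_i$ is the formula above applied to $g_i$.

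\emph{Step 2: verify that each $\hat g_i$ is a seminorm on $\R^n$.} Lov\'asz's theorem says the extension of a submodular function is convex. It is also positively homogeneous. Symmetry of $g_i$ gives $\hat g_i(-x)=\hat g_i(x)$, since reversing the order maps $S_k$ to complements. Together these give $\hat g_i(\lambda x)=|\lambda|\,\hat g_i(x)$. Convexity plus homogeneity gives subadditivity. Finally, $\hat g_i\ge 0$ because the increments $x_{\sigma(k)}-x_{\sigma(k+1)}$ are nonnegative and $g_i\ge 0$.

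\emph{Step 3: sparsify and restrict.} Apply \cref{thm:intro-seminorm-sparsifiers} to $\hat g_1,\dots,\hat g_m$ to obtain weights $\lambda$ with $O(n/\eps^2)$ nonzero entries. Evaluating at $x=\mathbf 1_S$ gives $\hat g_i(\mathbf 1_S)=g_i(S)$, so $\sum_i\lambda_i g_i(S)\in(1\pm\eps)\sum_i g_i(S)$ for every $S$.

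The main obstacle is the constant offsets $c_i=f_i(\emptyset)$: the guarantee for the $g_i$ does not automatically transfer back to the $f_i$. I would handle this by adding a single extra term. Keep the sparsified $g$-part, and preserve $\sum_i c_i$ exactly by putting all of it on one index $j$ with $c_j>0$, giving that term weight $\sum_i c_i/c_j$. Then the total is $(1\pm\eps)\sum_i g_i(S)+\sum_i c_i$, which lies in $(1\pm\eps)\sum_i f_i(S)$.

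The catch is that index $j$ then carries weight on both its $g_j$ part and its constant part, which a single $\lambda_j$ on $f_j$ cannot encode independently. I would resolve this by giving the constant its own term, namely a dummy constant seminorm-free term, or by noting that constants can be absorbed, and treating that extra term separately. This costs only one additional index.

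In the common setting $f_i(\emptyset)=0$ (as for cut functions) the issue disappears and the reduction is immediate.
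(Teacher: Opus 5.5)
Your Steps 1--3 are the paper's argument. The paper replaces each $f_i$ by its Lov\'asz extension, cites the fact that it is a seminorm with $\hat{f}_i(\mathbf{1}_S)=f_i(S)$, applies \cref{thm:seminorm-sparsifiers}, and evaluates at $\mathbf{1}_S$. Every seminorm vanishes at $0=\mathbf{1}_\emptyset$, so that cited fact can only hold when $f_i(\emptyset)=0$. You are therefore right that nonzero offsets $c_i=f_i(\emptyset)$ need separate treatment, and your check that $f_i\ge f_i(\emptyset)$, hence $g_i\ge 0$, is correct. One minor inaccuracy: your formula applied to $f_i$ differs from the same formula applied to $g_i$ by $c_i(\max_k x_k-\min_k x_k)$, so the ``equivalence'' sentence in Step 1 is wrong. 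Your final definition of $\hat{g}_i$ is nonetheless the right one, so nothing downstream is affected.

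The genuine gap is your last step. The theorem asks for weights on $f_1,\dots,f_m$ only. A ``dummy constant term'' is therefore not an admissible sparsifier, ``constants can be absorbed'' is not an argument, and the single-index trick fails for exactly the reason you give. Nor can the offsets be ignored, since a sparsifier for the $g_i$ need not control $\sum_i\lambda_i c_i$ at all. For example, if every $f_i$ is constant, then $g_i\equiv 0$ and $\lambda=0$ sparsifies the $g_i$.

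What is missing is a way to make the \emph{same} weights control both the $g$-part and the constant part, and one extra coordinate does this. On $\R^{n+1}$ set $F_i(x,t)=\hat{g}_i(x)+c_i|t|$. Each $F_i$ is a seminorm, being a sum of two seminorms. Moreover $F_i(\mathbf{1}_S,1)=g_i(S)+c_i=f_i(S)$ for every $S$, including $\emptyset$ and $[n]$. Applying \cref{thm:seminorm-sparsifiers} in dimension $n+1$ gives weights with $O((n+1)/\eps^2)=O(n/\eps^2)$ nonzero entries on the original indices. Evaluating at $(\mathbf{1}_S,1)$ then yields $\sum_i\lambda_i f_i(S)\in(1\pm\eps)\sum_i f_i(S)$ for all $S$. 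With this in place of your final paragraph, your proof covers the theorem as stated, which is slightly more than the paper's one-line reduction does.
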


This improves upon the previous best sparsifier size from \cite{jambulapati2023sparsifying} of $O\left ( \frac{n \log(n/\eps) \cdot \log^{2.5}(n)}{\eps^2} \right )$ many terms. This also immediately yields sparsifiers of the same $O(n / \eps^2)$ many terms for sums of \emph{monotone}, submodular functions (where $f(S \cup \{v\}) \geq f(S)$ for every $S \subseteq [n]$, $v \in [n]$) via a reduction of \cite{khanna2024almost}.

\subsection{Technical Overview}

\paragraph{The Framework of \cite{RR26}}

In this section, we now present a brief technical overview for \cref{thm:main-intro}. As mentioned above, the proof of \cref{thm:main-intro} follows in the footsteps of \cite{RR26}. The main idea of \cite{RR26} is to apply an iterative ``weight perturbation'' procedure which slowly reduces the support size of an object, while simultaneously ensuring that the ``error accumulation'' is not too large.

In more detail, let $f_1, \dots , f_m: \R^n \rightarrow \R$ denote the collection of functions we wish to sparsify. As mentioned above, in our setting these will be the support functions of compact, centrally symmetric convex sets $C_1, \dots , C_m$.
To start, these functions $f_i$ are each given an implicit weight of $1$, so we let $w^{(1)}_i = 1$ for $i \in [m]$. The goal of \cite{RR26} is simple: they wish to find a ``weight perturbation'' vector $z^{(1)} \in [-1,1]^m$ such that the new weight vector $w^{(2)} = w^{(1)} + z^{(1)}$ satisfies:
\begin{enumerate}[(1)]
    \item $w^{(2)}$ is still a good sparsifier of $w^{(1)}$. I.e., for every $x \in \R^n$, 
    \[
    \sum_{i = 1}^m w^{(2)}_i \cdot f_i(x) \approx \sum_{i = 1}^m w^{(1)}_i \cdot f_i(x).
    \]
    \item $w^{(2)}$ has a smaller support than $w^{(1)}$. i.e., some entries of $w^{(2)}$ are now equal to $0$.
\end{enumerate}
Importantly, this second condition above exactly corresponds with finding a weight perturbation vector $z$ which has many coordinates \emph{equal} to $-1$. 

The key contribution from the work of \cite{RR26} is a general framework for establishing sufficient conditions for 
when such vectors $z$ exist. To do this, they define an appropriate ``one-sided sparsification polytope'':
\[
K_0 = \left \{ z \in [-1,1]^m: \sum_{i = 1}^m z_i \cdot f_i(x) \leq 0 \text{ for every }x \in \R^n \right \}.
\]
This is simply the set of all weight perturbation vectors which constitute good \emph{one-sided} sparsifiers, i.e., which yield $w^{(2)} = z + w^{(1)}$ such that for all $x \in \R^n$,
\[
\sum_{i = 1}^m w^{(2)}_i f_i(x) \leq \sum_{i = 1}^m w^{(1)}_i f_i(x).
\]
The key fact from \cite{RR26} is that if $K_0$ has large volume, then \emph{there must} exist weight perturbation vectors which satisfy both of the desired conditions above. 

\begin{claim}[Informal]\label{clm:informalVolumeTarget}
    Let $K_0$ be defined as above. If 
    \[
    \frac{\Vol_m(K_0)}{2^m} \geq 2^{-O(d)},
    \]
    then for any $\eps > 0$, there exists $(1 \pm \eps)$ sparsifiers of $f_1, \dots , f_m$ with total support $O(d / \eps^2)$.
\end{claim}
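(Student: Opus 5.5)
We plan to follow the partial-colouring-plus-iteration template of \cite{RR26}, using the volume of $K_0$ as the only input. The key tool is a partial colouring lemma in the style of Gluskin/Giannopoulos/Rothvoss. If a symmetric convex body $K \subseteq \R^m$ satisfies $\Vol_m(K \cap [-1,1]^m) \geq 2^{-cm}$ for a small enough constant $c$, then for any shift $y \in [-1,1]^m$ there is a point $z \in K$ with $z + y \in [-1,1]^m$ and $\Omega(m)$ coordinates of $z+y$ equal to $\pm 1$.

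$K_0$ is not symmetric, so we work with its symmetrization. Set
\[
K := K_0 \cap (-K_0) = \Bigl\{ z\in[-1,1]^m : \sum_i z_i f_i(x) = 0 \ \forall x \Bigr\},
\]
or better, the slightly thickened two-sided body
\[
K_\delta = \Bigl\{ z\in[-1,1]^m : \bigl|\textstyle\sum_i z_i f_i(x)\bigr| \le \delta \sum_i f_i(x)\ \forall x \Bigr\}.
\]
We apply the Rogers--Shephard inequality (or Milman--Pajor): $\Vol(K_0 \cap -K_0) \geq 2^{-m}\Vol(K_0)$ up to $\binom{2m}{m}$-type factors.

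That factor of $2^{-\Theta(m)}$ is too costly when $d \ll m$. Instead we avoid symmetrizing directly. The observation is that $z \in K_0$ gives the one-sided bound, and the reverse bound comes from the structure of $K_0$. Since $\mathbf{1}\cdot$ the weights is fixed, the contribution of $K_0$ near the hyperplane can be controlled. Concretely, we intersect $K_0$ with $\{z : \sum_i z_i f_i(x) \ge -\delta F(x)\}$, where $F=\sum_i f_i$, and show by a Grünbaum/slab-type argument that this costs only a factor $2^{-O(d)}$ in volume.

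Next we apply the colouring lemma with a shrinking schedule. At the current weight vector $w$ with support $s$, we rescale the body to the coordinates in the support, measured relative to $w$. Each step kills a constant fraction of the support while incurring relative error $\delta_s \approx \sqrt{d/s}$. The argument needs the volume hypothesis to hold at every stage, for the reweighted functions $w_i f_i$. This is where the hypothesis must be read as uniform over weightings, which is how it is used in \cref{thm:main-intro}.

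Summing errors over the geometric sequence $s = m, m/2, \dots, d/\eps^2$ gives total error $\sum \sqrt{d/s} = O(\eps)$, since the series is dominated by its last term. The process stops at support $O(d/\eps^2)$.

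The main obstacle is getting the two-sided error $\delta \approx \sqrt{d/s}$ from a volume lower bound of only $2^{-O(d)}$ rather than $2^{-O(s)}$. We will try to convert the one-sided containment into a two-sided one by averaging. Take $z$ and $-z'$ from the colouring, and use that $\sum_i z_i f_i \le 0$ with $\sum_i z_i f_i \ge -\delta F$. The lower side follows because the slab $\{ \sum_i z_i f_i(x) \ge -\delta F(x)\}$ has relative Gaussian measure at least $e^{-O(\delta^2 s)}$ in each of $2^{O(d)}$ net directions. This gives a volume loss of $2^{-O(d)}$ exactly when $\delta^2 s \approx d$.
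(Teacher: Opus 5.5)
Your outer loop is the same as the paper's: partial colouring on the active coordinates, support shrinking geometrically, and per-round error $O(\sqrt{d/s})$ summing to $O(\eps)$. But the central step is missing, and the replacement you sketch does not work.

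\textbf{The misdiagnosis.} You abandon symmetrization for the wrong reason.
\begin{itemize}
\item The body $K_0\cap(-K_0)=\{z:\sum_i z_if_i\equiv 0\}$ lies in a proper linear subspace, since the origin is a boundary point of $K_0$. So it has volume $0$. The Milman--Pajor/Rogers--Shephard bound you quote needs the barycenter at the origin and gives nothing here.
\item A loss of $2^{-\Theta(m)}$ is \emph{not} too costly. Partial colouring (\cref{thm:RR-partial}) only needs the symmetric body to have volume $c^m$.
\end{itemize}
So the real target is the two-sided body $Q_\eta=\{z\in[-1,1]^m:|\sum_i z_if_i|\le \eta F\}$ with $\eta=O(\sqrt{d/m})$ and $\Vol_m(Q_\eta)\ge 2^{-O(m)}$. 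You never need, and do not get, a $2^{-O(d)}$ loss for it.

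\textbf{Why the slab/net argument fails.} It fails even for the stronger claim you aim at.
\begin{itemize}
\item A per-direction bound of $e^{-O(\delta^2 s)}=e^{-O(d)}$ does not aggregate to $2^{-O(d)}$ over $2^{O(d)}$ directions. Intersecting the slabs via a correlation inequality gives only $e^{-O(d)\cdot 2^{O(d)}}$. A union bound would instead need each failure probability far below $2^{-O(d)}$.
\item $2^{O(d)}$ directions do not control all $x$ at relative accuracy $\delta$. A $\delta$-net in the $F$-norm has $(3/\delta)^d$ points, which reintroduces a $\log(1/\eps)$ factor.
\item A per-direction tail of order $e^{-\Omega(\delta^2 s)}$ would need well-conditioned terms, which is not assumed.
\item The ``average $z$ and $-z'$'' step is never specified.
\end{itemize}

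\textbf{The missing idea} is the Reis--Rothvoss symmetrization theorem (\cref{thm:RR-sym}). If a convex $K\subseteq[-1,1]^m$ contains $[-\eta,\eta]^m$ and every coordinate restriction satisfies $\Vol_{|S|}(K_S)/2^{|S|}\ge p$, then $\Vol_m(K\cap(-K))\ge 2^{-5m}$ once $m\ge \log_2(1/p)/\eta^2$.

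The paper applies it to the \emph{one-sided} thickening $K_\eta=\{x\in[-1,1]^m:\sum_i x_if_i\le \eta F\}$.
\begin{itemize}
\item $K_\eta$ contains $[-\eta,\eta]^m$ trivially, and it contains $K_0$.
\item Its restrictions have volume fraction at least $e^{-2d}$, because $(K_0)_S$ is exactly the $K_0$ of the sub-family $\{f_i\}_{i\in S}$, to which the volume lemma applies.
\item So the hypothesis must hold uniformly over sub-collections, not only over the reweightings you mention.
\end{itemize}
With $p=e^{-2d}$ and $\eta=\sqrt{2\log_2 e\cdot d/m}$, one gets $\Vol_m(K_\eta\cap(-K_\eta))\ge 2^{-5m}$. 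Then \cref{thm:RR-partial} gives $z$ with at least $m/4$ coordinates equal to $-1$ and $|\sum_i z_if_i|\le s\eta F=O(\sqrt{d/m})\,F$. The trade-off $\eta^2 m\approx \log(1/p)=O(d)$, which you were trying to extract from a slab estimate, is precisely the content of this theorem.
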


This clarifies the target statement that we now need in our setting: we define $f_1, \dots , f_m$ to be the support functions of centrally symmetric, compact, convex sets, and define $K_0$ to be one-sided sparsification polytope. Can we then lower bound the volume of $K_0$ by $2^m \cdot 2^{-O(n)}$?

\paragraph{Lower Bounding the Volume}

Fortunately, \cite{RR26} also presents tools for lower bounding the volume. Our starting point is their Lemma 18:

\begin{fact}[Lemma 18 of \cite{RR26}]
    Let $F:\R^m\to\R_{\ge0}$ be separately convex and not identically zero.  Suppose that, for some integer $d\ge0$,
\[
  F(\rho x)=\rho^dF(x)
  \qquad
  \text{for every }x\in\{-1,1\}^m\text{ and }\rho\ge0.
\]
Then
\[
  \Pr_{\sigma\in\{-1,1\}^m}[F(\sigma)>0]\ge e^{-2d}.
\]
\end{fact}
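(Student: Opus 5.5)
The plan is to turn separate convexity plus homogeneity into a pointwise domination of $F$ by a noised copy of itself on the cube. Hypercontractivity then forces the support $S:=\{\sigma\in\{-1,1\}^m : F(\sigma)>0\}$ to be large. Throughout, write $p:=\Pr_\sigma[F(\sigma)>0]$.

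\emph{Step 1 (interpolation).} Because $F$ is convex in each coordinate separately, for any box $[-\rho,\rho]^m$ and any $x$ in it we can peel off one coordinate at a time. Writing $x_1$ as a convex combination of $\pm\rho$ gives
\[F(x)\le \tfrac{\rho+x_1}{2\rho}F(\rho,x_2,\dots)+\tfrac{\rho-x_1}{2\rho}F(-\rho,x_2,\dots),\]
and iterating gives $F(x)\le \E_{\tau}[F(\rho\tau)]$. Here $\tau\in\{-1,1\}^m$ has independent coordinates with $\E[\tau_i]=x_i/\rho$. Two consequences follow.
\begin{enumerate}
\item $F$ is not identically zero on $\{-1,1\}^m$. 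If it were, homogeneity would give $F(\rho\sigma)=0$ for all $\sigma,\rho$, the interpolation would force $F\equiv 0$ on every box, and so $F\equiv0$ on $\R^m$. Hence $p>0$, and we may assume $d\ge 1$ (if $d=0$ the same interpolation gives the bound with a trivial adaptation, or directly $p\ge 1=e^0$ after checking).
\item Taking $x=\sigma\in\{-1,1\}^m$ and $\rho>1$, and using $F(\rho\tau)=\rho^dF(\tau)$, we get the pointwise bound on the cube
\[F(\sigma)\le \rho^d\,(T_{r}F)(\sigma),\qquad r=1/\rho,\]
where $T_r$ is the Bonami--Beckner noise operator. The biased $\tau$ is exactly the $r$-correlated copy of $\sigma$.
\end{enumerate}

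\emph{Step 2 (hypercontractivity and Hölder).} Since $F\ge 0$, Step 1 gives $\|F\|_2\le\rho^d\|T_rF\|_2$. By the Bonami--Beckner inequality, $\|T_rF\|_2\le\|F\|_q$ with $q=1+r^2<2$. Since $F$ vanishes off $S$, Hölder's inequality gives $\|F\|_q\le p^{1/q-1/2}\|F\|_2$. As $\|F\|_2>0$, we can divide through to obtain
\[p\ \ge\ \rho^{-d/(1/q-1/2)}=\exp\!\Big(-2d\,\ln\rho\cdot\tfrac{\rho^2+1}{\rho^2-1}\Big).\]

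\emph{Step 3 (optimize $\rho$).} Let $\rho\downarrow1$, writing $\rho=1+\eta$. Then $\ln\rho\cdot\frac{\rho^2+1}{\rho^2-1}\to 1$, so the bound tends to $e^{-2d}$. Since the inequality holds for every $\rho>1$, taking the limit gives $p\ge e^{-2d}$.

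I expect the main obstacle to be Step 1. The homogeneity is only assumed along rays through cube vertices, so the iterated one-coordinate convexity argument must be set up carefully. Each intermediate point must stay in the relevant box, and the final evaluations must land exactly on points $\rho\tau$ with $\tau\in\{-1,1\}^m$, where homogeneity applies. The remaining steps are a standard hypercontractivity-plus-Hölder calculation, and the constant $2$ in $e^{-2d}$ comes out exactly in the $\rho\to1$ limit.
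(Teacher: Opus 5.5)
Your proof is correct. The paper does not prove this statement; it quotes it from \cite{RR26} and uses it as a black box. So your argument is a self-contained derivation, not a variant of an in-paper proof.

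\textbf{Step 1.} This is exactly the multivariate Jensen inequality for separately convex functions (Lemma~12 of \cite{RR26}, which the paper also quotes), applied to independent $X_i=\rho\tau_i\in\{\pm\rho\}$ with $\E X_i=x_i$. Because $F$ is separately convex on all of $\R^m$, the intermediate points $(\pm\rho,\dots,\pm\rho,x_{k+1},\dots,x_m)$ are unconstrained, and the final evaluations land exactly on $\rho\tau$. The obstacle you anticipated therefore does not arise. With $x=\sigma$, the law of $\tau$ is the $(1/\rho)$-correlated copy of $\sigma$, giving $0\le F\le \rho^d T_{1/\rho}F$ on the cube. Your use of this to show that $F$ restricted to $\{-1,1\}^m$ is not identically zero is also right, and it is needed to divide by $\|F\|_2$.

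\textbf{Steps 2--3.} The bound $\|T_rF\|_2\le\|F\|_{1+r^2}$, the H\"older step, and the exponent $\frac1q-\frac12=\frac{\rho^2-1}{2(\rho^2+1)}$ are correct. Writing $\rho=e^s$, your bound is $p\ge e^{-2d\,s\coth s}$ for every $s>0$, and $s\coth s\to 1$ as $s\to 0$, so the limit gives $e^{-2d}$. The case $d=0$ needs no separate treatment: Steps~2--3 apply verbatim and give $p\ge 1$.

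\textbf{Comparison with a direct route.} The $\rho\downarrow 1$ limit of your argument is a log-Sobolev argument. Expanding $F\le e^{sd}e^{-sL}F$ to first order in $s$ gives the pointwise inequality $\sum_i \frac{F(\sigma)-F(\sigma^{\oplus i})}{2}\le dF(\sigma)$, and hence $\mathcal{E}(F,F)\le d\,\E[F^2]$. Combining $\mathrm{Ent}(F^2)\le 2\mathcal{E}(F,F)$ with $\mathrm{Ent}(F^2)\ge \E[F^2]\log(1/p)$ yields $p\ge e^{-2d}$ without any limit. Your finite-$\rho$ version replaces that differentiation with the standard Bonami--Beckner inequality plus a limit, and is equally rigorous.
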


Of key importance is the notion of \emph{separate convexity} of the function $F$: for a convex set $\Omega \subseteq \R^n$, this essentially requires that for any $i\in[m]$ and any $\omega\in\Omega$, the restricted function $t\mapsto F(\omega + te_i)\in\R$ is convex, where $e_i\in\R^m$ is the $i^{\mathrm{th}}$ standard basis vector.

Nevertheless, this still raises new challenges; we must find a function $F$ which is separately convex (and satisfies the desired scaling property with $\rho$) such that being able to conclude $\Pr_{\sigma\in\{-1,1\}^m}[F(\sigma)>0]\ge e^{-2n}$ implies our desired volume lower bound on $K_0$. This is where we make our first key contribution; given our collection of compact, convex, centrally symmetric sets $C_1, \dots , C_m$, we assume that the total dimension spanned by the sum of $C_1, \dots , C_m$ is $n$ and define the function $F(t_1, \dots , t_m) = \Vol_n(\sum_{i = 1}^m t_i C_i)$. In this notation, $t_i \cdot C_i$ refers to scaling the set $C_i$ by $t_i$, and summing over bodies refers to the typical Minkowski sum of sets.

It is not hard to see that 
\[
 F(\rho x)=\rho^n F(x)
  \qquad
  \text{for every }x\in\{-1,1\}^m\text{ and }\rho\ge0;
\]
this follows from the fact that if one scales all coordinates by a factor of $\rho$ in a space of dimension $n$, then the resulting volume of the body shrinks by exactly a factor of $\rho^n$. 

The more challenging task is to show that this function $F$ is indeed separately convex. This relies crucially on a result in convex geometry related to Minkowski subtraction, which we first formally define. Given convex sets $A, B \subseteq \R^n$, we let their Minkowski difference be defined as
\[
A \ominus B := \bigcap_{y \in B} (A - y) = \{y \in \R^n: B +y \subseteq A\}.
\]
Note that $A \ominus B$ may potentially be empty, for instance in the case when $A \subsetneq B$. Nevertheless, for a proper convex body $A$ (i.e., $A$ has nonzero volume) and any (bounded) convex set $B$, there will always \emph{exist} some constant $r(A, B) >0$ such that $A \ominus r(A, B) \cdot B$ is non-empty.

More importantly, this now lets us build a parameterized function $g_{A, B}(r) = \mathrm{Vol}(A \ominus r \cdot B)$, for which we show the following fundamental property.
\begin{claim}\label{claim:g-convex}
For any convex bodies $A$ and $B$, the function $g_{A,B}$ is convex.
\end{claim}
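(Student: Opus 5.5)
Write $A_r := A \ominus rB$ and $g(r) = g_{A,B}(r) = \mathrm{Vol}_n(A_r)$ for $r \ge 0$, with $g(r)=0$ once $A_r$ is empty or lower dimensional. The plan is to show that $g$ is locally Lipschitz on the interval where $A_r$ has interior, with a.e.\ derivative
\[
g'(r) = -n\, V(A_r[n-1], B),
\]
where $V(\cdot[n-1],\cdot)$ denotes the mixed volume. Two structural facts make the right-hand side nondecreasing in $r$. First, the family is a decreasing semigroup: $A_{r+s} = A_r \ominus sB$ and $A_{r+s}\subseteq A_r$. Second, mixed volumes are monotone under inclusion, so $r\mapsto V(A_r[n-1],B)$ is nonincreasing. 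A function whose derivative is nondecreasing a.e.\ and which is absolutely continuous is convex on that interval. Past the point $r^\ast$ where $A_r$ loses interior, $g$ is identically $0$. Since $g\ge 0$ is nonincreasing, this extension preserves convexity.

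\textbf{Lower bound on the decrement.} We use $A_{r+s} + sB \subseteq A_r$, which is immediate from the definition of $\ominus$. Applying the Steiner/mixed-volume expansion to $\mathrm{Vol}(A_{r+s}+sB)$ gives
\[
g(r) - g(r+s) \;\ge\; \mathrm{Vol}(A_{r+s}+sB)-\mathrm{Vol}(A_{r+s}) \;\ge\; n s\, V(A_{r+s}[n-1],B).
\]

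\textbf{Upper bound on the decrement.} We want
\[
g(r)-g(r+s) \le \int_0^s nV(A_{r+t}[n-1],B)\,dt \le ns\,V(A_r[n-1],B).
\]
We first prove this when $K=A_r$ is a polytope with facet normals $u_1,\dots,u_N$. Minkowski difference distributes over intersections, and a halfspace $\{\langle x,u\rangle\le h\}$ minus $tB$ is $\{\langle x,u\rangle \le h - t h_B(u)\}$. Hence
\[
K\ominus tB=\bigcap_j \{\langle x,u_j\rangle\le h_K(u_j)-t\,h_B(u_j)\}.
\]
The volume of a polytope with fixed normal set, as a function of its support numbers $\eta_j$, has partial derivatives $\partial \mathrm{Vol}/\partial \eta_j = \mathrm{Vol}_{n-1}(\text{facet}_j)$ (zero for degenerate facets). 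Integrating along $\eta_j(t)=h_K(u_j)-t h_B(u_j)$ therefore gives
\[
\mathrm{Vol}(K)-\mathrm{Vol}(K\ominus sB)=\int_0^s \sum_j h_B(u_j)\,\mathrm{Vol}_{n-1}(F_j(t))\,dt .
\]
The facets of $K\ominus tB$ have normals among the $u_j$, so the integrand is exactly $nV(K\ominus tB[n-1],B)$. For general $A$, approximate $A$ from outside by polytopes $P_k\downarrow A$. Then $P_k\ominus rB \to A_r$ in the Hausdorff metric at every $r$ where $A_r$ has interior, and continuity of volume and mixed volumes passes the inequality to the limit.

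\textbf{Conclusion and main obstacle.} Combining the two bounds shows that $g$ is Lipschitz on $[0,r^\ast-\delta]$ for every $\delta>0$. Its difference quotients are sandwiched between $-nV(A_r[n-1],B)$ and $-nV(A_{r+s}[n-1],B)$, so $g$ has one-sided derivatives that are nondecreasing, and convexity follows. I expect the main obstacle to be the upper bound, specifically the polytope computation together with the approximation step. One must check the identity $K\ominus tB=\bigcap_j(\cdots)$ exactly, handle facets that vanish as $t$ grows, and justify Hausdorff continuity of $P\mapsto P\ominus rB$. That continuity can fail when $A_r$ becomes lower dimensional, which is why the argument stays strictly inside $[0,r^\ast)$ and treats the endpoint by monotonicity.
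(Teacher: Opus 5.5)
Your proof is correct. Its engine is the same as the paper's: the rate at which $\Vol_n(A\ominus rB)$ decreases is $nV(A_r[n-1],B)=nW_1(A\ominus rB;B)$, and this is nonincreasing in $r$ because $A_r$ shrinks and mixed volumes are monotone. Where you differ is in how you obtain that rate.

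The paper cites the Bol--Matheron derivative formula (\cref{fact:f-dir}), which needs $B$ proper. It then handles lower-dimensional $B$ separately, by thickening to $B+B_n(0,\eps)$ and proving pointwise convergence $f_{A,B_\eps}\to f_{A,B}$. You instead derive a two-sided bound on the decrement directly:
\begin{itemize}
\item the lower bound from $A_{r+s}+sB\subseteq A_r$ together with the Steiner expansion;
\item the upper bound from the facet-derivative formula for polytopes together with outer polytope approximation of $A$.
\end{itemize}
Because you approximate $A$ rather than $B$, and both the Steiner expansion and the identity $nV(P[n-1],B)=\sum_j h_B(u_j)\Vol_{n-1}(F_j)$ hold for arbitrary $B$, your argument is self-contained. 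It covers improper $B$ (e.g.\ segments, the \cite{RR26} case, which matters downstream) with no extra limiting step. The cost is the polytope computation, which the paper outsources.

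Two remarks.
\begin{itemize}
\item Your sandwich gives convexity immediately. For $r_1<r_2<r_3$, the lower bound on $[r_1,r_2]$ and the upper bound on $[r_2,r_3]$ place the two chord slopes on either side of $-nV(A_{r_2}[n-1],B)$. So you do not need the discussion of a.e.\ derivatives or absolute continuity.
\item At the endpoint, ``$g\ge 0$ is nonincreasing'' is not enough by itself to glue on the zero function, since a downward jump at $r^\ast$ would break convexity. You also need $g(r)\to 0$ as $r\uparrow r^\ast$. This follows from continuity from above of Lebesgue measure: by closedness of $A$, $\bigcap_{r<r^\ast}A_r=A_{r^\ast}$, and this set has empty interior.
\end{itemize}
Your concern about Hausdorff continuity is also unnecessary here. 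For outer approximations, $P_k\ominus rB$ decreases to $A\ominus rB$ whenever the latter is nonempty, so convergence holds even where $A_r$ is lower-dimensional.
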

The proof of \cref{claim:g-convex} follows from a formula for the derivative of $g_{A,B}(r)$ in terms of \emph{quermassintegrals} due to Bol~\cite{Bol43} and Matheron~\cite{Matheron78}. See \cref{sec:prelim} for details.
By carefully re-writing the function $F$ from above, we can directly show that the influence of each coordinate can be written in the form $g_{A, B}(r)$ (for carefully defined $A, B$), whereby we are able to conclude convexity in that coordinate, and thus all together, we obtain separate convexity. A formal proof that $F$ is separately convex is given in \cref{sec:proofOfVolume}.

Finally, it remains to show why a bound of the form $\Pr_{\sigma\in\{-1,1\}^m}[F(\sigma)>0]\ge e^{-2n}$ suffices for our target one-sided volume bound of \cref{clm:informalVolumeTarget}. Indeed, let us fix a vector $\sigma \in \{-1,1\}^m$ such that $F(\sigma)>0$. Letting $C_P = \sum_{i \in [m]: \sigma_i = 1} C_i$ and $C_N = \sum_{i \in [m]: \sigma_i = -1} C_i$, this implies that 
\[
F(\sigma) = \Vol(C_P \ominus C_N) >0,
\]
or equivalently, that $C_N \subseteq C_P$. By using known characterizations of convex bodies in terms of support functions, this then implies that for every $u \in \R^n$, 
\[
h_{C_N}(u) \leq h_{C_P}(u),
\]
or equivalently, that $h_{C_P}(u) - h_{C_N}(u)  \geq 0$. By again using properties of support functions, this is in turn equivalent to the fact that 
\[
\sum_{i = 1}^m \sigma_i \cdot h_{C_i}(u) \geq 0 \text{ for all } u \in \R^n.
\]
Thus, we know that 
\[
\Pr_{\delta \sim \{\pm1\}^m} \left [\sum_{i = 1}^m \sigma_i \cdot h_{C_i}(u) \geq 0 \text{ for all } u \in \R^n \right] \geq e^{-2n},
\]
and by simply replacing $\delta$ with $-\delta$, we then have that 
\[
\Pr_{\delta \sim \{\pm1\}^m} \left [\sum_{i = 1}^m \sigma_i \cdot h_{C_i}(u) \leq 0 \text{ for all } u \in \R^n \right] \geq e^{-2n}.
\]
A simple transformation of the convex bodies can then be used to show that the same statement holds when sampling from $[-1,1]^m$ rather than just $\{\pm1\}^m$, thereby yielding our desired volume bound (and sparsifiers). 

We remark that the recent work of~\cite{BKMT26} also proves a bound similar to~\cite[Lemma~18]{RR26} for the sparsification polytope arising in the context of sparsifying abelian Cayley graphs. Their paper uses very different techniques from the ones outlined here though, relying on group symmetries to prove such a bound. However, once the volume lower bound is established, the remaining machinery of~\cite{RR26} can be invoked as it is to obtain a sparsifier, as is also done in this paper. 

It would be interesting to find further such instances in sparsification where optimal sparsifiability follows volume lower bounds analogous to \cite[Lemma~18]{RR26}.

\subsection{Organization}

In \cref{sec:prelim} we present most of the necessary convex geometric preliminaries. In \cref{sec:proofOfVolume}, we present a formal proof of the necessary one-sided volume bound, which we then use in \cref{sec:sparsifiersConvex} to build sparsifiers for sums of support functions of convex bodies. In \cref{sec:sparsifiersSemiNorm}, we show this implies optimal size sparsification of sums of seminorms, and in \cref{sec:applications}, we further use these seminorm sparsifiers to build optimal sparsifiers of hypergraph cut functions and sums of symmetric submodular functions. 

\subsection{Acknowledgments}

The main result of this paper was discovered through a sequence of discussions with ChatGPT 5.5-Pro and 5.6-Sol. This initially began with an exploration of hypergraph cut sparsification, which the authors later realized could be extended to sparsification of seminorms, and ultimately its final form of ``Minkowski sum sparsification'' in conversation with ChatGPT. These initial results had size complexity $O \left (\frac{n \log(1 / \eps)}{\eps^2} \right ) $, which the authors then simplified and improved via careful use of the framework of \cite{RR26}, again in conversation with ChatGPT. All of the text in this paper is written by the authors.

A. B. thanks Pravesh Kothari for insightful conversations related to sparsification. J. B. thanks Venkatesan Guruswami for valuable discussions and encouragement. A. P. thanks Sanjeev Khanna and Madhu Sudan for useful conversations related to sparsification.

\section{Convex Geometry Preliminaries}\label{sec:prelim}

We now formally define a number of concepts in convex geometry which we need for our study. We closely follow the presentations of Gruber~\cite{Gruber07} and Schneider~\cite{Schneider14}. Given a positive integer $n \in \mathbb N$, we say that a set $A \subseteq \R^n$ is \emph{convex} if for all $x, y \in A$ and all scalars $\lambda \in [0, 1]$, we have that $\lambda x + (1-\lambda)y \in A$. If in addition $A$ is also non-empty and compact (i.e., closed with finite volume), then we say that $A$ is a \emph{convex body}. We let $\cK^n$ denote the set of all such convex bodies $A$ in $\R^n$.

We say that a convex body $A$ is \emph{proper} if $A$ has non-zero interior. That is, there is a point $x \in A$ and a radius $r > 0$ such that the Euclidean ball $B_n(x, r) \subseteq A$. We let $\cK^n_n \subsetneq \cK^n$ denote the set of proper convex bodies in $\R^n$.

We say that $A \in \cK^n$ is \emph{symmetric} (a.k.a. \emph{centrally symmetric}) if for all $x \in \R^n$, we have that $x \in A$ if and only if $-x \in A$. We let $\cK^n_{\sym}$ denote the set of symmetric convex bodies, and $\cK^n_{n, \sym}$ denote the set of proper, symmetric convex bodies.

\subsection{Minkowski Sums and Differences} Convex bodies support a number of algebraic operations. Given a scalar $\lambda \in \mathbb R$ and a convex body $A \in \cK^n$, we let $\lambda A := \{\lambda x \mid x \in A\}$. Furthermore, given a vector $y \in \mathbb R^n$, we denote the \emph{translate} $A + y := \{x + y \mid x \in A\}$, with $A - y := A + (-y)$.

 For any $A, B \in \cK^n$ we can define their \emph{Minkowski sum} to be
\[
    A + B := \bigcup_{y \in B} (A + y) = \{x + y \mid x \in A, y \in B\} \in \cK^n.
\]
A more subtle notion one can also define is the \emph{Minkowski difference}, which is defined as (see e.g., \cite[p.146]{Schneider14}\footnote{Schneider~\cite{Schneider14} uses the notation $A \div B$ for Minkowski difference.}) 
\begin{align}
A \ominus B := \bigcap_{y \in B} (A - y) = \{y \in \mathbb R^n \mid B + y \subseteq A\}.\label{eq:minkowski-diff}
\end{align}
It is easy to see that $A \ominus B$ is always convex, as it is defined to be an intersection of convex bodies. However, $A \ominus B$ may be empty and thus disqualify it from being a convex body. It is easy to observe that $A \ominus B$ is nonempty if and only if there exists a translate of $B$ contained in $A$. In the special case in which $A$ and $B$ are both centrally symmetric, we can see that $A \ominus B$ (if nonempty) is centrally symmetric as well, as $B + y \subseteq A$ if and only if $-(B+y) = B - y \subseteq -A = A$. In particular, this means that $0 \in A \ominus B$, so $B \subseteq A$ by \cref{eq:minkowski-diff}. As such, we have the following fact (observed in, e.g., \cite{RR26}).

\begin{fact}\label{fact:central-contained}
Let $A, B \in \cK^n_{\sym}$ have the property that $A \ominus B \neq \emptyset$, then $B \subseteq A$.
\end{fact}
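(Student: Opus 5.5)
The argument has two steps. First I show that $A \ominus B$ contains the origin. Then I read off $B \subseteq A$ directly from the definition in \cref{eq:minkowski-diff}.

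By hypothesis there is some $y \in A \ominus B$, which by definition means $B + y \subseteq A$.

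Next I use central symmetry of both sets. Negating the inclusion gives $-(B+y) \subseteq -A$. Since $-B = B$, the left side is $B - y$, and since $-A = A$, we get $B - y \subseteq A$. So $-y \in A \ominus B$ as well.

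Now I use convexity of $A \ominus B$, which holds because it is an intersection of the convex sets $A - b$ for $b \in B$. Since $y$ and $-y$ both lie in $A \ominus B$, so does their midpoint $0 = \tfrac12 y + \tfrac12(-y)$.

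Finally, $0 \in A \ominus B$ means $B + 0 \subseteq A$, that is, $B \subseteq A$.

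Each step is a one-line check, so there is no real obstacle. The only point that needs care is that the convexity of $A \ominus B$ holds even though $A \ominus B$ need not be a convex body (it could be empty in general). Convexity is all the midpoint step uses, and nonemptiness is supplied by the hypothesis.
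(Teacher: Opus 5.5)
Your proof is correct and follows the same route as the paper: central symmetry of $A$ and $B$ makes $A \ominus B$ centrally symmetric, so it contains $0$, and then $B \subseteq A$ by \cref{eq:minkowski-diff}. The paper leaves implicit the step that a nonempty, convex, centrally symmetric set contains $0$; your midpoint argument with $y$ and $-y$ spells it out.
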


We now state some standard facts about Minkowski addition and subtraction (e.g., Chapter~3 of \cite{Schneider14} or Proposition~11 in \cite{RR26}).

\begin{fact}\label{fact:basics}
For any $A, B, C \in \cK^n$ and scalars $\alpha, \beta \ge 0$, we have that.
\begin{enumerate}[(1)]
\item $(A + B) + C = A + (B + C)$.
\item $(A \ominus B) \ominus C = A \ominus (B + C)$.
\item $(A + B) \ominus B = A$.
\item $\alpha A + \beta A = (\alpha + \beta) A$
\item $\alpha A + \alpha B = \alpha (A + B)$.
\item $\alpha A \ominus \beta A = (\alpha - \beta) A$ if $\alpha \ge \beta$.
\end{enumerate}
\end{fact}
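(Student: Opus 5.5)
Each of the six identities follows directly from the definitions of Minkowski sum and difference in \cref{eq:minkowski-diff}, using only convexity and compactness. I would take the six items in order.

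For (1), both sides equal $\{a+b+c \mid a\in A, b\in B, c\in C\}$, by associativity of vector addition.

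For (2), I would rewrite the difference as a containment condition: $y \in A\ominus B$ if and only if $B+y\subseteq A$. Then
\[
y\in (A\ominus B)\ominus C \iff C+y\subseteq A\ominus B \iff \text{for all } c\in C,\ B+c+y\subseteq A.
\]
The last condition says exactly that $B+C+y\subseteq A$, i.e.\ $y\in A\ominus(B+C)$.

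For (3), the inclusion $A\subseteq (A+B)\ominus B$ is immediate, since $B+a\subseteq A+B$ for every $a\in A$. The reverse inclusion is the only substantive step, and I expect it to be the main obstacle. Suppose $B+y\subseteq A+B$ but $y\notin A$. Since $A$ is a compact convex set, there is a separating vector $u$ with $\langle y,u\rangle> h_A(u)$. Using compactness of $B$, pick $b\in B$ attaining $h_B(u)$. Then $\langle y+b,u\rangle> h_A(u)+h_B(u)=h_{A+B}(u)$, where the equality $h_{A+B}=h_A+h_B$ holds directly from the definition. This contradicts $y+b\in A+B$. Convexity of $A$ is used essentially here, through the separation theorem.

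For (4), the inclusion $(\alpha+\beta)A\subseteq \alpha A+\beta A$ is trivial. For the reverse, I would write
\[
\alpha x+\beta y=(\alpha+\beta)\Bigl(\tfrac{\alpha}{\alpha+\beta}x+\tfrac{\beta}{\alpha+\beta}y\Bigr)
\]
and apply convexity of $A$, treating $\alpha=\beta=0$ separately, where both sides are $\{0\}$.

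For (5), both sides equal $\{\alpha a+\alpha b\}$ by linearity of scaling.

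For (6), I would use (4) to write $\alpha A=(\alpha-\beta)A+\beta A$. Then (3), applied with $(\alpha-\beta)A$ and $\beta A$ in the roles of $A$ and $B$, gives $\alpha A\ominus\beta A=(\alpha-\beta)A$.
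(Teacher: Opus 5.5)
Your proof is correct. The paper gives no argument of its own for this fact; it cites Chapter~3 of \cite{Schneider14} and Proposition~11 of \cite{RR26}, so your write-up is a self-contained substitute for that citation rather than a competing route.

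What your version adds is that it shows exactly where each hypothesis enters:
\begin{itemize}
\item Items (1), (2) and (5) are purely set-theoretic. In particular, your containment-based argument for (2) remains valid when $A \ominus B = \emptyset$, a case the paper explicitly allows.
\item Item (3) is the only step with real content. It needs $A$ closed and convex, for the strict separation, and $B$ compact, so that $h_B(u)$ is finite and attained. You handle it by separation together with $h_{A+B} = h_A + h_B$, which is the standard cancellation-law argument in the cited references.
\item Item (4) uses convexity of $A$.
\item Item (6) correctly reduces to (3) and (4), applied to the convex bodies $(\alpha-\beta)A$ and $\beta A$. This covers the degenerate cases $\beta = 0$ and $\alpha = \beta$ without separate treatment.
\end{itemize}
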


As previously discussed, for arbitrary $A, B \in \cK^n$, $A \ominus B$ may be empty. However, if we scale $B$ by some factor $\lambda B$, then the \emph{parallel body} $A \ominus \lambda B$ is non-empty for $\lambda \ge 0$ sufficiently small. As such, we can define the \emph{inradius} of $A$ relative to $B$ as follows \cite[p.148]{Schneider14}.
\begin{align}
    r(A; B) := \max \{\lambda \ge 0 \mid \exists x \in \R^n, \lambda B + x \subseteq A\}.\label{eq:inrad}
\end{align}
Note that the maximum here is well-defined as $A$ and $B$ are compact.

\subsection{Volumes and Mixed Volumes} Given a convex body $A \in \cK^n$, we let $\Vol_n(A)$ denote the standard Euclidean volume of $A$. Note that $\Vol_n(A) = 0$ if and only if $A \in \cK^n \setminus \cK^n_n$. By convention, we also let $\Vol_n(\emptyset) = 0$. For some of our computations, we also need a notion of \emph{mixed volume}. In particular, for any $A_1, \hdots, A_n \in \cK^n$, we define their mixed volume to be (see \cite[Proposition 6.7]{Gruber07})
\[
    \Vol_n(A_1, \hdots, A_n) := \frac{1}{d!}\left[\sum_{S \subseteq [n]} (-1)^{n-|S|}\Vol_n\left(\sum_{i \in S}A_i\right)\right].
\]
The key defining property of mixed volumes is that for any $m \in \N$ and $A_1, \hdots, A_m \in \cK^n_n$ and scalars $\lambda_1, \hdots, \lambda_m \ge 0$, we have that (\cite[Theorem 6.5]{Gruber07})
\[
    \Vol_n(\lambda_1 A_1 + \cdots + \lambda_m A_m) := \sum_{(i_1, \hdots, i_n) \in [m]^n} \lambda_{i_1} \cdots \lambda_{i_n} \Vol_n(A_{i_1}, \hdots, A_{i_d}).
\]
As such, we have that $\Vol_n(A, \hdots, A) = \Vol_n(A)$ for any $A \in \cK^n$. Further observe that both volume and mixed volume are translation invariant~\cite[Proposition 6.5]{Gruber07}. That is, $\Vol_n(A_1, \hdots, A_n) = \Vol_n(A_1 + x_1, \cdots, A_n + x_n)$ for all $x_1, \hdots, x_n \in \R^n$.

\begin{fact}[Theorem 6.9 and Corollary 6.1~\cite{Gruber07}]\label{fact:mixed}
Mixed volume is always nonnegative and is monotone in all of its arguments. That is, for any $A_1, \hdots, A_n, B_1, \hdots, B_n \in \cK^n$ with $A_i \subseteq B_i$ for all $i \in [n]$, we have that
\[
    0 \le V(A_1, \hdots, A_n) \le V(B_1, \hdots, B_n).
\]
\end{fact}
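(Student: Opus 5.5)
Since this Fact is quoted from Gruber (Theorem 6.9 and Corollary 6.1), the plan is to derive it from the multilinear expansion of volume together with two classical ingredients: the Brunn--Minkowski inequality, and the representation of mixed volumes as integrals of support functions against mixed area measures.

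\emph{Nonnegativity.} First reduce to proper bodies. Replace each $A_i$ by $A_i+\delta B$, where $B$ is the Euclidean unit ball and $\delta>0$. Expanding $\Vol_n(\sum_i\lambda_i(A_i+\delta B))$ multilinearly shows that $\Vol_n(A_1+\delta B,\dots,A_n+\delta B)$ is a polynomial in $\delta$. Its value at $\delta=0$ is $\Vol_n(A_1,\dots,A_n)$, so nonnegativity for the perturbed bodies passes to the limit $\delta\to0$.

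For proper bodies, use the Minkowski integral representation
\[
\Vol_n(A_1,\dots,A_n)=\frac1n\int_{S^{n-1}}h_{A_1}(u)\,dS(A_2,\dots,A_n;u),
\]
where the mixed area measure $S(A_2,\dots,A_n;\cdot)$ is nonnegative. It is enough to establish this for polytopes and pass to limits by continuity. By translation invariance we may assume $0\in A_1$, so that $h_{A_1}\ge0$, and the integral is then nonnegative.

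An alternative route to nonnegativity is purely inductive. Write $\Vol_n(A_1,\dots,A_n)$ as $\frac1n\sum_F h_{A_1}(u_F)\,\Vol_{n-1}(\text{faces})$ for polytopes, and invoke nonnegativity of $(n-1)$-dimensional mixed volumes as the inductive hypothesis.

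\emph{Monotonicity.} It suffices to change one argument at a time; by symmetry of mixed volume, take it to be the first. If $A_1\subseteq B_1$, then $h_{A_1}\le h_{B_1}$ pointwise. Integrating against the same nonnegative measure $S(A_2,\dots,A_n;\cdot)$ gives
\[
\Vol_n(A_1,A_2,\dots,A_n)\le \Vol_n(B_1,A_2,\dots,A_n).
\]
Here no translation normalization is needed: the difference $h_{B_1}-h_{A_1}$ is already nonnegative. Chaining $n$ such steps, each using symmetry to move the changed argument to the first slot, yields the full inequality. Combined with nonnegativity, this gives $0\le V(A_1,\dots,A_n)\le V(B_1,\dots,B_n)$.

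\emph{Main obstacle.} The key step is justifying the integral representation with a nonnegative mixed area measure. The plan is as follows:
\begin{enumerate}[(1)]
\item For polytopes in general relative position, compute $\frac{d}{dt}\Vol_n(P+tQ)$ via facet decomposition. This gives the formula with the measure equal to the mixed $(n-1)$-volumes of facets.
\item Those facet mixed volumes are nonnegative by induction on dimension.
\item Extend to general convex bodies by Hausdorff approximation, using continuity of mixed volumes and weak continuity of area measures.
\end{enumerate}
Since the paper treats this as a standard cited fact, in the write-up I would cite Gruber/Schneider for the representation rather than reprove it.
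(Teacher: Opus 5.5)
Your proposal is correct and matches the paper's treatment. The paper gives no proof of its own and simply cites Gruber, and your sketch is the standard textbook proof of the cited result: the representation $V(A_1,\dots,A_n)=\frac1n\int_{S^{n-1}}h_{A_1}\,dS(A_2,\dots,A_n;\cdot)$ (or the facet formula for polytopes), nonnegativity of the mixed area measure by induction on dimension, and approximation.

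One step is glossed over. Differentiating $\Vol_n(P+tQ)$ only yields the area measure $S(P,\dots,P;\cdot)$. To obtain $S(P_2,\dots,P_n;\cdot)$ as mixed $(n-1)$-volumes of facets, you must polarize: apply the formula to $P=\sum_j\lambda_jP_j$, use $F(P,u)=\sum_j\lambda_jF(P_j,u)$, and compare the coefficients of $\lambda_2\cdots\lambda_n$.
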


\subsection{Quermassintegrals and Differentiability}\label{sec:differentiability}

In this work, we are only concerned with mixed volumes $\Vol_n(A_1, \hdots, A_n)$ where the set $\{A_1, \hdots, A_n\}$ has at most two elements. This special case has the name \emph{quermassintegrals} and has the following special notation. For convex bodies $A, B \in \cK^n$ and integer $i \in \{0, 1, \hdots, n\}$, we denote the $i$th quermassintegral as
\[
    W_i(A; B) := \Vol_n(\underbrace{A, \hdots, A}_{n-i}, \underbrace{B, \hdots, B}_i).
\]
In particular, $W_0(A; B) = \Vol_n(A)$ and $W_n(A; B) = \Vol_n(B)$.

A key application of these quermassintegrals is differentiating volumes of Minkowski differences. More precisely, given convex bodies with $A, B \in \cK^n$, we define for all $\lambda \ge 0$ the quantity
\[
    f_{A,B}(\lambda) :=
    \Vol_n(A \ominus \lambda B).
\]
We observe that this function is continuous at $\lambda = r(A;B)$.
\begin{proposition}
$\Vol_n(A \ominus r(A;B) B) = 0$.
\end{proposition}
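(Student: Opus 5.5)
We prove the statement by contradiction: if the set $D := A \ominus r B$, with $r := r(A;B)$, had positive volume, then one could fit a slightly larger dilate of $B$ inside $A$, contradicting the maximality in \eqref{eq:inrad}. Throughout we assume $B$ is a convex body, so $B$ is nonempty and bounded; the case $\Vol_n(A)=0$ is trivial since then $D \subseteq A - y$ for any $y \in rB$ has volume zero. We also assume $B$ is not a single point; otherwise $r(A;B)$ would be infinite and the statement vacuous.

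Suppose $\Vol_n(D) > 0$. Since $D$ is convex with nonzero volume, it contains a Euclidean ball $B_n(x, \delta)$ for some $x$ and some $\delta > 0$. By \eqref{eq:minkowski-diff}, every point $z \in B_n(x,\delta)$ satisfies $rB + z \subseteq A$. Hence
\[
rB + x + B_n(0,\delta) \subseteq A .
\]

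We next absorb a small extra dilate of $B$ into the ball. Pick any $b_0 \in B$ and set $R := \max_{b \in B} \|b - b_0\| < \infty$, using compactness of $B$. For $\eta > 0$ small enough that $\eta R \le \delta$, and for every $b \in B$,
\[
(r+\eta) b = r b + \eta b_0 + \eta (b - b_0),
\qquad \|\eta(b-b_0)\| \le \delta .
\]
Therefore
\[
(r+\eta) B + (x - \eta b_0) \subseteq rB + x + B_n(0,\delta) \subseteq A .
\]
So a translate of $(r+\eta)B$ lies in $A$. This contradicts the definition of $r(A;B)$ in \eqref{eq:inrad} as the maximal such $\lambda$, and we conclude $\Vol_n(A \ominus r(A;B) B) = 0$.

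The only delicate point is the step from positive volume of the convex set $D$ to $D$ containing a ball. This is the standard fact that a convex set in $\R^n$ with empty interior lies in an affine hyperplane and therefore has measure zero. The step uses only convexity of $D$, which holds because $D$ is an intersection of convex sets. The continuity claim preceding the proposition then follows from monotonicity: $A \ominus \lambda B$ is decreasing in $\lambda$, and its volume is $0$ for $\lambda \ge r(A;B)$.
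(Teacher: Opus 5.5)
Your proof is correct and is essentially the paper's argument: a ball $B_n(x,\delta)\subseteq A\ominus rB$ lets a translate of $(r+\eta)B$ fit inside $A$, which contradicts the maximality in \eqref{eq:inrad}. The only difference is bookkeeping. You verify the inclusion pointwise by recentering $B$ at $b_0$, whereas the paper takes $\lambda B + x \subseteq B_n(x,\eps)$ and then uses $(A\ominus rB)\ominus\lambda B = A\ominus(r+\lambda)B$ from \cref{fact:basics}.
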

\begin{proof}
Assume for sake of contradiction that $\Vol_n(A \ominus r(A;B) B) > 0$. Thus, there exists a Euclidean ball $B_n(x, \eps) \subseteq A \ominus r(A;B) B$. Since $B$ is compact, there exists some $\lambda > 0$ such that $\lambda B + x \subseteq B_n(x, \eps)$. That is, $\lambda B + x \subseteq A \ominus r(A;B) B$. By \cref{fact:basics}, this is equivalent to $x \in A \ominus (r(A;B) + \lambda) B$. By \cref{eq:inrad}, this implies that $r(A;B) \ge r(A;B) + \lambda$, a contradiction.
\end{proof}

Rather importantly, this function is (usually) differentiable.

\begin{fact}[\cite{Bol43,Matheron78}, adapted from Proposition~2.6 of \cite{RSG22}]\label{fact:f-dir}
Assume $A \in \cK^n$ and $B \in \cK^n_n$, then $f_{A,B}$ is differentiable in the range $(0, r(A; B))$ with
\begin{align}
    f'_{A,B}(\lambda) := -n W_1(A \ominus \lambda B; B).\label{eq:f-dir}
\end{align}
Furthermore, (\ref{eq:f-dir}) gives the right-derivative of $f_{A,B}(\lambda)$ at $\lambda = 0$ and the left-derivative at $\lambda = r(A;B)$.
\end{fact}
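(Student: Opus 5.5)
Since this is a classical result (Bol, Matheron), I would prove it by reducing to a one-sided estimate at a single point. Write $A_\lambda := A \ominus \lambda B$. By \cref{fact:basics}(2) and (4),
\[
A_\lambda \ominus \mu B = A \ominus (\lambda+\mu)B = A_{\lambda+\mu},
\]
so derivatives of $f_{A,B}$ at $\lambda$ reduce to derivatives at $0$ for the body $A_\lambda$. For $0\le\lambda<r(A;B)$ this body is proper. If $A$ itself is not proper then $r(A;B)=0$, because $B$ is proper, and there is nothing to prove.

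\textbf{Step 1 (lower bounds on both difference quotients).} Since $A_{\lambda+\mu}+\mu B\subseteq A_\lambda$, the mixed-volume expansion and \cref{fact:mixed} give
\[
f(\lambda)\ \ge\ \Vol_n(A_{\lambda+\mu}+\mu B)\ \ge\ f(\lambda+\mu)+n\mu\, W_1(A_{\lambda+\mu};B).
\]
Applying the same inclusion at $\lambda-\mu$ gives $(f(\lambda-\mu)-f(\lambda))/\mu\ge nW_1(A_\lambda;B)$. On $[0,r(A;B))$ the map $t\mapsto A_t$ is Hausdorff-continuous, and mixed volumes are continuous. Hence both one-sided quantities $-f'_{\pm}(\lambda)$, taken as $\liminf$s of the difference quotients, are at least $nW_1(A_\lambda;B)$.

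\textbf{Step 2 (concavity).} From the definition of $\ominus$ one checks that
\[
(1-t)A_{\lambda_1}+tA_{\lambda_2}\subseteq A_{(1-t)\lambda_1+t\lambda_2}.
\]
Combined with Brunn--Minkowski, this shows $f^{1/n}$ is concave on $[0,r(A;B)]$. Consequently the one-sided derivatives exist everywhere, $f$ is differentiable almost everywhere, $f$ is locally absolutely continuous, and $-f'_-(\lambda)\le -f'_+(\lambda)$.

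\textbf{Step 3 (matching upper bound; the main obstacle).} I need the reverse inequality. I would prove the integral identity
\[
f(0)-f(\lambda)=\int_0^\lambda nW_1(A_t;B)\,dt .
\]
My first attempt is a coarea/layer argument. For $x\in A$ set $\tau(x):=\sup\{t: x\in A_t\}$. This function is concave on $A$, since its superlevel sets $A_t$ are convex and nested as in Step 2. Then $A\setminus A_\lambda=\{\tau<\lambda\}$, and one shows that the volume of the thin shell $A_t\setminus A_{t+\mu}$ is $n\mu W_1(A_t;B)+o(\mu)$. I would do this by approximating $A_t$ by polytopes, where $A_t\ominus\mu B$ has facets moved inward by $\mu h_B(u_F)$ and the shell volume is $\mu\sum_F h_B(u_F)\,\mathrm{area}(F)+o(\mu)=n\mu W_1+o(\mu)$, and then passing to limits using continuity of $W_1$. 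Step 1 already gives $\ge$ in the identity, so I only need the upper estimate. With absolute continuity from Step 2, the identity forces $-f'=nW_1(A_t;B)$ almost everywhere. Continuity of $t\mapsto W_1(A_t;B)$ and monotonicity of the one-sided derivatives of the concave function $f^{1/n}$ then upgrade this to equality of both one-sided derivatives at every point of $(0,r(A;B))$. The same argument gives the right derivative at $0$ and the left derivative at $r(A;B)$. The delicate point I expect is the polytope approximation: $\ominus$ is not continuous in general, so I would approximate $A$ from outside by polytopes $P_k\downarrow A$, use $P_k\ominus tB\downarrow A_t$, and control the shells uniformly in $k$.
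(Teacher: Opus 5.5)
The paper does not prove this statement; it imports it from Bol and Matheron via \cite{RSG22}, so any self-contained argument is necessarily a different route. Your Steps 1 and 2 are correct. The inclusion $A_{\lambda+\mu}+\mu B\subseteq A_\lambda$ together with the mixed-volume expansion gives $-f'_{\pm}(\lambda)\ge nW_1(A_\lambda;B)$. The inclusion $(1-t)A_{\lambda_1}+tA_{\lambda_2}\subseteq A_{(1-t)\lambda_1+t\lambda_2}$ together with Brunn--Minkowski gives concavity of $f^{1/n}$.

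\textbf{The gap is in Step 3.} After translating $B$ so that $0\in B$ (this only translates each $A_t$), the sets $A_t$ are nested and $\Vol_n(A_t\setminus A_{t+\mu})=f(t)-f(t+\mu)$. So your shell estimate $\Vol_n(A_t\setminus A_{t+\mu})=n\mu W_1(A_t;B)+o(\mu)$ \emph{is} the assertion $f'_+(t)=-nW_1(A_t;B)$. The level-set function $\tau$ adds nothing, and all the content sits in the polytope limit, which you leave open. For each $P_k$ the error is $o(\mu)$. But passing an $o(\mu)$ statement through $k\to\infty$ requires uniformity in $k$, and you give no mechanism for it.

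\textbf{The fix} is to pass to the limit only in inequalities without an error term.
\begin{itemize}
\item For a full-dimensional polytope $Q$ (with $0\in B$), your slab computation gives $\Vol_n(Q)-\Vol_n(Q\ominus\mu B)\le\mu\sum_F h_B(u_F)\,\mathrm{vol}_{n-1}(F)+o(\mu)=n\mu W_1(Q;B)+o(\mu)$.
\item Apply this to $Q=P_k\ominus tB$ and combine with Step 1. Then $f_{P_k,B}$ has the continuous right derivative $-nW_1(P_k\ominus tB;B)$ on $[0,r(P_k;B))$, hence is $C^1$ there.
\item This derivative is nondecreasing in $t$ by monotonicity of $W_1$. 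So $f_{P_k,B}$ is convex and lies above its tangent lines: $\Vol_n(P_k\ominus tB)-\Vol_n(P_k\ominus(t+\mu)B)\le n\mu W_1(P_k\ominus tB;B)$, with no error term.
\item Since $P_k\downarrow A$ forces $P_k\ominus sB\downarrow A_s$ in the Hausdorff metric for all $s\le r(A;B)\le r(P_k;B)$, this inequality survives $k\to\infty$.
\item Together with Step 1 you get $n\mu W_1(A_{t+\mu};B)\le f(t)-f(t+\mu)\le n\mu W_1(A_t;B)$ whenever $0\le t<t+\mu\le r(A;B)$.
\item Continuity of $t\mapsto W_1(A_t;B)$ on $[0,r(A;B)]$ then yields both one-sided derivatives everywhere, including at the endpoints. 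Step 2 and the a.e.\ upgrade become unnecessary.
\end{itemize}
Alternatively, integrate the polytope formula and use dominated convergence with the bound $W_1(P_k\ominus tB;B)\le W_1(P_1;B)$, then apply the fundamental theorem of calculus.

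Finally, your remark that there is ``nothing to prove'' for improper $A$ is inaccurate. If $\dim A=n-1$, then $f\equiv 0$ while $W_1(A;B)>0$. So the endpoint clause is actually false there, and the statement must be read under the assumption $r(A;B)>0$.
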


For further context and generalizations of \cref{fact:f-dir}, see some papers by Hern\'{a}ndez Cifre and Saor\'{\i}n~\cite{CS10diff,CS10vol}.

Extending \cref{fact:f-dir}, we show that $f_{A,B}$ is convex on the entire interval $[0, \infty)$, even when $B$ is improper.

\begin{lemma}\label{lemma:f-convex}
For any $A, B \in \cK^n$, we have that $f_{A, B}$ is convex on $[0, \infty)$.
\end{lemma}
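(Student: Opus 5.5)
Our plan is to first settle the case where $B$ is proper using \cref{fact:f-dir}, and then reach the general case by approximating $B$ with proper bodies.

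\textbf{Case 1: $B\in\cK^n_n$.} Set $r=r(A;B)$. For $\lambda>r$ the set $A\ominus\lambda B$ is empty, so $f_{A,B}\equiv 0$ on $(r,\infty)$. By the preceding proposition, $f_{A,B}(r)=0$ as well, so $f_{A,B}$ vanishes on all of $[r,\infty)$.

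On $[0,r]$ we use \cref{fact:f-dir}, which gives $f'_{A,B}(\lambda)=-nW_1(A\ominus\lambda B;B)$. This derivative is non-decreasing in $\lambda$. Indeed, if $\lambda\le\mu$ then by \cref{fact:basics}(2),(4)
\[
A\ominus\mu B=(A\ominus\lambda B)\ominus(\mu-\lambda)B\subseteq A\ominus\lambda B,
\]
where the inclusion holds because $0\in(\mu-\lambda)B$ after translating $B$. This translation is harmless, since it only translates the difference body and so leaves volumes unchanged. Monotonicity of mixed volumes (\cref{fact:mixed}) then gives $W_1(A\ominus\mu B;B)\le W_1(A\ominus\lambda B;B)$. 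A continuous function on $[0,r]$ that is differentiable on the interior with non-decreasing derivative is convex there.

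It remains to glue the two pieces at $r$. The left-derivative at $r$ is $-nW_1(\cdot)\le 0$, and the right-derivative is $0$, so the one-sided slopes are non-decreasing across $r$. Since $f_{A,B}$ is also continuous at $r$ (by the proposition), the function is convex on $[0,\infty)$. One caveat: if $A\ominus rB$ is a lower-dimensional body, we interpret it via the convention $\Vol_n(\emptyset)=0$, and monotonicity of $W_1$ is still fine, because mixed volumes are defined for all of $\cK^n$. If $r=0$, then $f_{A,B}$ is identically $0$ on $(0,\infty)$ with $f_{A,B}(0)=\Vol_n(A)\ge 0$, which is convex.

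\textbf{Case 2: $B$ improper.} Let $B_\delta=B+\delta B_n(0,1)$, which is proper for every $\delta>0$. We will show that $f_{A,B_\delta}(\lambda)\to f_{A,B}(\lambda)$ pointwise as $\delta\to0$. A pointwise limit of convex functions is convex, so this finishes the proof.

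\emph{Main obstacle: this pointwise limit.} The sets $A\ominus\lambda B_\delta$ are decreasing in $\delta$, and their intersection over $\delta>0$ is $A\ominus\lambda B$. To see this, note that $x+\lambda B+\lambda\delta B_n\subseteq A$ for all $\delta$ implies $x+\lambda B\subseteq A$, because $A$ is closed. For decreasing compact sets, volume is continuous from above, so $\Vol_n(A\ominus\lambda B_\delta)\to\Vol_n\bigl(\bigcap_{\delta} A\ominus\lambda B_\delta\bigr)=\Vol_n(A\ominus\lambda B)$. The only thing to check is that the empty-set cases behave correctly; they do, since the volumes involved are finite and the convention $\Vol_n(\emptyset)=0$ is consistent with the limit.
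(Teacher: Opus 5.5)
Your Case~1 is essentially the paper's argument. You state the monotonicity of $f'_{A,B}$ in the correct direction and you handle the translation needed for $A\ominus\mu B\subseteq A\ominus\lambda B$. Your plan for Case~2, approximating $B$ by $B_\delta$ and passing to a pointwise limit, is also the paper's. The gap is in the pointwise-limit step itself, where you have the monotonicity backwards.

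Since $B\subseteq B_\delta$, we have $A\ominus\lambda B_\delta\subseteq A\ominus\lambda B$, and these sets \emph{increase} as $\delta\downarrow 0$. So the intersection $\bigcap_{\delta>0}(A\ominus\lambda B_\delta)$ is the limit as $\delta\to\infty$, which is empty for $\lambda>0$ because $A$ is bounded; it is not $A\ominus\lambda B$. Your closedness argument only re-proves the trivial inclusion $A\ominus\lambda B_\delta\subseteq A\ominus\lambda B$, and continuity of measure from above does not apply.

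The relevant set is the union $\bigcup_{\delta>0}(A\ominus\lambda B_\delta)$, which is in general strictly smaller than $A\ominus\lambda B$. Take $A=[-1,1]^2$, $B=[-1,1]\times\{0\}$ and $\lambda=\tfrac12$. For $0<\delta<1$ one gets $A\ominus\lambda B_\delta=[-\tfrac{1-\delta}{2},\tfrac{1-\delta}{2}]\times[-1+\tfrac{\delta}{2},1-\tfrac{\delta}{2}]$. Their union is the open rectangle $(-\tfrac12,\tfrac12)\times(-1,1)$, which misses the boundary of $A\ominus\lambda B=[-\tfrac12,\tfrac12]\times[-1,1]$. So convergence of the volumes needs an argument you have not given.

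The missing idea is that interior points are captured. If $x+\eta B_n(0,1)\subseteq A\ominus\lambda B$ with $\eta>0$ and $\lambda>0$, then by \cref{fact:basics}(2), $x\in(A\ominus\lambda B)\ominus\eta B_n(0,1)=A\ominus\lambda B_{\eta/\lambda}$. Hence $\mathrm{int}(A\ominus\lambda B)\subseteq\bigcup_{\delta>0}(A\ominus\lambda B_\delta)\subseteq A\ominus\lambda B$. Since the boundary of a convex set is Lebesgue-null, continuity of measure from \emph{below} gives $f_{A,B_\delta}(\lambda)\to f_{A,B}(\lambda)$.

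The paper reaches the same conclusion differently. It writes $A\ominus\lambda B_\delta=(A\ominus\lambda B)\ominus\lambda\delta B_n(0,1)$ and invokes \cref{fact:f-dir} for the proper body $B_n(0,1)$. One-sided differentiability at $0$ then gives right-continuity in $\delta$. The case $\Vol_n(A\ominus\lambda B)=0$ is handled by the sandwich $0\le f_{A,B_\delta}(\lambda)\le f_{A,B}(\lambda)$. With either repair, the rest of your proof goes through.
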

\begin{proof}
First assume $B$ is proper (i.e., $B \in \cK^n_n$), then by \cref{fact:f-dir}, we know that the derivative $f'_{A,B}(\lambda) = -n W_1(A \ominus \lambda B; B)$ for all $\lambda \in [0, r(A;B)]$. Note that this derivative is negative and decreasing by \cref{fact:mixed}. Thus, $f_{A,B}$ is convex, non-negative, and decreasing on $[0, r(A;B)]$. Since $f_{A,B}(\lambda) = 0$ for all $\lambda \ge r(A;B)$, we can immediately deduce that $f_{A,B}$ is convex on all of $[0, \infty)$. Alternatively, notice that the left derivative of $f_{A,B}$ at $r(A;B)$ is $-n W_1(A \ominus r(A;B) B; B) \le 0$, while the right derivative is $0$, so convexity is preserved along the entirety of $f_{A,B}$.

If $B \in \cK^n \setminus \cK^n_n$, for any $\eps > 0$, consider $B_{\eps} := B + B_n(0, \eps)$. Then, $f_{A, B_{\eps}}$ is convex on $[0, \infty)$. Thus, it suffices then to show for all $\lambda \in [0, \infty)$, we have pointwise convergence to $f_{A,B}$ as $\eps \to 0$. More formally, we claim that
\begin{align}
    \lim_{\eps \to 0^{+}} f_{A, B_{\eps}}(\lambda) = f_{A, B}(\lambda).\label{eq:f-converge}
\end{align}
This fact is trivial for $\lambda = 0$, as $f_{A,B_{\eps}}(0) = \Vol_n(A)$, so assume $\lambda > 0$. Observe that we always have that since $B \subseteq B_{\eps}$ for any $\eps > 0$, we have that $A \ominus \lambda B_\eps \subseteq A \ominus \lambda B$. If $\Vol_n(A \ominus \lambda B) = 0$ or $\lambda > r(A;B)$, then convergence is guaranteed by monotonicity and non-negativity.  Thus, assume that $\Vol_n(A \ominus \lambda B) > 0$. Using a second invocation of \cref{fact:f-dir}, we have that $\Vol_n((A \ominus \lambda B) \ominus B_n(0, \delta))$ is a continuous function of $\delta$. Thus, by \cref{fact:basics}, we have that
\[
    \lim_{\delta \to 0^{+}} \Vol_n(A \ominus \lambda B_{\delta/\lambda}) = \lim_{\delta \to 0^{+}} \Vol_n((A \ominus \lambda B) \ominus B_n(0, \delta)) = \Vol_n(A \ominus \lambda B).
\]
Thus, setting $\eps = \delta / \lambda$, we get \cref{eq:f-converge}. Thus, $f_{A,B}$ is convex.
\end{proof}

\subsection{Support Functions}

Our sparsification theorem will crucially rely on \emph{support functions} of convex bodies. For a nonempty compact convex set $K\subseteq \R^n$, its support function is
\[
  h_K(u)=\max_{x\in K}\ip{x, u},
  \qquad u\in \R^n.
\]
We have the following fact:
\begin{fact}[e.g., Chapter 13 of \cite{Rockafellar70} or \cite{DK00}]\label{fact:supportProperties}
Support functions satisfy
\[
  h_{K+L}=h_K+h_L,
  \qquad
  h_{tK}=t h_K \quad (t\ge 0),
\]
and
\[
  K\subseteq L
  \quad\Longleftrightarrow\quad
  h_K(u)\le h_L(u)\text{ for every }u\in \R^n.
\]
\end{fact}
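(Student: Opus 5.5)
The statement to prove is \cref{fact:supportProperties}. I would prove it directly from the definition $h_K(u)=\max_{x\in K}\ip{x,u}$, treating the two algebraic identities first and the containment characterization second.

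\textbf{Additivity and homogeneity.} For additivity, every point of $K+L$ has the form $x+y$ with $x\in K$, $y\in L$. Hence
\[
h_{K+L}(u)=\max_{x\in K,\,y\in L}\bigl(\ip{x,u}+\ip{y,u}\bigr).
\]
Since the two maximizations decouple, this equals $h_K(u)+h_L(u)$. For homogeneity with $t\ge 0$, the points of $tK$ are exactly $tx$ with $x\in K$. So
\[
h_{tK}(u)=\max_{x\in K} t\ip{x,u}=t\,h_K(u),
\]
where pulling $t$ out of the maximum uses $t\ge 0$.

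\textbf{Forward direction of the equivalence.} If $K\subseteq L$, then a maximum taken over the smaller set is at most the maximum over the larger set. This gives $h_K\le h_L$ pointwise.

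\textbf{Converse direction.} This is the only nontrivial step and relies on the separating hyperplane theorem. Suppose $h_K\le h_L$ everywhere, and assume for contradiction that some $x_0\in K\setminus L$ exists. Since $L$ is nonempty, compact and convex, let $p$ be the nearest point of $L$ to $x_0$, and set $u=x_0-p\neq 0$. The standard projection inequality gives $\ip{y-p,u}\le 0$ for all $y\in L$. Therefore
\[
h_L(u)=\ip{p,u}<\ip{p,u}+\|u\|^2=\ip{x_0,u}\le h_K(u),
\]
which contradicts the assumption. The projection inequality itself follows from minimality: $\|x_0-(p+s(y-p))\|^2\ge\|u\|^2$ for all $s\in[0,1]$, and differentiating at $s=0$ yields $\ip{y-p,u}\le 0$.

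I expect no real obstacle. The only care needed is to invoke compactness (or at least closedness) of $L$, so that the nearest point $p$ exists.
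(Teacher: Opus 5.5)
Your proof is correct. The two identities follow directly from the definition, and your converse is the standard separation argument, done via nearest-point projection onto the nonempty closed convex set $L$. The paper gives no proof of its own and simply cites Rockafellar and DK00, where the converse likewise rests on a closed convex set being the intersection of the half-spaces $\{x : \langle x,u\rangle \le h_L(u)\}$, which is the same separation idea you prove directly.
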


Likewise, we also have the following relationship between support functions and Minkowski subtraction:

\begin{fact}[e.g., Proposition 2 of \cite{DK00}]\label{fact:supportFnCharacterization}
For compact convex sets $A, B$,
\[
z\in A\ominus B
    \iff \ip{z,u}\leq h_A(u)-h_B(u)\text{ for every }u\in \R^n.
\]
\end{fact}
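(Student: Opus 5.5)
The plan is to reduce the statement to the containment characterization of \cref{fact:supportProperties}, using the second description of the Minkowski difference in \cref{eq:minkowski-diff}. By definition, $z \in A \ominus B$ if and only if $B + z \subseteq A$. Both $B+z$ and $A$ are nonempty compact convex sets. So by the equivalence $K \subseteq L \iff h_K \le h_L$ from \cref{fact:supportProperties}, this containment holds if and only if
\[
h_{B+z}(u) \le h_A(u) \quad \text{for every } u \in \R^n.
\]

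Next, we compute $h_{B+z}$. The translate $B + z$ is the Minkowski sum $B + \{z\}$, where the singleton $\{z\}$ is a (degenerate) compact convex set with $h_{\{z\}}(u) = \max_{x \in \{z\}} \ip{x,u} = \ip{z,u}$. The additivity $h_{K+L} = h_K + h_L$ from \cref{fact:supportProperties} then gives
\[
h_{B+z}(u) = h_B(u) + \ip{z,u}.
\]
This can also be checked directly from the definition, since $\max_{x\in B}\ip{x+z,u} = \max_{x \in B}\ip{x,u} + \ip{z,u}$.

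Substituting this into the previous condition shows that $z \in A \ominus B$ holds exactly when $h_B(u) + \ip{z,u} \le h_A(u)$ for all $u$. Rearranging gives the claimed inequality $\ip{z,u} \le h_A(u) - h_B(u)$. Every step is an equivalence, so both directions follow at once.

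The only substantive ingredient is the direction $h_K \le h_L \Rightarrow K \subseteq L$ in \cref{fact:supportProperties}. It rests on the separating hyperplane theorem for the compact convex set $L$: a point $x \in K \setminus L$ would be strictly separated by some $u$ with $\ip{x,u} > h_L(u)$, contradicting $h_K(u) \le h_L(u)$. Since that fact may be assumed, no step is a genuine obstacle. The one point to check is that $B$ is nonempty, so that $h_B$ is finite and the subtraction $h_A - h_B$ is meaningful; this is part of the hypothesis that $A,B$ are compact convex sets in the paper's sense (convex bodies).
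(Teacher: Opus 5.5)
Your proof is correct: rewriting $z\in A\ominus B$ as $B+z\subseteq A$, applying the containment criterion $K\subseteq L\iff h_K\le h_L$, and using $h_{B+z}=h_B+\langle z,\cdot\rangle$ gives the equivalence, and every step is reversible. The paper gives no proof of this fact and simply cites Proposition~2 of \cite{DK00}; your reduction is the standard argument for that statement and matches the paper's definitions.
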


\section{A One-Sided Volume Lemma}\label{sec:proofOfVolume}

With the above preliminaries established, we now proceed to the key technical lemma which underlies our sparsification theorem. Just as in \cite{RR26} and \cite{BKMT26}, this relies on a ``one-sided volume lemma''. In our case this one-sided volume lemma will correspond with appropriate one-sided sparsifiers of \emph{sums of support functions of convex bodies}:

\begin{lemma}\label{lem:oneSidedVolumeBound}
    Let $C_1,\ldots,C_m$ be compact centrally symmetric convex sets whose Minkowski sum has dimension $d$ in $\R^n$. Define
\[
    K_0:=\left \{z\in[-1,1]^m:
       \sum_{i=1}^m z_i h_{C_i}(u)\leq0
       \text{ for every }u \right \}.
\]
Then
\[
    \frac{\mathrm{vol}_m(K_0)}{2^m}\geq e^{-2d}.
\]
\end{lemma}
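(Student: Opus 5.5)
First I will reduce to the case $d=n$: replace $\R^n$ by $V:=\mathrm{span}(C)$. Each $C_i\subseteq V$, since $C_i$ is symmetric, so $0\in C_j$ for all $j$ and hence $C_i\subseteq C$. Also $h_{C_i}(u)$ depends only on the orthogonal projection of $u$ onto $V$. So the constraint defining $K_0$ is unchanged, and we may work in $V\cong\R^d$ with volume $\Vol_d$.

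Next I define $F:\R^m\to\R_{\ge0}$ by
\[
F(t)=\Vol_d\Bigl(\sum_{i:t_i\ge0}t_iC_i\ominus\sum_{i:t_i<0}|t_i|C_i\Bigr).
\]
I will check the three hypotheses of Lemma~18 of \cite{RR26}.
\begin{itemize}
\item \emph{Homogeneity.} The set inside scales by $\rho$, so $F(\rho x)=\rho^dF(x)$ for $\rho\ge0$.
\item \emph{Non-vanishing.} $F(1,\dots,1)=\Vol_d(C)>0$.
\item \emph{Separate convexity.} This is the main step. Fix all coordinates except $t_i$, and let $P$ and $N$ be the positive and negative parts of the other terms.
\begin{itemize}
\item For $t_i\le0$, $F=\Vol(P\ominus(N+|t_i|C_i))=\Vol((P\ominus N)\ominus|t_i|C_i)$ by \cref{fact:basics}(2). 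This is $f_{P\ominus N,C_i}(|t_i|)$, which is convex by \cref{lemma:f-convex}. That lemma needs $P\ominus N$ to be a body; if it is empty, $F\equiv0$ on this side.
\item For $t_i\ge0$, $F=\Vol((P+t_iC_i)\ominus N)$. To recast this in the form of \cref{lemma:f-convex}, I will use the rewriting the overview promises, e.g.\ $(P+tC_i)\ominus N = \bigl((P+TC_i)\ominus N\bigr)\ominus (T-t)C_i$ for a large $T$. This identity is not true in general for Minkowski differences, so I will have to verify it here. If it fails, I would instead argue convexity on $t_i\ge0$ via mixed volumes. The function $t\mapsto \Vol((P+tC_i)\ominus N)$ has right derivative $\ge dW_1((P+tC_i)\ominus N;C_i)$-type quantities that increase with $t$, because the set grows with $t$ and by \cref{fact:mixed}.
\end{itemize}
Then I glue the two sides at $t_i=0$. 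The function is continuous there, and the one-sided slopes must be ordered, left slope $\le$ right slope. On the left the derivative is $-dW_1(\cdot)\le0$; on the right, monotonicity gives slope $\ge0$. So the slopes are ordered, and convexity holds across $0$. This two-sided gluing, together with the $t_i\ge0$ representation, is where I expect the real difficulty.
\end{itemize}

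Lemma~18 then gives $\Pr_\sigma[F(\sigma)>0]\ge e^{-2d}$. Fix a $\sigma$ with $F(\sigma)>0$, and write $C_P=\sum_{\sigma_i=1}C_i$ and $C_N=\sum_{\sigma_i=-1}C_i$. Then $C_P\ominus C_N\ne\emptyset$, so $C_N\subseteq C_P$ by \cref{fact:central-contained}. By \cref{fact:supportProperties} this gives $\sum_i\sigma_ih_{C_i}\ge0$ everywhere. Hence $-\sigma\in K_0$.

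Finally I pass from vertices to volume. The plan is to show that $K_0$ contains the whole orthant box $\{z:\ z_i\sigma_i\in[-1,0]\}$ for each good sign vector, i.e.\ the box of $z$ with $\operatorname{sign}z=-\sigma$. For that I apply the same result to the rescaled bodies $|s_i|C_i$. I will run Lemma~18 on $F$ restricted to sign patterns, but with $C_i$ replaced by $a_iC_i$ for a fixed $a\in(0,1]^m$. This gives, for each $a$, at least an $e^{-2d}$ fraction of sign vectors $\sigma$ with $-(\sigma_ia_i)_i\in K_0$. Integrating over $a$ uniform in $[0,1]^m$ gives $\Vol_m(K_0)/2^m\ge e^{-2d}$. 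Here the dimension of $\sum a_iC_i$ is still $d$, since $a_i>0$ almost surely.
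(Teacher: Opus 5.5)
Your overall plan is the paper's. Your $F$ coincides with the paper's $\Phi(t)=\Vol_d(\mathcal Z(t))$: by \cref{fact:supportFnCharacterization}, $\sum_{t_i\ge0}t_iC_i\ominus\sum_{t_i<0}|t_i|C_i=\{z:\langle z,u\rangle\le\sum_i t_ih_{C_i}(u)\ \forall u\}$. Your homogeneity, Lemma~18, containment and averaging-over-$a$ steps also match the paper. One side remark: your sentence that $K_0$ contains the whole orthant box of each good $\sigma$ is false. Take $C_1=C_2=[-1,1]$, $\sigma=(1,-1)$ and $z=(-\tfrac12,1)$. The averaging argument you actually run does not use that sentence.

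The gap is in separate convexity, the crux, which you leave unresolved. Your main route rests on an identity you doubt and do not verify. Your fallback for $t_i\ge0$ only lower-bounds the right derivative by an increasing quantity, and that does not make the right derivative itself nondecreasing. The gluing at $t_i=0$ also has a sign error. Since $h_{C_i}\ge0$, $F$ is nondecreasing in $t_i$ on both sides. For $t_i\le0$ you have $F=f_{P\ominus N,C_i}(-t_i)$, so the left slope at $0$ is $+d\,W_1(P\ominus N;C_i)\ge0$. The $-d\,W_1$ you quote is the derivative in $\lambda=|t_i|$, not in $t_i$. So ``left $\le 0\le$ right'' is not the comparison that arises. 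You would need $d\,W_1(P\ominus N;C_i)\le$ the right derivative of $F$ at $t_i=0$, which requires a further argument, for example $(P+tC_i)\ominus N\supseteq(P\ominus N)+tC_i$.

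The fix is that your identity holds in general, and for every $t\le T$, including negative $t$. By \cref{fact:basics}(2) its right side is $(P+TC_i)\ominus(N+(T-t)C_i)$. By \cref{fact:supportFnCharacterization}, $z$ lies there iff $\langle z,u\rangle\le h_P(u)+T h_{C_i}(u)-h_N(u)-(T-t)h_{C_i}(u)=h_P(u)+t\,h_{C_i}(u)-h_N(u)$ for all $u$; the finite $T h_{C_i}$ terms simply cancel. This is exactly membership in the set defining $F$ at $t_i=t$, whatever the sign of $t$. Hence on $(-\infty,T]$ the section is $t_i\mapsto f_{A,C_i}(T-t_i)$ with $A=(P+TC_i)\ominus N$, or identically $0$ if $A=\emptyset$. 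This is convex by \cref{lemma:f-convex}. Since $T$ is arbitrary, the section is convex on all of $\R$, and no gluing at $0$ is needed. This is precisely the paper's argument, which writes $\mathcal Z(t)=\mathcal Z(R\mathbf 1)\ominus\mathcal Z(R\mathbf 1-t)$ for $t\in[-R,R]^m$ and then peels off $(R-t_j)C_j$.
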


\subsection{The Framework of \cite{RR26}}
Let $\Omega\seq\R^m$ be a convex set. A function $F:\Omega\to\R$ is said to be \emph{separately convex} if for any $i\in[m]$ and any $\omega\in\Omega$, the restricted function $t\mapsto F(\omega + te_i)\in\R$ is convex, where $e_i\in\R^m$ is the $i^{\mathrm{th}}$ standard basis vector. Separately convex functions satisfy the following convenient multi-variate generalization of Jensen's theorem \cite[Lemma~12]{RR26}: Let $X_1, \ldots, X_m$ be independent $\R$-valued random variables such that $(X_1, \ldots, X_m)$ is supported on $\Omega$. Then for any separately convex function $F:\R^m\to\R$ we have $\E F(X_1, \ldots, X_m)\geq F(\E X_1, \E X_2, \ldots, \E X_m)$.

Importantly, the work of \cite{RR26} provides a powerful framework for bounding one-sided volume by using the notion of separate convexity:

\begin{fact}[Lemma 18 of \cite{RR26}]\label{fact:RR26VolumeBound}
    Let $F:\R^m\to\R_{\ge0}$ be separately convex and not identically zero.  Suppose that, for some integer $d\ge0$,
\[
  F(\rho x)=\rho^dF(x)
  \qquad
  \text{for every }x\in\{-1,1\}^m\text{ and }\rho\ge0.
\]
Then
\[
  \Pr_{\sigma\in\{-1,1\}^m}[F(\sigma)>0]\ge e^{-2d}.
\]
\end{fact}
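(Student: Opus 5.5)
The plan is to turn separate convexity plus homogeneity into a pointwise ``reverse noise'' inequality on the Boolean cube, and then to convert that inequality into a lower bound on the size of the support of $F$ using hypercontractivity.

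\textbf{Step 1: a pointwise inequality on the cube.} Let $f:\{-1,1\}^m\to\R_{\ge 0}$ denote the restriction of $F$ to the cube. Fix $\rho>1$, a vertex $\tau\in\{-1,1\}^m$, and independent random variables $X_i\in\{-\rho,\rho\}$ with $\E X_i=\tau_i$. Concretely, $X_i=\rho\tau_i$ with probability $\tfrac12(1+1/\rho)$, and $X_i=-\rho\tau_i$ otherwise. The multivariate Jensen inequality for separately convex functions (\cite[Lemma~12]{RR26}, as quoted above) gives $F(\tau)\le \E F(X)$. Write $X=\rho\sigma$, where $\sigma$ is a $\tfrac1\rho$-correlated copy of $\tau$. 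By homogeneity, $F(\rho\sigma)=\rho^dF(\sigma)$. Hence
\[
f(\tau)\;\le\;\rho^d\,(T_{1/\rho}f)(\tau)\qquad\text{for every }\tau\in\{-1,1\}^m,
\]
where $T_\eta$ is the standard noise operator on the cube.

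\textbf{Step 2: a nonzero vertex value.} We need $f\not\equiv 0$. Since $F\not\equiv0$, pick $y$ with $F(y)>0$ and choose $\rho\ge\|y\|_\infty$. Applying Jensen with $X_i\in\{-\rho,\rho\}$ of mean $y_i$ gives $0<F(y)\le \rho^d\,\E f(\sigma)$ for some product distribution on $\sigma$. So $f$ is positive at some vertex.

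\textbf{Step 3: hypercontractivity and H\"older.} Multiply the Step~1 inequality by $f(\tau)\ge0$ and average over uniform $\tau$. This gives
\[
\|f\|_2^2\le\rho^d\langle f,T_{1/\rho}f\rangle=\rho^d\|T_{\rho^{-1/2}}f\|_2^2.
\]
By Bonami--Beckner hypercontractivity, $\|T_{\sqrt\eta}f\|_2\le\|f\|_{1+\eta}$. With $\eta=1/\rho$ and $q:=1+1/\rho$, this yields $\|f\|_2^2\le\rho^d\|f\|_q^2$. Let $p:=\Pr_\sigma[f(\sigma)>0]$. H\"older on the support gives $\|f\|_q\le p^{1/q-1/2}\|f\|_2$. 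Since $\|f\|_2>0$ by Step~2, we get $1\le \rho^d p^{2/q-1}$, that is,
\[
p\;\ge\;\rho^{-d\frac{\rho+1}{\rho-1}}.
\]

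\textbf{Step 4: optimize over $\rho$.} The exponent $\frac{(\rho+1)\ln\rho}{\rho-1}$ tends to $2$ as $\rho\to1^+$. Because the bound holds for every $\rho>1$ and $p$ does not depend on $\rho$, letting $\rho\to1^+$ gives $p\ge e^{-2d}$.

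The main obstacle is Step~3. Without it, the pointwise inequality only controls the mass of the support under biased product measures. Passing from a biased measure to the uniform measure costs a factor exponential in $m$ rather than $d$. The hypercontractive $L^2$-versus-$L^q$ comparison removes this dependence on $m$. One must check that the correlation $1/\rho$ and the norm exponent $q=1+1/\rho$ are matched as above, so that the exponent tends to exactly $2$ as $\rho\to1^+$.
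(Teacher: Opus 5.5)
Your proof is correct. The paper itself gives no proof of this statement: it is quoted from \cite{RR26} and used as a black box in the proof of the one-sided volume lemma. So there is no in-paper argument to compare against, and your write-up serves as a self-contained justification.

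The individual steps check out:
\begin{itemize}
\item The two-point law $\Pr[X_i=\rho\tau_i]=\tfrac12(1+1/\rho)$ has mean $\tau_i$. Separate-convexity Jensen plus homogeneity along cube rays therefore gives $f\le\rho^d T_{1/\rho}f$.
\item Self-adjointness and $T_aT_b=T_{ab}$ give $\langle f,T_{1/\rho}f\rangle=\|T_{\rho^{-1/2}}f\|_2^2$.
\item The bound $\|T_\eta f\|_2\le\|f\|_{1+\eta^2}$ is applied exactly at the permitted noise rate, and $2/q-1=(\rho-1)/(\rho+1)$.
\item Writing $\rho=e^{2s}$, your exponent equals $2s\coth s\ge 2$, with infimum $2$ as $s\to0^+$. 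So the limit $\rho\to1^+$ is the best this route can give, and it yields exactly $e^{-2d}$.
\end{itemize}
The only nit is in Step~2. Take $\rho>0$ (e.g.\ $\rho=\max(1,\|y\|_\infty)$) so that the two-point law is well defined even when $y=0$; this can only happen if $d=0$.

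Your limiting argument is the integrated form of a log-Sobolev argument. Expanding $f\le\rho^dT_{1/\rho}f$ to first order in $\rho-1$ gives the pointwise inequality $\sum_{i}\tfrac12\bigl(f(\tau)-f(\tau^{\oplus i})\bigr)\le d\,f(\tau)$, where $\tau^{\oplus i}$ flips the $i$-th coordinate. Multiplying by $f(\tau)$ and averaging gives $\mathcal E(f,f)\le d\,\|f\|_2^2$ for the hypercube Dirichlet form $\mathcal E(f,f)=\tfrac14\sum_i\E_\tau\bigl(f(\tau)-f(\tau^{\oplus i})\bigr)^2$. Now combine the log-Sobolev inequality $\mathrm{Ent}(f^2)\le 2\,\mathcal E(f,f)$ with $\mathrm{Ent}(f^2)\ge\|f\|_2^2\ln(1/p)$, which holds because $f^2$ is supported on a set of measure $p$. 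This gives $p\ge e^{-2d}$ in one step, with no limit. It also explains why the factor in the exponent is exactly $2$: it is the hypercube log-Sobolev constant.
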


Note that it is not a priori obvious exactly what the function $F$ should be. In what follows, we show that for an appropriate choice of this function, we can simultaneously (a) satisfy the conditions of the preceding fact, and (b) take advantage of the conclusion of the above fact to conclude our desired volume lemma.

\subsection{The Proof of \cref{lem:oneSidedVolumeBound}}

To begin, we will define a function which is separately convex and not identically $0$, thereby satisfying the conditions of \cref{fact:RR26VolumeBound}. 

To start, as in the statement of \cref{lem:oneSidedVolumeBound}, we let $C_1, \dots , C_m$ be centrally symmetric convex bodies which span a $d$ dimensional subspace of $\R^n$. For $t = (t_1, \dots , t_m) \in \R^m$, we let 
\[
\mathcal{Z}(t) = \left \{ z \in \R^n: \langle z, u \rangle \leq  \sum_{i = 1}^m t_i h_{C_i}(u) \text{ for every } u \in \R^n \right \}
\]
Importantly, we have the following facts about $\mathcal{Z}(t)$:

\begin{claim}\label{clm:usefulZProperties}
    $\mathcal{Z}(t)$ is closed and convex.  It is also bounded whenever it is nonempty. Moreover, for $a\in\R^m_{\ge 0}$, 
\[
  \mathcal{Z}(a)=\sum_{i=1}^m a_i C_i.
\]
\end{claim}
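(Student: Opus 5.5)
Closedness and convexity are immediate from the definition. For each fixed $u$, the set $\{z : \langle z,u\rangle \le \sum_i t_i h_{C_i}(u)\}$ is a closed half-space, or all of $\R^n$ if $u=0$. So $\mathcal{Z}(t)$ is an intersection of closed convex sets, and is therefore closed and convex. No finiteness issue arises, because each $h_{C_i}$ is real-valued on $\R^n$: the $C_i$ are compact and nonempty.

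For boundedness when $\mathcal{Z}(t)\neq\emptyset$, I would use central symmetry. Since $C_i=-C_i$, we have $h_{C_i}(-u)=h_{C_i}(u)$. Suppose $z\in\mathcal{Z}(t)$. Applying the defining inequality at $u$ and at $-u$ gives
\[
|\langle z,u\rangle|\le \sum_i t_i h_{C_i}(u)\qquad\text{for every } u.
\]
Take $u=e_j$ for the standard basis vectors $e_j$. Then $|z_j|\le \sum_i t_i h_{C_i}(e_j)$, a finite bound independent of $z$, so $\mathcal{Z}(t)$ lies in a fixed box. The same argument shows that nonemptiness forces $\sum_i t_i h_{C_i}(u)\ge 0$ for all $u$, which is worth noting for later use.

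For the identity with $a\in\R^m_{\ge0}$, set $L=\sum_i a_iC_i$. This is a nonempty compact convex set, being a Minkowski sum of such sets. By \cref{fact:supportProperties},
\[
h_L=\sum_i a_i h_{C_i},
\]
so $\mathcal{Z}(a)=\{z : \langle z,u\rangle\le h_L(u)\ \forall u\}$. The containment $L\subseteq\mathcal{Z}(a)$ is trivial from the definition of $h_L$ as a maximum.

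For the converse I see two routes.
\begin{enumerate}
\item Apply \cref{fact:supportFnCharacterization} with $A=L$ and $B=\{0\}$. Since $h_{\{0\}}\equiv 0$, it gives $z\in L\ominus\{0\}=L$ exactly when $\langle z,u\rangle\le h_L(u)$ for all $u$.
\item Apply \cref{fact:supportProperties} to the singleton $\{z\}$: the condition says $h_{\{z\}}\le h_L$ pointwise, hence $\{z\}\subseteq L$.
\end{enumerate}
Either way, $\mathcal{Z}(a)=L$.

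The only mildly delicate point is that the containment characterization in \cref{fact:supportProperties} is a separation (Hahn–Banach) statement, and it requires $L$ to be closed and convex. This holds because a finite Minkowski sum of compact convex sets is compact and convex. I do not expect any step to be a real obstacle.
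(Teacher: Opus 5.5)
Your proof is correct and takes the same route as the paper: the paper also observes that $\sum_i a_i h_{C_i}$ is the support function of $\sum_i a_i C_i$ and concludes $\mathcal{Z}(a)=\sum_i a_i C_i$. The paper leaves closedness, convexity and boundedness implicit, and says only ``it immediately follows'' for the reverse containment. Your half-space intersection, the bound $|z_j|\le\sum_i t_i h_{C_i}(e_j)$, and the appeal to \cref{fact:supportFnCharacterization} or \cref{fact:supportProperties} correctly fill in those omitted details.
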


\begin{proof}
    To see the final item, observe that for any $a \in \R_{\geq 0}^m$, the right-hand side of the definition of $\mathcal{Z}(a)$ is the support function of $\sum_i a_i C_i$. It then immediately follows that $\mathcal{Z}(a)=\sum_{i=1}^m a_i C_i.$
\end{proof}

Now, we define the function \[
\Phi(t)=\Vol_d(\mathcal{Z}(t)),
\]
with the understanding that $\Phi(t)=0$ if $\mathcal{Z}(t)=\varnothing$. Crucially, $\Phi$ is separately convex:

\begin{claim}\label{clm:separatelyConvex}
    For $C_1, \dots , C_m$ and $\Phi: \R^m \rightarrow \R_{\geq 0}$ as defined above, $\Phi$ is separately convex. 
\end{claim}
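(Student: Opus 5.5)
Fix a coordinate $i\in[m]$ and the remaining coordinates $t_{-i}$. The goal is to show that $s\mapsto \Phi(t_{-i}+se_i)$ is convex on $\R$, working throughout inside $V=\mathrm{span}(C)$ (all $C_j\subseteq V$), where $\Vol_d$ is the ambient volume. Split the remaining indices by sign: put $P=\sum_{j\ne i,\,t_j\ge 0} t_jC_j$ and $N=\sum_{j\ne i,\,t_j<0}|t_j|C_j$. Both are symmetric convex bodies in $V$, and by \cref{fact:supportProperties} we have $\sum_{j\neq i} t_jh_{C_j}=h_P-h_N$.

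\textbf{Step 1: identify $\mathcal Z$ as a Minkowski difference.} For $s\ge 0$ the defining inequality reads $\ip{z,u}\le h_{P+sC_i}(u)-h_N(u)$. By \cref{fact:supportFnCharacterization}, this gives $\mathcal Z(t)=(P+sC_i)\ominus N$. For $s\le 0$ it reads $\ip{z,u}\le h_P(u)-h_{N+|s|C_i}(u)$, so $\mathcal Z(t)=P\ominus(N+|s|C_i)=(P\ominus N)\ominus |s|C_i$ by \cref{fact:basics}(2). (For $z\notin V$ one takes $u\perp V$ to see that $z\in V$ is forced, so the intersection with $V$ is harmless.)

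\textbf{Step 2: convexity on each half-line.} On $s\le 0$ the function is $f_{A,C_i}(|s|)$ with $A=P\ominus N$. If $A$ is empty the function is identically $0$. Otherwise \cref{lemma:f-convex}, applied in $V$, gives convexity in $|s|$ and hence in $s$. This half-line function is also nonincreasing in $|s|$, i.e., nondecreasing in $s$. On $s\ge0$ the function is $g(s)=\Vol_d((P+sC_i)\ominus N)$. Here I would rewrite it so that \cref{lemma:f-convex} applies again. One route is to note $(P+sC_i)\ominus N\supseteq (P\ominus N)+sC_i$, but this is only an inclusion, so it does not suffice. The cleaner route is to use the symmetric roles: by support functions, $g(s)$ is the volume of $\{z:\ip{z,u}\le h_P+sh_{C_i}-h_N\}$. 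This is the body $\mathcal Z(\tau)$ with $\tau$ linear in $s$, which is a family whose support "function" $h_P-h_N+sh_{C_i}$ is affine in $s$. The key is that this family is \emph{Minkowski-concave} in $s$: for $s=\theta s_1+(1-\theta)s_2$ we have $\mathcal Z(s)\supseteq\theta\mathcal Z(s_1)+(1-\theta)\mathcal Z(s_2)$, which is immediate from the definition. That alone gives Brunn–Minkowski concavity of $\Vol^{1/d}$, not convexity, so on $s\ge0$ I instead invoke the derivative formula. For $s\ge 0$ with $N\subseteq P+sC_i$, \cref{fact:f-dir}-type reasoning (the Bol–Matheron formula in its inner-parallel form) gives the derivative $dW_1(\mathcal Z;C_i)$, which is nondecreasing in $s$ by \cref{fact:mixed} since $\mathcal Z$ grows. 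Approximating $C_i$ by $C_i+B(0,\eps)$ as in \cref{lemma:f-convex} removes the properness assumption.

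\textbf{Step 3: gluing at $s=0$.} Both pieces agree at $s=0$, and each is convex. It remains to compare one-sided derivatives at $0$: the left derivative is $dW_1(P\ominus N;C_i)$ (from \cref{fact:f-dir}, with sign flipped since $s=-\lambda$), and the right derivative is $dW_1(P\ominus N;C_i)$ as well, so they coincide and convexity holds across $0$. If $P\ominus N$ is empty or lower-dimensional, both sides start at $0$ with left derivative $0$ and right derivative $\ge0$, which is again fine.

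\textbf{Main obstacle.} The main obstacle is Step 2 on the half-line $s\ge0$, because the body there is not of the form $A\ominus\lambda B$ with $A$ fixed. The step therefore needs a justification of the derivative formula $\frac{d}{ds}\Vol((P+sC_i)\ominus N)=dW_1(\cdot;C_i)$. My first attempt would be to prove it by writing $(P+(s+h)C_i)\ominus N\supseteq ((P+sC_i)\ominus N)+hC_i$ for the lower bound, and using the Matheron-style argument from the cited adapted Proposition for the upper bound.
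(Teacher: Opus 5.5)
There is a genuine gap, and it is the step you flag as your main obstacle: you never establish convexity of $g(s)=\Vol_d((P+sC_i)\ominus N)$ on $s\ge 0$. You reduce it to a derivative formula $g'(s)=d\,W_1((P+sC_i)\ominus N;C_i)$, but you justify only the easy half. The inclusion $(P+(s+h)C_i)\ominus N\supseteq((P+sC_i)\ominus N)+hC_i$ bounds the right derivative from below. The matching upper bound is deferred to an unspecified ``Matheron-style argument.'' \cref{fact:f-dir} cannot supply it as you have set things up. It only treats inner parallel bodies $A\ominus\lambda B$ with $A$ fixed, and you explicitly assert that your family is not of that form. Step~3's gluing uses the same unproven formula for the right derivative at $s=0$. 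So the argument is incomplete on the positive half-line and hence across $0$.

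The missing idea is that this assertion is false: the family \emph{is} an inner parallel family of a fixed body. For any $S\ge s$,
\[
(P+sC_i)\ominus N=\bigl((P+SC_i)\ominus N\bigr)\ominus(S-s)C_i .
\]
To check this, first rewrite the right side as $(P+SC_i)\ominus(N+(S-s)C_i)$ by \cref{fact:basics}(2). Then apply \cref{fact:supportFnCharacterization}: membership in either side means $\ip{z,u}\le h_P(u)+s\,h_{C_i}(u)-h_N(u)$ for all $u$, because $h_P+Sh_{C_i}-h_N-(S-s)h_{C_i}=h_P+sh_{C_i}-h_N$. The same computation shows that for $s<0$ your body $(P\ominus N)\ominus|s|C_i$ equals $\bigl((P+SC_i)\ominus N\bigr)\ominus(S-s)C_i$ too. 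Set $A_S:=(P+SC_i)\ominus N$. On $(-\infty,S]$ the whole section is then $s\mapsto f_{A_S,C_i}(S-s)$, which is convex by \cref{lemma:f-convex} (or identically $0$ if $A_S=\emptyset$). Since $S$ is arbitrary, the section is convex on all of $\R$. This is exactly the paper's proof. With $c=R\mathbf 1$ it writes $\mathcal Z(t)=\mathcal Z(c)\ominus\mathcal Z(c-t)$, i.e.\ $\mathcal Z(t)=A\ominus(R-t_j)C_j$ with the fixed body $A=\mathcal Z(c)\ominus\sum_{i\ne j}(R-t_i)C_i$. That single identity makes your sign split, the derivative formula on $s\ge 0$, the properness approximation there, and the gluing at $s=0$ all unnecessary. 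Your remark that $z\in V$ is forced, so $\Vol_d$ may be computed inside $V$, is a correct detail that the paper leaves implicit.
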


\begin{proof}
To prove this, we rely crucially on \cref{lemma:f-convex}. We fix any $R > 0$, we set $c = R \cdot \mathbf{1} \in \R^m$, and take $t \in [-R, R]^m$. By \cref{fact:supportFnCharacterization}, we then see that 
\[
\mathcal Z(t)=\mathcal Z(c) \ominus \mathcal Z(c-t).
\]
Now, we fix all coordinates except $t_j$, for any choice of $j \in [m]$. We write 
\[
  D=\sum_{i\ne j}(R-t_i)C_i,
  \qquad
  A=\mathcal{Z}(c)\ominus D.
\]

By \cref{fact:basics},
\[
  \mathcal{Z}(t)
  =A\ominus (R-t_j)C_j.
\]

 If $A=\varnothing$, then this section of $\Phi$ is identically zero.  Otherwise $A$ is a nonempty compact convex set and the map
\[
  t_j\longmapsto\Phi(t)
  =\mathrm{Vol}_d\bigl(A\ominus (R-t_j)C_j\bigr)
\]
is convex by  \cref{lemma:f-convex}. Thus $\Phi$ is separately convex on $[-R,R]^m$.  Since $R$ is arbitrary, it is separately convex on all of $\R^m$.
\end{proof}

We can then proceed to a proof of \cref{lem:oneSidedVolumeBound}:

\begin{proof}[Proof of \cref{lem:oneSidedVolumeBound}]
    We let $E$ denote the span of the bodies $C_1, \dots , C_m$, and let $d$ denote the dimension of $E$. If $d = 0$, the claim is immediate. Otherwise, we define $\Phi$ as above. Note that for any $x \in \{\pm 1\}^m$ and any $\rho \geq 0$, we have that 
    \[
    \Phi(\rho x) = \mathrm{Vol}_d(\mathcal{Z}(\rho x)) = \mathrm{Vol}_d\left (\sum_{i = 1}^m \rho \cdot x_i \cdot C_i \right ) = \mathrm{Vol}_d\left (\rho \cdot \sum_{i = 1}^m  x_i \cdot C_i \right ) 
    \]
    \[
    = \rho^d \cdot \mathrm{Vol}_d\left ( \sum_{i = 1}^m  x_i \cdot C_i \right ) = \rho^d \cdot \Phi( x). 
    \]
    
    Now, in conjunction with \cref{clm:separatelyConvex}, this then implies that 
    \[
    \Pr_{\delta \sim \{\pm 1\}^m}[\Phi(\delta) > 0] \geq e^{-2d}.
    \]

    Now, consider any vector $\delta \in \{\pm1\}^m$ such that $\Phi(\delta) > 0$. Let $P = \{i \in [m]: \delta_i = 1\}$, and let $N = [m] - P$. We let $C_P$ denote the sum of the convex bodies corresponding to $P$; i.e., $C_P = \sum_{i \in P} C_i$, and we let $C_{N} = \sum_{i \in N} C_i$. By \cref{clm:usefulZProperties}, we then see that $\mathcal{Z}(\delta) = C_P \ominus C_N$. Because $\Phi(\delta) > 0$, this then also means that $\mathcal{Z}(\delta)$ is non-empty. Because $C_P$ and $C_N$ are centrally symmetric, 
    it is then the case (by \cref{fact:central-contained}) that $C_N \subseteq C_P$.
    In the view of support functions, this then means that for every $u \in E$,
    \[
    \sum_{i \in N}  h_{C_i}(u) \leq  \sum_{i \in P}  h_{C_i}(u),
    \]
    or equivalently, that for every $u \in E$,
    \[
    \sum_{i = 1}^m \delta_i h_{C_i}(u) \geq 0
    \]
    This implies that 
    \[
    \Pr_{\delta \sim \{\pm 1\}^m}[\delta_i h_{C_i}(u) \geq 0 \text{ for all } u \in E] \geq e^{-2d};
    \]
    by considering $-\delta$ instead of $\delta$, this then also implies that 
    \[
     \Pr_{\delta \sim \{\pm 1\}^m}[\delta_i h_{C_i}(u) \leq 0 \text{ for all } u \in E] \geq e^{-2d}.
    \]

    Finally, it remains to translate this result from the setting of $\{\pm1\}^m$ vectors to $[-1,1]^m$ vectors. For this, we sample a random vector $Z \in [-1,1]^m$. We write the vector $Z$ as $Z_i = R_i \cdot \eps_i$, where $R_i = |Z_i|$ and $\eps_i = \mathrm{sgn}(Z_i)$. The variables $R_i$ are uniform on $[0,1]$ and $\eps_i$ is uniform from $\{\pm1\}$. Now, we can simply invoke the same reasoning from above on the collection of convex bodies $R_1 \cdot C_1, \dots , R_m \cdot C_m$. On these bodies, we then see that for every choice of $R_i$,
    \[
     \Pr_{\eps \sim \{\pm 1\}^m}[\eps_i \cdot R_i h_{C_i}(u) \leq 0 \text{ for all } u \in E] \geq e^{-2d}.
    \]
    Thus, 
    \[
    \Pr_{Z \sim [-1,1]^m }[Z_i \cdot  h_{C_i}(u) \leq 0 \text{ for all } u \in E] 
    \]
    \[
    = \Pr_{R \sim [0,1]^m}\Pr_{\eps \sim \{\pm 1\}^m}[\eps_i \cdot R_i h_{C_i}(u) \leq 0 \text{ for all } u \in E] \geq e^{-2d}.
    \]
    This concludes the claim. 
\end{proof}

\section{Building Support Sparsifiers for Sums of Convex Bodies}\label{sec:sparsifiersConvex}

With \cref{lem:oneSidedVolumeBound} established, we now invoke the framework of \cite{RR26} for building linear-size sparsifiers. Note that the material of this section is almost entirely from the work of \cite{RR26}.

\subsection{\cite{RR26} Preliminaries}

We start by recording a few facts from \cite{RR26}. To do this, we first require the notion of the restriction of a convex-set: for $K \subseteq [-1,1]^m$, we let $K_S = \{ x' \in \R^S: (x', \mathbf{0}^{[m] - S}) \in K\}$.

\begin{fact}[Reis--Rothvoss symmetrization, Theorem 16 from \cite{RR26}]\label{thm:RR-sym}
Let $p,\eta\in(0,1/2]$, and let $K\subseteq[-1,1]^m$ be convex.  Suppose that $[-\eta,\eta]^m\subseteq K$ 
and
\[
  \frac{\Vol_S(K_S)}{2^{|S|}}\ge p
  \qquad\text{for every }S\subseteq[m].
\]
If
\[
  m\ge\frac{\log_2(1/p)}{\eta^2},
\]
then
\[
  \Vol_m(K\cap(-K))\ge2^{-5m}.
\]
\end{fact}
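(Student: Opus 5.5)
The statement above is Theorem~16 of \cite{RR26}, which the paper imports as a black box, so the following is only my reconstruction of an argument and may differ from theirs. The plan is to reduce to a Milman--Pajor type symmetrization about a well-chosen center $b$, and then use the cube $[-\eta,\eta]^m\subseteq K$ to pull the center back to the origin at a price of $2^{-O(m)}$.

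\emph{Step 1: the barycenter.} Let $b$ be the barycenter of $K$. The Milman--Pajor inequality gives
\[
\Vol_m\bigl(K\cap(2b-K)\bigr)\ge 2^{-m}\Vol_m(K).
\]
Taking $S=[m]$ in the hypothesis gives $\Vol_m(K)\ge p\,2^m$, so this is at least $p$.

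\emph{Step 2: interpolate back to the origin.} Write $L:=K\cap(2b-K)$; it is a convex body symmetric about $b$. For $\theta\in(0,1)$, consider $(1-\theta)(L-b)+\theta[-\eta,\eta]^m$. This set is centrally symmetric about $0$. It lies in $K$ provided $(1-\theta)b$ can be absorbed, for instance if $(1-\theta)b\in\theta'[-\eta,\eta]^m$ with some slack in $\theta$, since $L\subseteq K$ and $[-\eta,\eta]^m\subseteq K$. Being symmetric, it then also lies in $-K$, hence in $K\cap(-K)$. Its volume is at least $(1-\theta)^m\Vol_m(L)$. So with $\theta$ a constant this step loses only $2^{-O(m)}$, provided $b$ is small. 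The bound we aim for is
\[
\|b\|_\infty\lesssim\eta,
\]
or at least that most coordinates of $b$ are $O(\eta)$, with the remaining coordinates handled by restricting to a face.

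\emph{Step 3: the hypotheses on $K_S$ and $m$ (the main obstacle).} I expect this step to be the crux. It must control $b$ using $\Vol_S(K_S)\ge p\,2^{|S|}$ for \emph{every} $S$, together with $m\ge\log_2(1/p)/\eta^2$.

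The first attempt: let $X$ be uniform on $K$, and suppose many coordinates have $\E X_i\le -c\eta$ (after flipping signs). Use the section bound on the coordinate subspace $S$ of those coordinates, combined with a Brunn--Minkowski/log-concavity comparison between the section $K_S$ and the marginal of $X$. This should show that $K_S$ has volume at most $2^{|S|}e^{-\Omega(\eta^2|S|)}$. The intuition is that a log-concave vector in a cube whose mean is shifted by $\eta$ in $|S|$ coordinates pays a factor $e^{-\Omega(\eta^2|S|)}$ in volume, by a tensorized Pinsker/entropy estimate. This contradicts the hypothesis once $|S|\gtrsim\log(1/p)/\eta^2$, which is exactly the role of the assumption on $m$.

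So only $O(\log(1/p)/\eta^2)\le m$ coordinates can be bad. To finish, restrict to the good coordinates, where $K_S$ again satisfies all hypotheses. If necessary, iterate or set the bad coordinates to $0$, which is allowed because $0$ lies in the interior via the cube. The final target is a symmetric subset of volume at least $2^{-5m}$, with all losses ($2^{-m}$, $p$, and $(1-\theta)^m$) absorbed into the constant $5$ using $p\ge 2^{-\eta^2 m}\ge 2^{-m/4}$.
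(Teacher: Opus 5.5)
The paper gives no proof of this statement. It is imported verbatim as Theorem~16 of \cite{RR26}, so your proposal has to stand on its own. Steps~1--2 are sound, but they only reduce everything to locating the barycenter $b$. Step~3 carries all of the content, and it does not go through.

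\textbf{The comparison in Step~3 is between unrelated objects.} $\E X_S$ is the mean of the \emph{marginal} of $K$ on $\R^S$, whereas $K_S$ is the \emph{fiber} over $x_{[m]\setminus S}=0$. For example, the triangle $\mathrm{conv}\{(-1,0),(1,-1),(1,1)\}$ has full section $[-1,1]$ on $\{x_2=0\}$ but barycenter $(1/3,0)$. What the hypothesis does yield (for $S=[m]$, via Hoeffding and Donsker--Varadhan) is only $\|b\|_2^2=O(\log(1/p))=O(\eta^2 m)$. This permits $\Theta(m/s^2)$ coordinates with $|b_i|=\Theta(s\eta)$ for every $s\ge 1$. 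So with any constant threshold, the bad coordinates may be a constant fraction of $m$. Your own count $O(\log(1/p)/\eta^2)$ is likewise of order $m$, since only $m\ge\log_2(1/p)/\eta^2$ is assumed. Also, the barycenter of $K_G$ is not $b|_G$, and $K_G$ need not satisfy $|G|\ge\log_2(1/p)/\eta^2$, so your ``iterate'' step has no invariant.

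\textbf{Neither of your mechanisms can absorb such coordinates at cost $2^{-O(m)}$.} Take $k=\lfloor \eta^2 m\rfloor$ disjoint blocks $T_j$ of size $\ell\approx(s\eta)^{-2}$, and set
\[
K=\Bigl\{x\in[-1,1]^m:\ \sum_{i\in T_j}x_i\ge-\eta\ell \text{ for all } j\Bigr\}.
\]
This $K$ contains $[-\eta,\eta]^m$. Every section has relative volume at least $2^{-k}$, because block sums are symmetric, so all hypotheses hold with $p=2^{-k}$. Meanwhile the roughly $m/s^2$ block coordinates have $b_i=\mu=\Theta(s\eta)$ by the CLT.
\begin{itemize}
\item \emph{Contraction (Step~2).} Already $(1-\theta)(L-b)\subseteq K$ forces $1-\theta\le \eta/(\eta+\mu)=O(1/s)$, since $L-b$ attains block sum $-(\eta+\mu)\ell$. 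So Step~2 can certify at most $O(1/s)^m\,2^m$.
\item \emph{Zeroing.} Setting the block coordinates to $0$ and re-inflating with $[-\eta,\eta]$ costs $\eta^{\Theta(m/s^2)}$.
\end{itemize}
For $1\ll s\ll\sqrt{\log(1/\eta)}$ both losses are $2^{-\omega(m)}$. Yet here $\Vol_m(K\cap(-K))\ge 2^m (c/s)^{k}$ is large.

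So a single homothety about the barycenter plus coordinate deletion is too coarse. The missing ingredient is an argument that genuinely uses the section hypothesis for all $S$ to treat different coordinates (and different scales of shift) differently. Your sketch does not supply this.
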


We also use the following fact which shows that (appropriate scalings of) large enough symmetric convex bodies contain vectors with many integral coordinates:

\begin{fact}[Large-body partial coloring, Theorem 7 from \cite{RR26}]\label{thm:RR-partial}
For every constant $c>0$, there is a constant $s=s(c)>0$ such that the following holds.  If $Q\subseteq[-1,1]^m$ is a symmetric convex body with
\[
  \Vol_m(Q)\ge c^m,
\]
then there is a vector
\[
  z\in sQ\cap[-1,1]^m
\]
for which at least $m/2$ coordinates satisfy $|z_i|=1$.
\end{fact}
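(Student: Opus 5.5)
This statement is quoted from \cite{RR26}, so we take it as given. If we had to reprove it, we would reduce it to the Gaussian-measure form of the partial coloring lemma and then use a convex-geometric dilation step. That form (Gluskin, Giannopoulos, and the constructive version of Rothvoss) says: if $K\subseteq\R^m$ is a symmetric convex body with $\gamma_m(K)\ge e^{-\delta_0 m}$ for a suitable absolute constant $\delta_0>0$, then there is $z\in K\cap[-1,1]^m$ with at least $m/2$ coordinates of absolute value $1$. The constant $\delta_0$ depends on how many tight coordinates are demanded; to reach $m/2$ one works with a sufficiently small $\delta_0$ in Rothvoss's argument.

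We would use that argument as follows. Sample a standard Gaussian $x\in\R^m$ and let $z$ be the Euclidean projection of $x$ onto $K\cap[-1,1]^m$. If fewer than $m/2$ coordinates of $z$ are tight, then $z$ lies in the intersection of $K$ with a low-dimensional face structure. By Sidak--Khatri or Gaussian correlation, together with the Gaussian measure of $K$, one shows that $x$ is then unusually close to $K\cap[-1,1]^m$ compared with its typical distance. The event ``at least $m/2$ coordinates are tight'' therefore has positive probability, and some such $z$ exists.

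The reduction consists of two steps.

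\textbf{Step 1: from volume to Gaussian measure.} Since $Q\subseteq[-1,1]^m$, the Gaussian density on $Q$ is at least $(2\pi)^{-m/2}e^{-m/2}$. Hence
\[
\gamma_m(Q)\ \ge\ \Vol_m(Q)\,(2\pi e)^{-m/2}\ \ge\ \bigl(c/\sqrt{2\pi e}\bigr)^m = e^{-\delta m},
\]
where $\delta=\delta(c)$. This bound is only exponentially small with a constant depending on $c$. It is not yet below the threshold $e^{-\delta_0 m}$, since $\delta$ may be large.

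\textbf{Step 2: dilation, the main obstacle.} We must find $s=s(c)$ with $\gamma_m(sQ)\ge e^{-\delta_0 m}$ and then apply the lemma to $K=sQ$. Borell-type dilation inequalities only start from measure at least $1/2$. The S-inequality fails here, because a thin slab of tiny measure does not grow enough under constant dilation. So we cannot dilate blindly; we must use the volume hypothesis again.

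Our first attempt would be a covering argument. By Rogers--Shephard and $\Vol_m(Q)\ge c^m$, the cube $[-1,1]^m$, and hence most of the Gaussian mass after a constant-factor rescaling, is covered by at most $(C/c)^m$ translates of $Q$. Comparing the Gaussian measure of a translate $y+sQ$ with that of $sQ$, using symmetry and log-concavity ($\gamma(sQ)\ge\gamma(y+sQ)$ for symmetric convex sets), gives $\gamma(sQ)\gtrsim(c/C)^m$. This alone does not beat $e^{-\delta_0 m}$ for small $c$.

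We would therefore refine the argument by restricting to a coordinate subset. Using $\Vol_m(Q)\ge c^m$ and Fubini/Loomis--Whitney, we would find many coordinates in whose directions $Q$ has width $\Omega_c(1)$. On that coordinate subspace, a scaled copy $sQ$ then contains a large portion of the cube. The lemma is run there, and the resulting coordinate counts are combined.

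As a fallback, one can avoid Gaussians altogether and use a Beck/Gluskin pigeonhole argument. Place translates $x+\tfrac{s}{2}Q$ for $x$ ranging over a large binary code. Volume forces two translates to overlap, which yields $x-y\in sQ\cap\{-1,0,1\}^m$. Kleitman's diameter theorem guarantees that $x-y$ has many nonzero entries, and the code parameters then need to be tuned to reach $m/2$.
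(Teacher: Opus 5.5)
Citing \cite{RR26} is exactly what the paper does: the statement is recorded as a Fact and never reproved, so your first sentence already matches the paper's treatment. The ``if we had to reprove it'' sketch, however, is not a proof, and several of its steps fail outright. The most important is that the target of your Step~2 is unattainable. Let $Q_\tau=\{x\in[-1,1]^m : |x_1+\cdots+x_m|\le \tau\sqrt m\}$, and choose $\tau$ so that $\Vol_m(Q_\tau)=c^m$. Under the uniform measure on the cube, $\sum_i x_i/\sqrt m$ has density $\Theta(1)$ near $0$, so $\tau=\Theta((c/2)^m)$. For every dilation $s$,
\[
\gamma_m(sQ_\tau)\le \Pr\bigl[|g_1+\cdots+g_m|\le s\tau\sqrt m\bigr]\le s\tau,
\]
and this is below $e^{-\delta_0 m}$ for large $m$ once $c<2e^{-\delta_0}$. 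So no $s=s(c)$ lifts $\gamma_m(sQ)$ to $e^{-\delta_0 m}$, and the reduction to the full-space Gaussian partial-coloring lemma cannot work. Your coordinate refinement does not repair this. $Q_\tau$ has width $2$ in every coordinate direction (it contains $e_i-e_j$), yet every coordinate section of $sQ_\tau$ is again a slab of Gaussian measure at most $s\tau\sqrt m$. The subspace on which $Q_\tau$ is large is $\mathbf{1}^\perp$, which is not a coordinate subspace. A Gaussian route would therefore have to run the projection argument inside a general subspace of proportional dimension, together with a separate argument extracting such a subspace from the volume bound, and your sketch supplies neither.

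The pigeonhole fallback also fails for small $c$. The translates $x+\frac{s}{2}Q$ with $x\in\{0,1\}^m$ all lie in $[-\frac{s}{2},1+\frac{s}{2}]^m$, so averaging only yields a point of multiplicity about $\bigl(\frac{sc}{1+s}\bigr)^m$. This is less than $1$ whenever $c\le 1$, so not even one collision is forced, and restricting $x$ to a code only lowers the count. Even for $c>1$, Kleitman's theorem gives Hamming distance $m/2$ only when the multiplicity exceeds roughly $2^{0.81m}$, i.e., only for $c$ above about $1.76$. A smaller inaccuracy concerns your quoted Gaussian lemma. Already for $K=\R^m$, projecting a standard Gaussian onto $[-1,1]^m$ makes only about $\Pr[|g|\ge 1]\,m\approx 0.32m$ coordinates tight. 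Shrinking $\delta_0$ in Rothvoss's argument therefore cannot certify $m/2$ tight coordinates; you would need a larger variance, i.e., an extra dilation.
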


\subsection{The Partial Coloring}

In this section, we now prove the following lemma:

\begin{lemma}\label{lem:body-partial}
There are universal constants $A,\kappa>0$ with the following property.  Let $C_1,\dots,C_m$ be compact centrally symmetric convex sets contained in a $d$-dimensional space, and put $C=\sum_iC_i$.  If $m\geq \kappa d$, then there is $z\in[-1,1]^m$ such that
\begin{enumerate}[(1)]
  \item at least $m/4$ coordinates of $z$ equal $-1$; and 
  \item for every $u$,
  \[
    \left | {\sum_{i=1}^m z_i h_{C_i}(u)} \right |
    \le A\sqrt{\frac dm}\,h_C(u).
  \]
\end{enumerate}
\end{lemma}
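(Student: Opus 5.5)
We follow the partial-coloring argument of \cite{RR26}, with \cref{lem:oneSidedVolumeBound} supplying the one-sided volume bound. Set $\eta := c_0\sqrt{d/m}$ for a small constant $c_0$ chosen below. Define the one-sided body
\[
K := \Bigl\{ z\in[-1,1]^m : \sum_{i=1}^m z_i h_{C_i}(u) \le \eta\, h_C(u)\ \text{for every } u\Bigr\}.
\]
It is convex. It contains $[-\eta,\eta]^m$, because $\sum_i z_i h_{C_i}(u)\le \eta\sum_i h_{C_i}(u)=\eta h_C(u)$ whenever $|z_i|\le \eta$, using $h_{C_i}\ge 0$ for symmetric sets and \cref{fact:supportProperties}.

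Next we verify the restriction hypothesis of \cref{thm:RR-sym}. For any $S\subseteq[m]$, the restriction $K_S$ contains the one-sided polytope $K_0$ built from the bodies $\{C_i\}_{i\in S}$ alone. Their Minkowski sum still lies in a space of dimension at most $d$, so \cref{lem:oneSidedVolumeBound} gives $\Vol_S(K_S)/2^{|S|}\ge e^{-2d}$. Take $p=e^{-2d}$. The size condition $m\ge \log_2(1/p)/\eta^2 = (2d\log_2 e)\, m/(c_0^2 d)$ then holds provided $c_0^2\ge 2\log_2 e$. This forces $\eta=c_0\sqrt{d/m}$ with $c_0$ a (large) absolute constant.

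We also need $\eta\le 1/2$, which holds once $m\ge \kappa d$ with $\kappa\ge 4c_0^2$. This is exactly where the hypothesis $m\ge\kappa d$ enters. \cref{thm:RR-sym} now yields $\Vol_m(K\cap(-K))\ge 2^{-5m}$.

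Let $Q:=K\cap(-K)$. It is symmetric, convex, contained in $[-1,1]^m$, and consists precisely of the $z\in[-1,1]^m$ with $|\sum_i z_ih_{C_i}(u)|\le \eta h_C(u)$ for all $u$. Apply \cref{thm:RR-partial} with $c=2^{-5}$ to obtain $z\in sQ\cap[-1,1]^m$ with at least $m/2$ coordinates of absolute value $1$. Membership in $sQ$ gives
\[
\Bigl|\sum_i z_i h_{C_i}(u)\Bigr|\le s\eta\, h_C(u)=sc_0\sqrt{d/m}\,h_C(u),
\]
which is item (2) with $A=sc_0$.

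For item (1), at least $m/4$ of the saturated coordinates share a sign. If most of them equal $+1$, replace $z$ by $-z$; this is still in $sQ\cap[-1,1]^m$ by symmetry, and the two-sided bound is unchanged. Then at least $m/4$ coordinates equal $-1$.

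The main obstacle is the restriction step. We must confirm that $K_S\supseteq (K_0)_S$-type containment really holds, and that \cref{lem:oneSidedVolumeBound} applies to subfamilies with the same dimension bound $d$. Restricting to $z_i=0$ off $S$ simply drops those bodies, and the dimension can only decrease, so both hold. Apart from that, the work is bookkeeping of constants: $\kappa$ must be large enough to guarantee $\eta\le 1/2$ given the $c_0$ forced by $p=e^{-2d}$.
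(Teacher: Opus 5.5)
Your proof is correct and follows the paper's argument essentially step for step. It uses the same relaxed body $K_\eta$ with $\eta=\sqrt{2\log_2 e\cdot d/m}$, the same restriction argument via \cref{lem:oneSidedVolumeBound} with $p=e^{-2d}$, then \cref{thm:RR-sym}, then \cref{thm:RR-partial} with $c=2^{-5}$, and finally a global sign flip. The only addition needed is the case $d=0$, which the paper sets aside first: there $\eta=0$ and $p=1$ fall outside the hypotheses of \cref{thm:RR-sym}, but every $h_{C_i}\equiv 0$, so $z=-\mathbf{1}$ works trivially.
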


\begin{proof}
    The proof relies on both \cref{thm:RR-sym} and \cref{thm:RR-partial} from above. To start, we can observe that when $d = 0$, the claim is trivial. Thus, we focus only on the case when $d \geq 1$. We set 
    \[
  p_0=e^{-2d},
  \qquad
  \beta=2\log_2 e,
  \qquad
  \eta=\sqrt{\frac{\beta d}{m}}.
\]

Next, choose $\kappa \geq 4 \beta$; in this way if $m \geq \kappa \cdot d$, it must be that $\eta \leq 1/2$. Now, we define a new convex set $K_{\eta}$:
\[
K_\eta
  =\left \{x\in[-1,1]^m:
  \sum_{i=1}^m x_i h_{C_i}(u)
  \leq \eta h_C(u)
  \text{ for every }u \right \}.
\]
This set is a relaxation of the body considered in \cref{lem:oneSidedVolumeBound} as we now allow for some slack on the right-hand side. Importantly, this set now ensures that $[-\eta, \eta]^m \subseteq K_{\eta}$. Indeed, for any $x \in[-\eta, \eta]^m$, 
\[
  \sum_{i=1}^m x_i \cdot h_{C_i}(u)
  \le \sum_{i=1}^m|x_i| \cdot h_{C_i}(u)
  \le \eta \cdot h_C(u).
\]

Now, fix $S \subseteq [m]$. Because $(K_0)|_S \subseteq (K_{\eta})|_S$, we can invoke \cref{lem:oneSidedVolumeBound} to see that 
\[
\frac{\mathrm{Vol}_{|S|}((K_{\eta})|_S)}{2^{|S|}} \geq e^{-2d}.
\]

Furthermore,
\[
  \log_2(1/p_0)=2d\log_2 e=\beta d
  \quad\text{and}\quad
  m=\frac{\log_2(1/p_0)}{\eta^2}.
\]
Thus \cref{thm:RR-sym} applies and gives
\[
  \Vol_m(Q_\eta)\ge 2^{-5m},
\]
where
\[
  Q_\eta
  =K_\eta\cap(-K_\eta)
  =\left \{x\in[-1,1]^m:
  \left | \sum_{i=1}^m x_i h_{C_i}(u) \right |
  \le \eta h_C(u)
  \text{ for every }u\right \}.
\]
Next, we invoke \cref{thm:RR-partial} with $c=2^{-5}$.
This implies that there is a universal constant $s>0$ and a vector
\[
  z\in s \cdot Q_\eta\cap[-1,1]^m
\]
with at least $m/2$ saturated coordinates (i.e., such that $z_i \in \{\pm 1\}$). Moreover, by taking $-z$ if necessary, we additionally are guaranteed that at least $m/4$ of these coordinates equal $-1$. Because $\frac{z}{s} \in Q_{\eta}$, this implies that for every $u$,
\[
  \left | \sum_{i=1}^m z_i h_{C_i}(u) \right |
  \leq s \cdot \eta \cdot h_C(u)
  =s\cdot \sqrt{2\log_2 e} \cdot \sqrt{\frac{d}{m}} \cdot h_C(u).
\]
This implies the claimed result with $A=s\sqrt{2\log_2 e}$.
\end{proof}

\subsection{Building the Sparsifier}

We now prove our desired sparsification statement:

\begin{theorem}[Minkowski-sum sparsification]\label{thm:body-main}
There is a universal constant $C_{\star}>0$ with the following property.  Let $E$ be a finite-dimensional real vector space, and let $C_1,\dots,C_m\subseteq E$ be compact centrally symmetric convex sets.  Put
\[
  C=C_1+\cdots+C_m,
  \qquad
  d=\dim\mathrm{span}(C).
\]
For every $0<\eps<1$, there exist coefficients $\lambda_1,\dots,\lambda_m\ge 0$, supported on at most $C_{\star} \cdot d/\eps^2$ indices, such that for every $u \in E$,
\[
  (1-\eps) \cdot \sum_{i = 1}^m h_{C_i}(u)
  \leq\sum_{i = 1}^m \lambda_i \cdot h_{C_i}(u)
  \leq (1+\eps) \cdot \sum_{i = 1}^m h_{C_i}(u).
\]
\end{theorem}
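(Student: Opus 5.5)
The plan is to iterate \cref{lem:body-partial} in the standard Reis--Rothvoss fashion. We maintain a weight vector $w^{(t)}\in\R^m_{\ge 0}$ with support $S_t$, where $w^{(0)}=\mathbf 1$ and $S_0=[m]$. At round $t$, if $|S_t|\ge \kappa' d/\eps^2$ (for a suitably large constant $\kappa'\ge\kappa$), we apply \cref{lem:body-partial} to the weighted bodies $C^{(t)}_i := w^{(t)}_i C_i$ for $i\in S_t$. These are still compact, centrally symmetric, convex, and lie in the same $d$-dimensional space $\mathrm{span}(C)$. Their Minkowski sum is $C^{(t)}=\sum_i w^{(t)}_i C_i$, whose support function is $\sum_i w_i^{(t)}h_{C_i}$ by \cref{fact:supportProperties}. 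The lemma returns $z\in[-1,1]^{S_t}$, and we set $w^{(t+1)}_i = w^{(t)}_i(1+z_i)$. These weights are nonnegative, at least $|S_t|/4$ of them vanish (so $|S_{t+1}|\le \frac34|S_t|$), and for every $u$
\[
\Bigl|\sum_i w^{(t+1)}_i h_{C_i}(u)-\sum_i w^{(t)}_i h_{C_i}(u)\Bigr| \le A\sqrt{d/|S_t|}\; \sum_i w^{(t)}_i h_{C_i}(u).
\]

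Write $H_t(u)=\sum_i w^{(t)}_i h_{C_i}(u)$, and let $\delta_t = A\sqrt{d/|S_t|}$. Then $H_{t+1}(u)\in[1-\delta_t,1+\delta_t]\,H_t(u)$ for every $u$. We stop at the first round $T$ with $|S_T|< \kappa' d/\eps^2$ and output $\lambda = w^{(T)}$. The support bound holds with $C_\star=\kappa'$. Moreover $|S_T|\ge \frac34\cdot$(previous size) only when the iteration ran, and if $m$ is already small we simply take $\lambda=\mathbf 1$.

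The accumulated distortion is $\prod_{t<T}(1\pm\delta_t)$. The sizes $|S_t|$ decrease at least geometrically by a factor $3/4$ going forward. Read backward from the last active round, where $|S_{T-1}|\ge \kappa' d/\eps^2$, they grow by a factor $\ge 4/3$ per step. Hence $\sum_{t<T}\delta_t \le A\sqrt{d}\sum_{j\ge0}(3/4)^{j/2}\,(\kappa' d/\eps^2)^{-1/2} = O(A\eps/\sqrt{\kappa'})$, which is a geometric series dominated by its last term. Choosing $\kappa'$ large enough makes this sum at most $\eps/2$ and each $\delta_t\le 1/2$. Then $\prod(1+\delta_t)\le e^{\sum\delta_t}\le 1+\eps$ and $\prod(1-\delta_t)\ge 1-\sum\delta_t\ge 1-\eps$, which gives the claimed two-sided bound $(1-\eps)H_0\le H_T\le(1+\eps)H_0$ pointwise in $u$. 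The hypothesis $m\ge\kappa d$ of \cref{lem:body-partial} holds at each round since $|S_t|\ge\kappa' d/\eps^2\ge\kappa d$.

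The main obstacle is bookkeeping rather than geometry, and there are two points to check. First, \cref{lem:body-partial} must be applied to bodies restricted to the current support $S_t$, with $C^{(t)}$ the current weighted sum rather than the original $C$. This is what makes the error multiplicative relative to $H_t$, and it is legitimate because the dimension parameter only needs an upper bound $d$. Second, the geometric-sum argument requires the guaranteed shrinkage per round ($m/4$ coordinates become zero). Indices with $z_i=-1$ give $w^{(t+1)}_i=0$, and indices with $z_i=1$ merely double their weight, which is harmless. The degenerate case $d=0$ is trivial: every $h_{C_i}\equiv 0$, so $\lambda=0$ works.
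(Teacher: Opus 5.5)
Your proposal is correct and follows essentially the same route as the paper: iterate \cref{lem:body-partial} on the reweighted active summands $w_i^{(t)}C_i$, using only the upper bound $d$ on the dimension, and delete at least a quarter of the support per round via the $-1$ coordinates. The total distortion is then controlled by a geometric series dominated by the last round, where $|S_{T-1}|\ge \kappa' d/\eps^2$. Your per-round bound $H_{t+1}(u)\in[(1-\delta_t)H_t(u),(1+\delta_t)H_t(u)]$ is exactly what the lemma gives and is the right direction to iterate. The stray remark that $|S_T|\ge\frac34$ times the previous size should read $\le$, but it is never used.
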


\begin{proof}
Our proof closely follows the strategy used to prove \cite[Theorem 26]{RR26}. To start, we discard all indices for which $h_{C_i}(u)$ is identically zero.
Note that if $d=0$, then $\sum_{i = 1}^m h_{C_i}(u)\equiv0$, and choosing every $\lambda_i=0$ proves the theorem.  Hence assume $d\geq1$.
Note also that it is enough to consider $0<\eps\le 1/2$.  Indeed, for $1/2<\eps<1$, applying the result with accuracy $1/2$ gives a stronger approximation and uses $O(d)=O(d/\eps^2)$ summands.

Let $A$ and $\kappa$ be the constants from \cref{lem:body-partial}, and set
\[
  q=\sqrt{\frac34},
  \qquad
  B=\frac{A}{1-q}.
\]
Next, we choose a universal constant $C_\star$ so large that
\begin{equation}\label{eq:Cstar-choice}
  \tau_\star\ge 4\kappa,
  \qquad
  \frac{B}{\sqrt{C_\star}}\le\frac12.
\end{equation}
Define the target support size
\begin{equation}\label{eq:M-def}
  M=\left\lceil C_\star \cdot \frac{d}{\eps^2}\right\rceil.
\end{equation}

The guiding intuition for building the sparsifier is that \cref{lem:body-partial} provides a ``weight perturbation'' to be applied to the sparsifier (whose weights start off as being all $1$'s). In this way, when a support functions receive a weight perturbation of $-1$, this zeroes out the contribution from that support function. We then iteratively apply this procedure until we reach our desired sparsity. 

We maintain nonnegative weights $w_i^{(t)}$ and the current Minkowski sum
\[
  C^{(t)}=\sum_{i=1}^m w_i^{(t)}C_i.
\]
Initially $w_i^{(0)}=1$, so $C^{(0)}=C$.  Let
\[
  S_t=\{i:w_i^{(t)}>0\},
  \qquad
  m_t=|S_t|.
\]
If $m_t\le M$, we terminate the algorithm, as this implies we have reached our desired sparsity.  Otherwise $m_t>M\ge\kappa d$, and we apply \cref{lem:body-partial} to the active summands
\[
  D_i^{(t)}=w_i^{(t)}C_i,
  \qquad i\in S_t.
\]
By doing this, we obtain $z^{(t)}\in[-1,1]^{S_t}$ with at least $m_t/2$ saturated coordinates such that for every $u$:
\begin{equation}\label{eq:step-disc}
  \left | {\sum_{i\in S_t}z_i^{(t)}h_{D_i^{(t)}}(u)} \right |
  \le \delta_t h_{C^{(t)}}(u),
  \qquad
  \delta_t=A\sqrt{\frac{d}{m_t}}.
\end{equation}
By $m_t>M$ and \eqref{eq:Cstar-choice},
\[
  \delta_t < \frac{A}{\sqrt{C_\star}}\eps
  \le \frac{B}{\sqrt{C_\star}}\eps
  \le \frac{\eps}{2}<1.
\]
As proved in \cref{lem:body-partial}, at least $m_t/4$ of the saturated coordinates are equal to $-1$, i.e.,
\[
  \left | \left \{i\in S_t:z_i^{(t)}=-1 \right \} \right |\ge\frac{m_t}{4}.
\]
We now apply this $z$ as our weight perturbation: we update
\begin{equation}\label{eq:weight-update}
  w_i^{(t+1)}=(1+z_i^{(t)})w_i^{(t)}
  \qquad(i\in S_t),
\end{equation}
and retain weight zero outside $S_t$.  The weights remain nonnegative, and every coordinate with $z_i^{(t)}=-1$ disappears.  Consequently,
\begin{equation}\label{eq:support-shrink}
  m_{t+1}\le\frac34m_t.
\end{equation}

The second detail is that we must manage the deterioration of the sparsifier accuracy. For this, we can observe that 
\begin{align*}
  h_{C^{(t+1)}}(u)
  &=\sum_{i\in S_t}(1+z_i^{(t)})h_{D_i^{(t)}}(u)\\
  &=h_{C^{(t)}}(u)
    +\sum_{i\in S_t}z_i^{(t)}h_{D_i^{(t)}}(u).
\end{align*}
Thus \eqref{eq:step-disc} and the support-function characterization of inclusion imply that for every $u$,
\begin{equation}\label{eq:step-inclusion}
  (1-\delta_t) \cdot h_{C^{(t+1)}}(u)
  \leq h_{C^{(t)}}(u)
  \leq (1+\delta_t) \cdot h_{C^{(t+1)}}(u).
\end{equation}

The preceding equation bounds the accuracy deterioration in a single round of weight updates. We now bound the total result of these updates across all iterations. We let $T$ be the first index for which $m_T\le M$.  If $T=0$, there is nothing to prove.  Otherwise $m_{T-1}>M$.  From \eqref{eq:support-shrink}, for $0\le t\le T-1$,
\[
  m_t
  \ge \left(\frac43\right)^{T-1-t}m_{T-1}.
\]
Therefore
\begin{align}
  \sum_{t=0}^{T-1}\delta_t
  &\le A\sqrt{\frac{d}{m_{T-1}}}
  \sum_{j=0}^{\infty}\left(\sqrt{\frac34}\right)^j\notag\\
  &<\frac{A}{1-\sqrt{3/4}}\sqrt{\frac{d}{M}}
  =B\sqrt{\frac{d}{M}}
  \le \frac{B}{\sqrt{C_\star}}\eps
  \le\frac{\eps}{2}.
  \label{eq:error-sum}
\end{align}
In particular, every $\delta_t<1$.  Iterating \eqref{eq:step-inclusion} gives
\[
  \left ( \prod_{t = 0}^{T-1}(1-\delta_t) \right )\cdot h_{C^{(t)}}(u)
  \leq h_{C}(u)
  \leq \left ( \prod_{t = 0}^{T-1}(1+\delta_t) \right )\cdot h_{C^{(t)}}(u).
\]
Since $\prod_t(1-\delta_t)\ge 1-\sum_t\delta_t$,
\[
  \prod_{t=0}^{T-1}(1-\delta_t)
  \ge 1-\eps.
\]
Likewise, by \eqref{eq:error-sum},
\[
  \prod_{t=0}^{T-1}(1+\delta_t)
  \le \exp\left(\sum_{t=0}^{T-1}\delta_t\right)
  \le e^{\eps/2}
  \le 1+\eps,
\]
where the last inequality holds for $0\le\eps\le 1$.

All together, this then implies that for every $u$, 
\[
  (1-\eps) \cdot h_C(u)
  \subseteq h_{C^{(T)}}(u)
  \subseteq (1+\eps) h_C(u).
\]
Finally,
\[
  C^{(T)}=\sum_{i=1}^m w_i^{(T)}C_i
\]
is supported on $m_T\le M=O(d/\eps^2)$ indices.  Taking $\lambda_i=w_i^{(T)}$ completes the proof.
\end{proof}

\section{Deriving Seminorm Sparsifiers}\label{sec:sparsifiersSemiNorm}

In this section, we now adapt the results from the previous section to building \emph{seminorm} sparsifiers. We first recall the definition of a seminorm:

\begin{definition}\label{def:seminorms}
    A function $f: \R^n \rightarrow \R_{\geq 0}$ is said to be a seminorm if it satisfies the following properties:
    \begin{enumerate}[(1)]
        \item \parhead{Positive Homogeneity} For all $x \in \R^n$ and $\lambda \in \R$, $f(\lambda x) = |\lambda| \cdot f(x)$.
        \item \parhead{Subadditivity} For all $x, y \in \R^n$, $f(x + y) \leq f(x) + f(y)$.
\end{enumerate}
\end{definition}

Importantly, we have the following claim which relates seminorms to support functions of centrally symmetric convex bodies:

\begin{fact}[Theorem 1.7.1 of \cite{Schneider14}]\label{fact:seminormImpliesConvex}
    Let $f: \R^n \rightarrow \R_{\geq 0}$ be a seminorm. Then, there exists a compact, convex, centrally symmetric convex set $C \subseteq \R^n$ such that
    \[
    f(x) = h_{C}(x).
    \]
\end{fact}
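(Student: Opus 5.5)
The plan is to construct $C$ explicitly as the polar of the unit ball of $f$, restricted to the subspace where $f$ is not degenerate. Let $N = \{x : f(x) = 0\}$. By positive homogeneity and subadditivity, $N$ is a linear subspace: if $f(x)=f(y)=0$, then $f(\alpha x + \beta y)\le |\alpha| f(x) + |\beta| f(y) = 0$. Let $V = N^{\perp}$. Then $f(x) = f(P_V x)$ for every $x$, since $f(x)\le f(P_Vx)+f(x-P_Vx) = f(P_V x)$ and symmetrically $f(P_Vx) \le f(x) + f(P_V x - x) = f(x)$. On $V$, $f$ is a genuine norm. Because all norms on a finite-dimensional space are equivalent, there exist $0<a\le b$ with $a\|x\|_2 \le f(x) \le b\|x\|_2$ for $x\in V$.

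Next, define
\[
C = \{ y \in V : \langle y, x\rangle \le f(x) \text{ for all } x \in \R^n\}.
\]
This set has the required properties:
\begin{itemize}
\item It is convex and closed, being an intersection of closed half-spaces.
\item It is centrally symmetric, since $f(-x) = f(x)$.
\item It is bounded, since $y\in C$ gives $\|y\|_2 = \langle y, y/\|y\|_2\rangle \le f(y/\|y\|_2) \le b$.
\end{itemize}
Hence $C$ is compact. The inequality $h_C(x) \le f(x)$ is then immediate from the definition of $C$.

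The main step is the reverse inequality $h_C(x) \ge f(x)$, which is a Hahn--Banach / supporting-hyperplane argument. Fix $x$; by the projection identity we may assume $x\in V$ with $f(x)>0$, since if $f(x) = 0$ then $h_C(x) \ge 0 = f(x)$ trivially because $0 \in C$. Consider the unit ball $U = \{v \in V : f(v) \le 1\}$. It is a convex body in $V$ with nonempty relative interior, since it contains the ball of radius $1/b$. The point $x/f(x)$ lies on its boundary, so the supporting hyperplane theorem gives a nonzero $y \in V$ with
\[
\langle y, v\rangle \le \langle y, x/f(x)\rangle \quad \text{for all } v\in U.
\]
Normalize $y$ so that $\langle y, x/f(x)\rangle = 1$. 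This is possible because the supremum of $\langle y,\cdot\rangle$ over $U$ is positive, as $U$ contains a ball around $0$. By homogeneity we then get $\langle y, v\rangle \le f(v)$ for all $v\in V$, and via projection for all of $\R^n$, since $y\in V$ implies $\langle y,v\rangle=\langle y,P_Vv\rangle$. So $y\in C$, and $h_C(x)\ge \langle y,x\rangle = f(x)$.

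I expect the reverse inequality to be the only real obstacle. The rest is bookkeeping about the kernel $N$, which is needed so that $C$ is bounded and the supporting-hyperplane argument runs in a space where $U$ has interior. Alternatively, one could cite the bipolar theorem or Fenchel--Moreau duality: $f$ is convex, finite, hence continuous, and positively homogeneous, so it equals its biconjugate, which is exactly $h_C$.
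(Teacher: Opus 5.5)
Your proposal is correct and follows the paper's route: the paper cites Schneider's Theorem 1.7.1, whose proof uses exactly your set $C=\{y : \langle y, x\rangle \le f(x) \text{ for all } x\}$ with central symmetry coming from $f(-x)=f(x)$, and your supporting-hyperplane argument supplies the content of that cited theorem. Your restriction $y \in V$ is automatic, since $x \in N$ forces $\langle y, x\rangle \le 0$ and $\langle y,-x\rangle\le 0$. The kernel bookkeeping is also not actually needed: you only use the upper bound $f \le b\|\cdot\|_2$, which holds on all of $\R^n$, and $\{v\in\R^n : f(v)\le 1\}$ already contains a neighbourhood of $0$, so the argument runs verbatim in $\R^n$ (handle $f(v)=0$ by noting $tv$ lies in the sublevel set for all $t>0$).
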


\begin{proof}[Remark.]
Note that \cite{Schneider14} Theorem 1.7.1 uses the notion of ``sublinear functions'' which refer to functions that satisfy both positive homogeneity and subaddativity, thereby including seminorms (see \cref{def:seminorms}). Theorem 1.7.1 does not explicitly say that $C \subseteq \R^n$ is centrally symmetric in the case of seminorms, but this follows from the proof \cite[p.46]{Schneider14} which shows that
\[
C := \{x \in \R^n \mid \langle x, u\rangle \le f(u)\text{ for every }u \in \R^n\}
\]
which is centrally symmetric.
\end{proof}

With this, we can now conclude with the statement of sparsifying sums of seminorms:

\begin{theorem}\label{thm:seminorm-sparsifiers}
    Let $f_1, \dots , f_m: \R^n \rightarrow \R_{\geq 0}$ be a collection of seminorms. Then, for any $\eps > 0$, there exists a set of weights $\lambda_1, \dots , \lambda_m \in \R_{\geq 0}$ with only $O(n / \eps^2)$ of the weights non-zero such that for every $x \in \R^n$:
    \[
    (1 - \eps) \cdot \sum_{i = 1}^m  f_i(x) \leq \sum_{i = 1}^m \lambda_i \cdot f_i(x) \leq  (1 + \eps) \cdot \sum_{i = 1}^m  f_i(x).
    \]
\end{theorem}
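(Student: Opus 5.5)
The plan is to reduce \cref{thm:seminorm-sparsifiers} directly to \cref{thm:body-main} via \cref{fact:seminormImpliesConvex}. For each seminorm $f_i$, \cref{fact:seminormImpliesConvex} provides a compact, centrally symmetric convex set $C_i \subseteq \R^n$ with $f_i = h_{C_i}$. Concretely,
\[
C_i = \{x \in \R^n \mid \langle x, u\rangle \le f_i(u) \text{ for every } u \in \R^n\},
\]
which is nonempty since it contains $0$. It is closed as an intersection of closed halfspaces. It is bounded because a seminorm on $\R^n$ is finite and convex, hence continuous, so it is bounded on the unit sphere; this bounds $\langle x, u\rangle$ uniformly over unit vectors $u$. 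The set is symmetric because $f_i(-u) = f_i(u)$.

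Next I would apply \cref{thm:body-main} with $E = \R^n$ to the sets $C_1, \dots, C_m$. Set $C = C_1 + \cdots + C_m$ and $d = \dim \mathrm{span}(C)$. Since $C \subseteq \R^n$, we have $d \le n$. \cref{thm:body-main} then gives, for each $0 < \eps < 1$, weights $\lambda_i \ge 0$ supported on at most $C_\star d/\eps^2 \le C_\star n/\eps^2$ indices. These satisfy $(1-\eps)\sum_i h_{C_i}(x) \le \sum_i \lambda_i h_{C_i}(x) \le (1+\eps)\sum_i h_{C_i}(x)$ for every $x \in \R^n$. Substituting $h_{C_i} = f_i$ yields exactly the claimed inequality.

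Finally, I would dispose of the case $\eps \ge 1$, which \cref{thm:body-main} does not cover. Here the lower bound $(1-\eps)\sum_i f_i \le 0$ is trivial. So it suffices to take the sparsifier for, say, accuracy $1/2$, which has $O(n)$ nonzero weights. Its upper bound $(3/2)\sum_i f_i \le (1+\eps)\sum_i f_i$ holds, and the count $O(n) \le O(n/\eps^2)$ holds when $\eps$ is bounded. For arbitrarily large $\eps$ the bound $O(n/\eps^2)$ would fall below $n$. The most natural reading is that $\eps$ is at most a constant, or one can simply take all $\lambda_i = 0$ when $\eps \ge 1$, since $0 \in [(1-\eps)S, (1+\eps)S]$ for any $S \ge 0$. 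That choice uses zero support and settles the case cleanly.

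None of these steps is a real obstacle, since the whole content is in \cref{thm:body-main}. The only point needing care is checking that each $C_i$ is genuinely compact (boundedness via continuity of seminorms) and that the dimension parameter $d$ is at most $n$, so that the support bound is $O(n/\eps^2)$.
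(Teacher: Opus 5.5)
Your proposal is correct and follows the paper's proof: realize each $f_i$ as $h_{C_i}$ via \cref{fact:seminormImpliesConvex}, then apply \cref{thm:body-main} using $d \le n$. Your extra checks are valid small additions the paper omits: compactness of $C_i$ via continuity of finite convex functions, and taking all $\lambda_i = 0$ when $\eps \ge 1$ (the paper's proof skips this case, even though \cref{thm:body-main} only covers $\eps < 1$).
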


\begin{proof}
    Indeed, for each seminorm $f_i$, let $C_i$ denote the corresponding centrally symmetric, convex set as guaranteed by \cref{fact:seminormImpliesConvex}. It is then the case that for every $x \in \R^n$ and every $i \in [m]$ that 
    \[
    h_{C_i}(x) = f_i(x).
    \]

    We can then invoke \cref{thm:body-main} on the collection $C_1, \dots , C_m$. Note that the dimension of the span of these bodies is bounded by $n$, as they are contained in $\R^n$. This guarantees $\lambda_1, \dots , \lambda_m \in \R_{\geq 0}$ such that at most $O(n / \eps^2)$ of the weights are non-zero, and yet still, for every $u \in \R^n$
    \[
    (1-\eps) \cdot \sum_{i = 1}^m h_{C_i}(u)
  \leq\sum_{i = 1}^m \lambda_i \cdot h_{C_i}(u)
  \leq (1+\eps) \cdot \sum_{i = 1}^m h_{C_i}(u).
    \]
    This then immediately implies that the same weights $\lambda_1, \dots , \lambda_m$ also satisfy 
    \[
    (1 - \eps) \cdot \sum_{i = 1}^m  \cdot f_i(x) \leq \sum_{i = 1}^m \lambda_i \cdot f_i(x) \leq  (1 + \eps) \cdot \sum_{i = 1}^m  \cdot f_i(x)
    \]
    for every $x \in \R^n$. This concludes the proof.
\end{proof}

\section{Applications}\label{sec:applications}

With \cref{thm:seminorm-sparsifiers} in hand, we are now able to derive the sparsifiability of many concrete objects. In this section, we discuss two such applications, both improving over the state of the art. 

\subsection{Hypergraph Cut Sparsifiers}

To start, we consider the setting of \emph{hypergraph cut sparsifiers}. In this setting, one is given a hypergraph $H$ and a hypergraph cut-sparsifier is a re-weighted subset of hyperedges which preserves all cut-sizes:

\begin{definition}\label{def:hypergraphCutSparsifiers}
    Let $H = (V, E, w)$ be a hypergraph, where $w: E \rightarrow \R_{\geq 0}$ is a set of non-negative edge-weights. For a set $S \subseteq V$, the cut-size of $S$ is defined as
    \[
    \delta_H(S) := \sum_{e \in E} w(e) \cdot \mathbf{1}[S \cap e \neq \emptyset \text{ and } (V-S) \cap e \neq \emptyset].
    \]
    For a parameter $\eps > 0$, hypergraph $H'$ is said to be a $(1 \pm \eps)$ cut-sparsifier of $H$ if for every $S \subseteq V$ it is the case that 
    \[
    (1 - \eps) \delta_H(S) \leq \delta_{H'}(S) \leq (1 + \eps) \delta_H(S).
    \]
\end{definition}

We now define an auxiliary relaxation of the hypergraph cut function:

\begin{definition}\label{def:relaxedHyperedge}
    Let $n$ be an integer, and let $e \subseteq [n]$. For $x \in \R^n$, define $f_e(x) = \max_{u, v \in e} |x_u - x_v|$.
\end{definition}

Importantly, we have the following immediate consequences:

\begin{claim}\label{clm:hypergraphCutSeminorm}
     Let $n$ be an integer, and let $e \subseteq [n]$. Then, for $f_e$ as defined in \cref{def:relaxedHyperedge}, we have that:
     \begin{enumerate}[(1)]
         \item $f_e$ is a seminorm.
         \item For any $S \subseteq [n]$, $f_e(\mathbf{1}_S) = \mathbf{1}[S \cap e \neq \emptyset \text{ and } ([n]-S) \cap e \neq \emptyset]$.
     \end{enumerate}
\end{claim}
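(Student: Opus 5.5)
We verify the two parts of \cref{clm:hypergraphCutSeminorm} directly from \cref{def:relaxedHyperedge} and \cref{def:seminorms}; no earlier machinery is needed. Throughout we assume $e \neq \emptyset$. If $e = \emptyset$, the maximum is over an empty set and we adopt the convention $f_e \equiv 0$, which is trivially a seminorm and agrees with the indicator in part (2), since $S \cap e = \emptyset$.

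For part (1), first note that $f_e(x) \geq 0$: taking $u = v$ shows the maximum is at least $0$. For positive homogeneity, fix $\lambda \in \R$. For every pair $u,v \in e$ we have $|\lambda x_u - \lambda x_v| = |\lambda|\,|x_u - x_v|$, and taking the maximum over pairs gives $f_e(\lambda x) = |\lambda| f_e(x)$. For subadditivity, fix $x, y$ and choose a pair $(u,v)$ attaining the maximum for $x+y$, which exists because $e$ is finite. The triangle inequality for $|\cdot|$ then gives
\[
f_e(x+y) = |(x_u - x_v) + (y_u - y_v)| \le |x_u-x_v| + |y_u-y_v| \le f_e(x) + f_e(y).
\]
An alternative is to write $f_e$ as a finite maximum of the seminorms $x \mapsto |x_u - x_v|$ and observe that a pointwise maximum of seminorms is again a seminorm.

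For part (2), evaluate at $x = \mathbf{1}_S$. Each coordinate $x_w$ lies in $\{0,1\}$, so every difference $|x_u - x_v|$ lies in $\{0,1\}$. Hence $f_e(\mathbf{1}_S) \in \{0,1\}$, and it equals $1$ exactly when there are $u, v \in e$ with $x_u = 1$ and $x_v = 0$. That condition says $u \in S \cap e$ and $v \in e \setminus S = ([n]-S)\cap e$, so both intersections are nonempty. Conversely, if both intersections are nonempty, picking $u$ and $v$ from them attains the value $1$.

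Neither part presents a real obstacle. The only points needing care are the empty-hyperedge convention and the existence of a maximizing pair, which finiteness of $e$ guarantees. The substantive use of the claim comes afterward: writing $\delta_H(S) = \sum_e w(e) f_e(\mathbf{1}_S)$ and applying \cref{thm:seminorm-sparsifiers} to the seminorms $w(e) f_e$ on $\R^{|V|}$.
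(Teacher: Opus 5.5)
Your proof is correct and matches the paper's argument essentially step for step: nonnegativity, homogeneity, and subadditivity via the triangle inequality applied pairwise, then evaluating at $\mathbf{1}_S$, where each difference lies in $\{0,1\}$. Your convention $f_e \equiv 0$ for an empty hyperedge is a harmless addition that the paper leaves implicit.
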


\begin{proof}
    Indeed, we can see that $f_e(x) \geq 0$. Likewise, for any $\lambda \in \R$, $f_e(\lambda \cdot x) = \max_{u, v \in e} |\lambda \cdot x_u - \lambda \cdot x_v| = |\lambda| \cdot \max_{u, v \in e} |x_u - x_v| = |\lambda| \cdot f_e(x)$. For subaddativity, we can observe that 
    \[
    f_e(x + y) = \max_{u, v \in e} |(x+y)_u - (x+y)_v| \leq \max_{u, v \in e} \left ( |x_u - x_v| + |y_u - y_v| \right ) \leq f_e(x) + f_e(y).
    \]

    Now, for the second condition, we can observe that
    \[
    f_e(\mathbf{1}_S) = \max_{u, v \in e} |(\mathbf{1}_S)_u - (\mathbf{1}_S)_v|.
    \]
    This expression evaluates to $1$ if and only if there is some choice of $u, v$ such that $u \in e \cap S$, $v \in e \cap ([n] - S)$, and otherwise the expression equals $0$. This exactly matches $\mathbf{1}[S \cap e \neq \emptyset \text{ and } ([n]-S) \cap e \neq \emptyset]$.
\end{proof}

With this established, we now have the following theorem:

\begin{theorem}
    Let $H = (V, E, w)$ be any hypergraph (with non-negative weights). Then, for any $\eps > 0$, there exists a hypergraph $H' = (V, E, w')$ (again, with non-negative weights) such that:
    \begin{enumerate}[(1)]
        \item $H'$ is a $(1 \pm \eps)$ hypergraph cut-sparsifier of $H$. 
        \item $H'$ has at most $O(|V| / \eps^2)$ hyperedges with non-zero weight: i.e., $ \left | \{ e \in E: w'(e) > 0\}\right | \leq O(|V| / \eps^2)$.
    \end{enumerate}
\end{theorem}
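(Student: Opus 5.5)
The plan is to reduce directly to \cref{thm:seminorm-sparsifiers} using the seminorm relaxation from \cref{def:relaxedHyperedge}. Identify $V$ with $[n]$ where $n=|V|$, and for each hyperedge $e\in E$ consider the function $g_e(x) := w(e)\cdot f_e(x)$. By \cref{clm:hypergraphCutSeminorm}, each $f_e$ is a seminorm. A nonnegative multiple of a seminorm is again a seminorm, since positive homogeneity and subadditivity are preserved under scaling by $w(e)\ge 0$. We may discard hyperedges with $w(e)=0$, since they contribute nothing to any cut and can be assigned $w'(e)=0$. Hence $\{g_e\}_{e\in E}$ is a finite collection of seminorms on $\R^n$.

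Next, apply \cref{thm:seminorm-sparsifiers} to the collection $\{g_e\}_{e \in E}$ with accuracy parameter $\eps$. This yields weights $\lambda_e\ge 0$, at most $O(n/\eps^2)$ of them nonzero, such that for all $x\in\R^n$,
\[
(1-\eps)\sum_{e} g_e(x)\le \sum_e \lambda_e g_e(x)\le (1+\eps)\sum_e g_e(x).
\]
Define $w'(e):=\lambda_e w(e)\ge 0$. Then $w'(e)>0$ only if $\lambda_e>0$, so $H'=(V,E,w')$ has at most $O(|V|/\eps^2)$ hyperedges with nonzero weight. This gives item (2).

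For item (1), specialize the inequality to indicator vectors $x=\mathbf{1}_S$ for $S\subseteq V$. By the second part of \cref{clm:hypergraphCutSeminorm},
\[
\sum_e g_e(\mathbf{1}_S)=\sum_e w(e)\,\mathbf{1}[S\cap e\ne\emptyset,\ (V-S)\cap e\neq\emptyset]=\delta_H(S),
\]
and similarly $\sum_e\lambda_e g_e(\mathbf{1}_S)=\delta_{H'}(S)$. Substituting these identities into the inequality gives exactly the $(1\pm\eps)$ cut-sparsifier condition of \cref{def:hypergraphCutSparsifiers}.

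No step here is a real obstacle: all the heavy lifting is in \cref{thm:seminorm-sparsifiers}. The only care needed is bookkeeping. One must absorb the original weights $w(e)$ into the seminorms, because \cref{thm:seminorm-sparsifiers} is stated for unweighted sums, and check that this scaling preserves the seminorm property. One must also confirm that the dimension parameter is $n=|V|$. If $\eps\ge 1$ the statement is either vacuous or follows from the $\eps=1/2$ case, which is already handled inside \cref{thm:body-main}.
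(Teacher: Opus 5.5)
Your proposal is correct and follows the paper's proof step for step: the paper also applies \cref{thm:seminorm-sparsifiers} to the weighted seminorms $w(e)\cdot f_e$, sets $w'(e)=\lambda_e w(e)$, and specializes to $x=\mathbf{1}_S$ via \cref{clm:hypergraphCutSeminorm}. Your extra bookkeeping is harmless and slightly more careful than the paper's write-up; this covers checking that scaling by $w(e)\ge 0$ preserves the seminorm property and noting the $\eps\ge 1$ case.
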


\begin{proof}
    We consider the collection of seminorms $w(e) \cdot f_e: \R^V \rightarrow \R_{\geq 0}$ for $f_e$ as defined in \cref{def:relaxedHyperedge}. Letting $m$ denote the number of hyperedges, by invoking \cref{thm:seminorm-sparsifiers}, we immediately obtain that there exists a set of weights $\lambda_1, \dots , \lambda_m \in \R_{\geq 0}$ with only $O(|V| / \eps^2)$ of the weights non-zero such that for every $x \in \R^V$:
    \[
    (1 - \eps) \cdot \sum_{e \in E}  w(e) f_e(x) \leq \sum_{e \in E} \lambda_e \cdot w(e) \cdot f_e(x) \leq  (1 + \eps) \cdot \sum_{e \in E}  w(e) f_e(x).
    \]

    Importantly, this implies that for every $S \subseteq V$ that 
    \[
    (1 - \eps) \cdot \sum_{e \in E}  w(e) f_e(\mathbf{1}_S) \leq \sum_{e \in E} \lambda_e \cdot w(e) \cdot f_e(\mathbf{1}_S) \leq  (1 + \eps) \cdot \sum_{e \in E}  w(e) f_e(\mathbf{1}_S).
    \]

    By the equivalence of \cref{clm:hypergraphCutSeminorm}, we then see that the preceding equation is equivalent to guaranteeing that for every $S \subseteq V$,
    \[
    (1 - \eps) \delta_H(S) \leq \delta_{H'}(S) \leq (1 + \eps) \delta_H(S),
    \]
    where $H'$ is the hypergraph with edge set $E$ and weights $w'(e) = \lambda(e) \cdot w(e)$. This then concludes the claim, as at most $O(|V| / \eps^2)$ edges have non-zero weight under $w'$.
\end{proof}

\subsection{Symmetric Submodular Sparsification}

Following the work of \cite{jambulapati2023sparsifying}, in this section, we show how \cref{thm:seminorm-sparsifiers} leads to better sparsifiers for sums of symmetric submodular functions. 

Recall that for a set $[n]$, the function $f: 2^{[n]} \rightarrow \R_{\geq 0}$ is said to be submodular if for all $S, T \subseteq [n]$, 
\[
f(S) + f(T) \geq f(S \cap T) + f(S \cup T).
\]
Such a submodular function is additionally said to be \emph{symmetric} if $f(S) = f([n] - S)$ for every $S \subseteq V$.

We now define sparsification for sums of symmetric submodular functions:
\begin{definition}
    Let $n$ be an integer, and let $f_1, \dots , f_m: 2^{[n]} \rightarrow \R_{\geq 0}$ be symmetric submodular functions. For a parameter $\eps > 0$, we say that a collection of weights $\lambda_1, \dots , \lambda_m \in \R_{\geq 0}$ is a $(1 \pm \eps)$ sparsifier of $f_1, \dots , f_m$ if for every $S \subseteq [n]$ it is the case that 
    \[
    (1 -\eps) \cdot \sum_{i = 1}^m f_i(S) \leq \sum_{i =1}^m \lambda_i \cdot f_i(S) \leq (1 + \eps) \cdot \sum_{i = 1}^m f_i(S).
    \]
\end{definition}

As a direct consequence of \cref{thm:seminorm-sparsifiers}, we obtain the following:

\begin{theorem}\label{thm:symmetricSparsification}
    Let $n$ be an integer, and let $f_1, \dots , f_m: 2^{[n]} \rightarrow \R_{\geq 0}$ be symmetric submodular functions. For any parameter $\eps > 0$, there exists a collection of weights $\lambda_1, \dots , \lambda_m \in \R_{\geq 0}$ such that:
    \begin{enumerate}[(1)]
        \item $\lambda_1, \dots , \lambda_m$ constitutes a $(1 \pm \eps)$ sparsifier of $f_1, \dots , f_m$.
        \item At most $O(n / \eps^2)$ of the weights $\lambda_1, \dots , \lambda_m$ are non-zero. 
    \end{enumerate}
\end{theorem}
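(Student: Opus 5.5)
The plan is to reduce to \cref{thm:seminorm-sparsifiers} through the Lov\'asz extension, following \cite{jambulapati2023sparsifying}. For each symmetric submodular $f_i$, we first normalize: $f_i(\emptyset)=f_i([n])$ by symmetry. Then we replace $f_i$ by $g_i(S) := f_i(S) - f_i(\emptyset)$. This is still symmetric and submodular, and it satisfies $g_i(\emptyset)=g_i([n])=0$. Submodularity with symmetry gives $f_i(S) = \tfrac12(f_i(S)+f_i([n]-S)) \ge \tfrac12(f_i(\emptyset)+f_i([n])) = f_i(\emptyset)$, so $g_i \ge 0$. Next we let $\hat g_i:\R^n\to\R$ be the Lov\'asz extension of $g_i$. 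By Lov\'asz's theorem \cite{lovasz1983submodular}, $\hat g_i$ is convex because $g_i$ is submodular. It is positively homogeneous by construction. It is invariant under adding multiples of $\mathbf{1}$ because $g_i([n])=g_i(\emptyset)=0$. It is even, since symmetry of $g_i$ gives $\hat g_i(-x)=\hat g_i(x)$. Convexity plus positive homogeneity gives subadditivity. Evenness plus convexity gives $\hat g_i(x)\ge \hat g_i(0)=0$. So $\hat g_i$ is a seminorm with $\hat g_i(\mathbf{1}_S)=g_i(S)$.

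We cannot just sparsify $\sum_i \hat g_i$ and add back constants, because the constant terms $f_i(\emptyset)$ also have to be preserved. I would handle them with one extra coordinate. Define on $\R^{n+1}$ the function $F_i(x,y) := \hat g_i(x) + f_i(\emptyset)\,|y|$. This is a sum of two seminorms, hence a seminorm, and $F_i(\mathbf{1}_S,1) = f_i(S)$. Apply \cref{thm:seminorm-sparsifiers} to $F_1,\dots,F_m$ in dimension $n+1$. This yields weights $\lambda_i$ with $O((n+1)/\eps^2)=O(n/\eps^2)$ nonzero entries such that $\sum_i\lambda_iF_i$ is within a $(1\pm\eps)$ factor of $\sum_iF_i$ everywhere. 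Evaluating at $(\mathbf{1}_S,1)$ gives exactly the required sandwich $(1-\eps)\sum_i f_i(S)\le\sum_i\lambda_if_i(S)\le(1+\eps)\sum_if_i(S)$ for every $S\subseteq[n]$.

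The main obstacle is verifying the seminorm properties of the Lov\'asz extension, in particular evenness. I would check it through the formula $\hat g(x)=\sum_k (x_{\pi(k)}-x_{\pi(k+1)})\, g(\{\pi(1),\dots,\pi(k)\})$, where $x_{\pi(1)}\ge\cdots\ge x_{\pi(n)}$. Negating $x$ reverses the order $\pi$, which replaces each prefix set by its complement; symmetry of $g$ then gives $\hat g(-x)=\hat g(x)$. The invariance under adding multiples of $\mathbf{1}$ uses $g([n])=0$. Once these are in place, everything else is immediate from \cref{thm:seminorm-sparsifiers}.
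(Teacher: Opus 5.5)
Your proposal is correct. It uses the same basic reduction as the paper: Lov\'asz extension, then \cref{thm:seminorm-sparsifiers}, then evaluation at indicator vectors. It differs in one substantive respect.

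The paper's proof applies \cref{fact:lovaszExtension} directly to each $f_i$. That fact asserts $\hat f_i$ is a seminorm with $\hat f_i(\mathbf{1}_S)=f_i(S)$ for every $S$, and the paper then sparsifies in dimension $n$. But every seminorm vanishes at $\mathbf{1}_\emptyset=0$, so that fact can only hold when $f_i(\emptyset)=0$. The paper's argument therefore tacitly assumes normalized functions.

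You remove this assumption in two steps:
\begin{enumerate}[(1)]
\item You normalize to $g_i=f_i-f_i(\emptyset)$, and you correctly show $g_i$ remains symmetric, submodular and nonnegative.
\item You add the extra coordinate $F_i(x,y)=\hat g_i(x)+f_i(\emptyset)\,|y|$. This forces a single weight vector to preserve $g_i(S)+f_i(\emptyset)$ simultaneously. The cost is working in dimension $n+1$, which is still $O(n/\eps^2)$ for $n\ge 1$.
\end{enumerate}

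Your checks on $\hat g_i$ are all correct: evenness (reversing the order turns prefixes into complements), invariance along $\mathbf{1}$ (using $g_i([n])=0$), subadditivity, and nonnegativity.

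In short, the paper's version gains brevity by citing the Lov\'asz fact. Yours is self-contained and also covers symmetric submodular functions with $f_i(\emptyset)>0$, which the statement as written allows.
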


The key ingredient in proving \cref{thm:symmetricSparsification} (paralleling the key ingredient of the analogous statement in \cite{jambulapati2023sparsifying}) is the following:

\begin{fact}[See, for instance, \cite{lovasz1983submodular, jambulapati2023sparsifying}]\label{fact:lovaszExtension}
    Let $n$ be an integer and let $f: 2^{[n]} \rightarrow \R_{\geq 0}$ be a symmetric submodular function. Then, the Lov\'asz extension of $f$, denoted $\hat{f}: \R^n \rightarrow \R_{\geq 0}$ is a continuous seminorm such that for any $S \subseteq [n]$, $f(S) = \hat{f}(\mathbf{1}_S)$.
\end{fact}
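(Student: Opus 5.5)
We prove this under the standard normalization $f(\emptyset)=0$. The normalization is forced: a seminorm vanishes at $0$, so $f(\emptyset)=\hat f(\mathbf 0)=0$ is needed. By symmetry it also gives $f([n])=f(\emptyset)=0$. We define the Lov\'asz extension by the threshold integral
\[
\hat f(x) \;=\; \int_{-\infty}^{\infty} f\bigl(\{i : x_i > t\}\bigr)\,dt .
\]
The integrand vanishes for $t<\min_i x_i$, where the level set is $[n]$, and for $t\ge \max_i x_i$, where it is $\emptyset$. So the integral is finite.

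Sorting $x_{\sigma(1)}\ge\cdots\ge x_{\sigma(n)}$ and setting $S_k=\{\sigma(1),\dots,\sigma(k)\}$, the integral equals
\[
\hat f(x)=\sum_{k=1}^{n-1}\bigl(x_{\sigma(k)}-x_{\sigma(k+1)}\bigr)f(S_k).
\]
This is linear on each cone of a fixed ordering, and the formulas agree on common boundaries. Hence $\hat f$ is continuous and piecewise linear. Nonnegativity is immediate from $f\ge 0$. For $x=\mathbf 1_S$, the level set $\{\mathbf 1_S>t\}$ equals $S$ for $t\in[0,1)$, equals $[n]$ for $t<0$ (contributing $0$), and equals $\emptyset$ for $t\ge 1$. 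Therefore $\hat f(\mathbf 1_S)=f(S)$.

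Next we check homogeneity. For $\lambda>0$, the substitution $t=\lambda s$ gives $\hat f(\lambda x)=\lambda\hat f(x)$, and $\hat f(0\cdot x)=0$. For $\lambda=-1$, we have $\{-x>t\}=\{x<-t\}$, which is the complement of $\{x\ge -t\}$. Symmetry gives $f(\{x<-t\})=f(\{x\ge -t\})$. The set $\{x\ge -t\}$ differs from $\{x>-t\}$ only when $-t$ is one of the finitely many values $x_i$, a measure-zero set of $t$. Substituting $s=-t$ then yields $\hat f(-x)=\hat f(x)$. Combining the two cases gives $\hat f(\lambda x)=|\lambda|\hat f(x)$ for all real $\lambda$.

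Subadditivity is the main step, and we get it from Edmonds' greedy characterization. Let
\[
B(f)=\{y\in\R^n : y(S)\le f(S)\ \forall S,\ y([n])=f([n])\},
\]
where $y(S)=\sum_{i\in S}y_i$. We show $\hat f(x)=\max_{y\in B(f)}\langle x,y\rangle$. This exhibits $\hat f$ as the support function of a compact convex set, so $\hat f(x+y)\le\hat f(x)+\hat f(y)$ follows from \cref{fact:supportProperties}.

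For the upper bound, fix $y\in B(f)$ and write $\langle x,y\rangle$ by Abel summation along the ordering $\sigma$:
\[
\langle x,y\rangle=\sum_{k<n}\bigl(x_{\sigma(k)}-x_{\sigma(k+1)}\bigr)y(S_k)+x_{\sigma(n)}\,y([n]).
\]
The coefficients $x_{\sigma(k)}-x_{\sigma(k+1)}$ are nonnegative and $y(S_k)\le f(S_k)$. The last term equals $x_{\sigma(n)}f([n])=0$ regardless of the sign of $x_{\sigma(n)}$. Hence $\langle x,y\rangle\le\hat f(x)$.

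For the matching lower bound, take the greedy vector $y^\sigma_{\sigma(k)}=f(S_k)-f(S_{k-1})$, with $S_0=\emptyset$. It attains equality in the same Abel summation, since $y^\sigma(S_k)=f(S_k)$ for every $k$. The crux is showing $y^\sigma\in B(f)$, i.e.\ $y^\sigma(T)\le f(T)$ for every $T$. We prove this by induction on $|T|$. Let $j$ be the largest index with $\sigma(j)\in T$, and set $T'=T\setminus\{\sigma(j)\}$. Then $T'\subseteq S_{j-1}$, and submodularity (diminishing marginals) gives
\[
f(S_j)-f(S_{j-1})\le f(T)-f(T').
\]
Adding this to the inductive bound $y^\sigma(T')\le f(T')$ closes the induction. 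Positive homogeneity, subadditivity, continuity, and $\hat f(\mathbf 1_S)=f(S)$ then together give the fact.
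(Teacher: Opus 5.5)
Your proof is correct and complete. The paper gives no proof of this statement; it imports it from Lov\'asz and from Jambulapati, Lee, Liu, and Sidford. What you wrote is the standard argument behind those citations:
\begin{itemize}
\item the threshold integral gives $\hat f(\mathbf 1_S)=f(S)$ and positive homogeneity;
\item symmetry of $f$ gives evenness;
\item Edmonds' greedy algorithm identifies $\hat f$ with $x\mapsto\max_{y\in B(f)}\langle x,y\rangle$, which gives subadditivity.
\end{itemize}
Making the normalization $f(\emptyset)=0$ explicit is a genuine improvement. The paper's statement silently relies on it, and so does its use in the proof of \cref{thm:symmetricSparsification}.

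Two small remarks. First, \cref{fact:supportProperties} as stated in the paper does not list subadditivity, and you never check that $B(f)$ is compact. Neither is needed: a pointwise maximum of linear functionals is subadditive directly, since $\max_{y}\langle x+z,y\rangle\le\max_{y}\langle x,y\rangle+\max_{y}\langle z,y\rangle$. Compactness holds anyway, because $-f(\{i\})=f([n])-f([n]\setminus\{i\})\le y_i\le f(\{i\})$ for every $y\in B(f)$.

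Second, your representation yields slightly more than the fact asks for. For symmetric $f$ with $f(\emptyset)=f([n])=0$, every $y\in B(f)$ has $y([n])=0$. Hence $-y(S)=y([n]\setminus S)\le f([n]\setminus S)=f(S)$, so $B(f)=-B(f)$. This gives $\hat f(-x)=\hat f(x)$ without your measure-zero substitution. It also exhibits explicitly the compact centrally symmetric body with $\hat f=h_{B(f)}$. In the application one could therefore pass $B(f)$ directly to \cref{thm:body-main}, rather than going through \cref{fact:seminormImpliesConvex}.
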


With this fact, we can now prove the stated theorem.

\begin{proof}[Proof of \cref{thm:symmetricSparsification}.]
    For each symmetric submodular function $f_i$, we replace it with its corresponding Lov\'asz extension $\hat{f_i}$. We then invoke \cref{thm:seminorm-sparsifiers} on the collection $\hat{f_i}: i \in [m]$. This returns $\lambda_1, \dots , \lambda_m \in \R_{\geq 0}$ with only $O(n / \eps^2)$ of the weights non-zero such that for every $x \in \R^n$:
    \[
    (1 - \eps) \cdot \sum_{i = 1}^m \hat{f}_i(x) \leq \sum_{i = 1}^m \lambda_i \cdot \hat{f}_i(x) \leq  (1 + \eps) \cdot \sum_{i = 1}^m \hat{f}_i(x).
    \]
    In particular, by taking $x = \mathbf{1}_S$ for any $S \subseteq [n]$, this implies that the same set of weights $\lambda_1, \dots , \lambda_m \in \R_{\geq 0}$ is also a $(1 \pm \eps)$ sparsifier of the collection $f_1, \dots , f_m$. This concludes the theorem. 
\end{proof}

As an immediate corollary, this also implies $O(n / \eps^2)$-size sparsifiers for sums of \emph{monotone} submodular functions:

\begin{definition}
    A submodular function $f: 2^{[n]} \rightarrow \R_{\geq 0}$ is said to be monotone if for any $S \subseteq [n]$ and $v \in [n] - S$, 
    \[
    f(S \cup \{v \}) \geq f(S).
    \]
\end{definition}

We then have the following corollary:

\begin{corollary}
Let $n$ be an integer, and let $f_1, \dots , f_m: 2^{[n]} \rightarrow \R_{\geq 0}$ be monotone submodular functions. For any parameter $\eps > 0$, there exists a collection of weights $\lambda_1, \dots , \lambda_m \in \R_{\geq 0}$ such that:
    \begin{enumerate}[(1)]
        \item $\lambda_1, \dots , \lambda_m$ constitutes a $(1 \pm \eps)$ sparsifier of $f_1, \dots , f_m$.
        \item At most $O(n / \eps^2)$ of the weights $\lambda_1, \dots , \lambda_m$ are non-zero. 
    \end{enumerate}
\end{corollary}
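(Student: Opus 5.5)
The plan is to reduce monotone submodular sparsification to the symmetric case, which \cref{thm:symmetricSparsification} already handles. This follows the reduction of \cite{khanna2024almost}. For each monotone submodular $f_i : 2^{[n]} \to \R_{\ge 0}$, I will introduce one auxiliary element $\star$ and work on the ground set $[n] \cup \{\star\}$, which has $n+1$ elements. On this ground set I define
\[
g_i(T) := \begin{cases} f_i(T) & \text{if } \star \notin T, \\ f_i([n+1] \setminus T) & \text{if } \star \in T, \end{cases}
\]
where $[n+1]\setminus T \subseteq [n]$ is read as a subset of $[n]$. By construction $g_i(T) = g_i([n+1]\setminus T)$, so each $g_i$ is symmetric, and $g_i(S) = f_i(S)$ for every $S \subseteq [n]$.

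The first step is to verify that $g_i$ is submodular. This is the step I expect to be the main obstacle, and it is where monotonicity of $f_i$ is needed. I would check the inequality $g_i(S)+g_i(T) \ge g_i(S\cap T)+g_i(S\cup T)$ by splitting into cases according to membership of $\star$.
\begin{itemize}
\item If neither set contains $\star$, the inequality is just submodularity of $f_i$.
\item If both contain $\star$, complementing both sets reduces it again to submodularity of $f_i$.
\end{itemize}
The mixed case is the real one. Take $\star \in T$ and $\star \notin S$, and write $T' = [n] \setminus (T\setminus\{\star\})$. We need
\[
f_i(S) + f_i(T') \ge f_i(S \setminus T') + f_i(T' \setminus S).
\]
This follows from monotonicity alone: $f_i(S) \ge f_i(S\setminus T')$ and $f_i(T') \ge f_i(T'\setminus S)$. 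If this direct verification runs into trouble, the fallback is to use the equivalent diminishing-returns characterization. Alternatively, I would cite the reduction lemma of \cite{khanna2024almost} directly.

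With symmetry and submodularity in hand, I apply \cref{thm:symmetricSparsification} to $g_1,\dots,g_m$ on $n+1$ elements. This yields weights $\lambda_i$ with $O((n+1)/\eps^2) = O(n/\eps^2)$ nonzero entries that $(1\pm\eps)$-sparsify $\sum_i g_i$ on all subsets of $[n+1]$.

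Finally, I restrict to sets $S \subseteq [n]$. There $g_i(S) = f_i(S)$, so the same weights $(1\pm\eps)$-sparsify $\sum_i f_i$, which concludes the corollary.
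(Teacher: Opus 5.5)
Your proposal is correct and follows the same route as the paper. Both adjoin an auxiliary element $\star$, pass to the symmetric extension with $g_i(S)=f_i(S)$ for $S\subseteq[n]$ and $g_i(T)=f_i([n+1]\setminus T)$ for $\star\in T$, and apply \cref{thm:symmetricSparsification} on $n+1$ elements. The only difference is that the paper cites Theorem 1.5 of \cite{khanna2024almost} for symmetry and submodularity of this extension, whereas you verify it directly; your case analysis is correct, including the mixed case, which needs only monotonicity.
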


\begin{proof}
    This follows from Theorem 1.5 of \cite{khanna2024almost}, which states that for any collection of monotone submodular functions $f_1, \dots , f_m: 2^{[n]} \rightarrow \R_{\geq 0}$, one can define a collection of symmetric submodular functions $\tilde{f_1}, \dots , \tilde{f_m}: 2^{[n] \cup \{ \star\}}$ where for every $i \in [m]$ and $S \subseteq [n]$:
    \[
    f_i(S)= \tilde{f_i}(S) = \tilde{f_i}([n] \cup \{ \star\} - S).
    \]
    Sparsifying this collection of $\tilde{f_1}, \dots , \tilde{f_m}$ then immediately implies sparsification of $f_1, \dots , f_m$.
\end{proof}

\bibliographystyle{alphaurl}
\bibliography{convex}

\appendix

\end{document}